 \newcommand{\tmpx}{}
\definecolor{MyGreen}{RGB}{0,153,0}
 \titleformat{\paragraph}[runin]
{\itshape}{\theparagraph .}{1em}{}
\newcommand{\stripthanks}[1]{%
  \begingroup
  \let\thanks\@gobble
  #1%
  \endgroup
}
\newcounter{author}
\newcommand{\authorclean}[1]{%
  \csgdef{author@clean@\the\c@author}{#1}%
}
\renewcommand*\author[1]{%
  \stepcounter{author}%
  \csgdef{author@full@\the\c@author}{#1}%
  \ifcsundef{author@clean@\the\c@author}{%
    \def\temp{#1}%
    \edef\temp{\unexpanded\expandafter{\temp}}%
    \csgdef{author@clean@\the\c@author}{\temp}%
  }{}%
  \ifnum\c@author=1
    \gdef\@author{#1}%
  \else
    \xdef\@author{\unexpanded\expandafter{\@author\and#1}}%
  \fi
}
\newcommand*\email[1]{%
  \csgdef{email@\the\c@author}{#1}}
\newcommand*\orcid[1]{%
  \csgdef{orcid@\the\c@author}{#1}}
\newcommand*\address[1]{%
  \csgdef{address@\the\c@author}{#1}}
  \xdef\author@count{\the\c@author}%
\newcommand*\print@authors{%
  \ifnum\c@author>\author@count
  \else
    \print@author{\the\c@author}%
    \advance\c@author by 1
    \expandafter\print@authors
  \fi}
\newcommand*\print@author[1]{%
  \par\medskip
  \begin{tabular}{@{}l@{}}%
    \textsc{\csuse{author@clean@#1}}\\
    \csuse{address@#1}\\
    \textit{E-Mail}: \href{mailto:\csuse{email@#1}}{\csuse{email@#1}}\\
    \textit{ORCiD}:
    \href{\csuse{orcid@#1}}{\csuse{orcid@#1}}
  \end{tabular}}
\numberwithin{equation}{section}
\numberwithin{equation}{section}
\theoremstyle{theorem}
\newtheorem{definition}{Definition}
\numberwithin{definition}{section}
\theoremstyle{theorem}
\newtheorem{theorem}{Theorem}
\numberwithin{theorem}{section}
\theoremstyle{plain}
\newtheorem{proposition}{Proposition}
\numberwithin{proposition}{section}
\theoremstyle{plain}
\numberwithin{corollary}{section}
\newtheorem{lemma}{Lemma}
\numberwithin{lemma}{section}
\theoremstyle{plain}
\newtheorem{remark}{Remark}
\numberwithin{remark}{section}
\newtheorem{example}{Example}
\numberwithin{example}{section}
\providecommand{\keywords}[1]
{
  \small	
  \textbf{\textit{Keywords---}} #1
}
\date{\footnotesize{
$^1$Departamento de Matemática y Estadística, Universidad Torcuato Di Tella, Ciudad de Buenos Aires, Argentina\\
    $^2$Institute for Financial and Actuarial Mathematics, Department of Mathematical Sciences, University of Liverpool, Liverpool, United Kingdom\\
    $^3$Institute of Statistics, Biostatistics and Actuarial Science (ISBA), Louvain Institute of Data Analysis and Modeling (LIDAM), Catholic University of Louvain, Louvain-la-Neuve, Belgium%
}}
\author{Pablo Azcue\thanks{Pablo Azcue contributed to the present manuscript before he unfortunately passed away in April 2024. We dedicate this work to him.} $^1$}
\address{\textit{Departamento de Matemática y Estadística} \\ \textit{Universidad Torcuato Di Tella} \\
\textit{Av. Figueroa Alcorta 7350 (1428BCW)} \\ \textit{Ciudad de Buenos Aires, Argentina}}
\email{pazcue@utdt.edu}
\author{Corina Constantinescu$^2$}
\address{\textit{Institute for Financial and Actuarial Mathematics} \\ \textit{Department of Mathematical Sciences} \\ \textit{University of Liverpool} \\ \textit{Liverpool, United Kingdom}}
\email{c.constantinescu@liverpool.ac.uk}
\author{Jos\'e Miguel Flores-Contró$^3$}
\address{\textit{Institute of Statistics, Biostatistics and Actuarial Science – ISBA} \\ \textit{Louvain Institute of Data Analysis and Modeling – LIDAM} \\
\textit{Catholic University of Louvain} \\ \textit{Louvain-la-Neuve, Belgium}}
\email{jose.flores@uclouvain.be}
\author{Nora Muler$^1$}
\address{\textit{Departamento de Matemática y Estadística} \\ \textit{Universidad Torcuato Di Tella} \\
\textit{Av. Figueroa Alcorta 7350 (1428BCW)} \\ \textit{Ciudad de Buenos Aires, Argentina}}
\email{nmuler@utdt.edu}
\title{Optimal Cash Transfers and Microinsurance to Reduce Social Protection Costs}
\begin{document}

\vspace{-2ex}

\maketitle

\begin{abstract}

Design and implementation of appropriate social protection strategies is one of the main targets of the United Nation’s Sustainable Development Goal (SDG) 1: No Poverty. Cash transfer (CT) programmes are considered one of the main social protection strategies and an instrument for achieving SDG 1. Targeting consists of establishing eligibility criteria for beneficiaries of CT programmes. In low-income countries, where resources are limited, proper targeting of CTs is essential for an efficient use of resources. Given the growing importance of microinsurance as a complementary tool to social protection strategies, this study examines its role as a supplement to CT programmes. In this article, we adopt the piecewise-deterministic Markov process introduced in \cite{Article:Kovacevic2011} to model the capital of a household, which when exposed to proportional capital losses (in contrast to the classical Cramér–Lundberg model) can push them into the poverty area. Striving for cost-effective CT programmes, we optimise the expected discounted cost of keeping the household\rq s capital above the poverty line by means of injection of capital (as a direct capital transfer). Using dynamic programming techniques, we derive the Hamilton–Jacobi–Bellman (HJB) equation associated with the optimal control problem of determining the amount of capital to inject over time. We show that this equation admits a viscosity solution that can be approximated numerically. Moreover, in certain special cases, we obtain closed-form expressions for the solution. Numerical examples show that there is an optimal level of injection above the poverty threshold, suggesting that efficient use of resources is achieved when CTs are preventive rather than reactive, since injecting capital into households when their capital levels are above the poverty line is less costly than to do so only when it falls below the threshold.
\vspace{0.5cm}

\keywords{Cash transfers; microinsurance; proportional losses; optimal control; HJB equations.}

\end{abstract}

\section{Introduction}  \label{Introduction-Section1}

Eradicating poverty in all its forms remains one of the most pressing and complex challenges in global development. In 2015, the international community adopted the 2030 Agenda for Sustainable Development, which outlined 17 Sustainable Development Goals (SDGs) aimed at fostering social equity, economic growth, and environmental resilience. At the core of this agenda is SDG 1: End poverty in all its forms everywhere, which includes targets such as eradicating extreme poverty, reducing overall poverty by at least 50\%, and implementing comprehensive social protection strategies \citep{Book:UnitedNations2015}. As the world faces growing economic volatility, climate-related shocks, and rising inequality, the need for resilient, cost-effective, and scalable poverty alleviation mechanisms has become more urgent than ever.

Social protection strategies are widely regarded as essential instruments for achieving these poverty reduction goals. Following \cite{Article:Slater2011}, they are typically categorised into three pillars: (i) social insurance, such as unemployment or health insurance, which relies on individual contributions; (ii) social assistance, which involves non-contributory subsidies and capital injections for vulnerable populations; and (iii) regulatory standards, which establish legal protections for workers and consumers \citep{Book:Harvey2005, Article:Dfid2006, Article:Farrington2006}. Among these, social assistance (particularly through direct support and targeted transfers) has proven especially effective for low-income households that lack access to formal insurance or financial markets.

This article focuses on social assistance and, in particular, on cash transfer (CT) programmes. In their simplest form, cash transfer programmes provide cash assistance to the poor and/or vulnerable (people living just above the poverty line and facing the risk, in absence of the transfer, of falling into the poverty area). Transfers can be made in small regular payments or in a lump-sum and are usually subsidised by the government \citep{Article:Tavor2002}. However, cash transfer programmes may also be funded by international organisations and non-governmental organisations (NGOs) \citep{Book:WorldBank2012}.  Cash transfer programmes are generally classified as conditional (CCTs) or unconditional (UCTs). The main difference between these is that the former requires beneficiaries to meet certain obligations, such as enrolling children in school or attending regular medical check-ups \citep{Article:Handa2006}, in order to receive transfers, while the latter provides transfers without any additional requirements beyond eligibility.

Our work builds on and extends the piecewise-deterministic Markov process introduced by \cite{Article:Kovacevic2011} to examine the cost-effectiveness of CTs. In its original form, the process portrays the dynamics of a household\rq s capital across time. Within this framework, economic growth arises exclusively from savings, as only the portion of income not consumed can be transformed into productive capital. Capital is interpreted broadly, encompassing not only physical assets but also health, skills, and productive abilities—that is, all elements shaping a household’s capacity to generate income. Households below a critical income level are unable to generate any surplus, implying zero savings and no accumulation of this comprehensive capital stock. Since income is proportional to capital, the absence of investment leads to stagnation. In contrast, once income exceeds subsistence, part of the surplus can be saved and reinvested, strengthening both material and human capital and initiating a cumulative process of growth. Although stylised, this mechanism explains that growth requires an investable surplus, and in contexts of extreme poverty and limited access to external finance, domestic savings are the primary source of such surplus.  While the model assumes that households derive income solely from capital, this should be interpreted as a reduced-form representation of environments where productive assets (e.g., land, livestock, or small businesses) are the primary source of income and are exposed to stochastic shocks. This framework is particularly relevant in the context of climate risk, disaster relief, and social protection policies in developing economies. This model has attracted attention since its publication, with researchers typically approaching it from a ruin-theoretic perspective. \cite{Article:Kovacevic2011} and \cite{Article:Azais2015} use numerical methods to estimate the (infinite-time) trapping probability (the probability of a household\rq s capital falling into the area of poverty at some point in time). Considering that the most desirable outcome in the ruin-theoretic context is to determine closed-form expressions for ruin probabilities \citep{Book:Asmussen2010}, \cite{Article:Henshaw2023b} and \cite{Article:Flores-Contro2025} derive, under certain assumptions, closed-form expressions for the trapping probability and the Gerber-Shiu expected discounted penalty function of the process, respectively. 

Alternative versions to the original formulation of the model have also been studied recently. \cite{Article:Flores-Contro2024b} consider losses that are subtracted from the household\rq s capital, rather than prorated as in the original model. Under this variation of the model, the authors evaluate the impact of (subsidised) microinsurance on the trapping probability. Similarly, \cite{Article:Flores-Contro2024a} consider a risk process that incorporates ideas from the Omega model, originally introduced in \cite{Article:Albrecher2011}. Given this novel model structure, the authors gauge the effects of direct capital CTs on the trapping probability and the probability of households becoming extremely poor.

In contrast with prior studies, this article examines the model through the lens of stochastic optimal control theory. It assesses the expected discounted cost borne by the government to ensure households remain above the poverty threshold (i.e., the expected discounted capital injections or direct capital transfers). Using dynamic programming techniques \citep{Article:Bellman1954}, we derive the Hamilton-Jacobi-Bellman (HJB) equation corresponding to this control problem. When a classical solution satisfying the appropriate boundary conditions exists then it is the optimal value function for the problem. However, such solutions are not always guaranteed, prompting the consideration of weaker solution concepts (namely, \textit{viscosity solutions}, as introduced by \cite{Article:Crandall1983}). Viscosity solutions have since become a standard tool in control optimisation problems; see, for example, \cite{InCo:Soner1988}, \cite{Book:Bardi1997}, and \cite{Book:Fleming2006}. For a comprehensive overview of stochastic control theory in insurance, refer to \cite{Book:Schmidli2007} and \cite{Book:Azcue2014}. In certain specific settings, closed-form solutions are attainable. It is important to emphasise that the present study is purely theoretical in nature. It does not rely on any empirical data; instead, it develops and analyses a conceptual model to explore CT programmes.

Two of the main concerns during the formulation of cash transfer programmes are: the identification of individuals or groups that will be eligible to benefit from the programme (\textit{targeting}) and the affordability of the programme (\textit{sustainability}). Indeed, the efficient allocation of poverty resources to those most in need is the main concern in poverty reduction programmes (including cash transfers programmes) and is an issue that has been at the forefront of policy debates over the last decades \citep{Article:Keen1992}. That is, targeting is seen as crucial for efficient resource allocation and takes on much greater relevance in low-income countries where resources for social protection are limited \citep{Book:Slater2010}. Our results support this idea, suggesting that the cost incurred by the government can be reduced when transfers are preventive rather than reactive, since injecting capital into households when their capital levels are above the poverty threshold is less costly than doing so only when they fall below the threshold (we refer to these strategies as \textit{threshold strategies}). In other words, our results suggest that optimal strategies are precisely threshold strategies. This is of utmost importance, as reducing the expected discounted capital transfers increases viability and sustainability of programmes, two issues that have also been a cause for concern recently \citep{Article:Owusu-Addo2023}. 

Microinsurance\footnote{The broader term \textit{inclusive insurance} was introduced by the International Association of Insurance Supervisors (IAIS) in 2015. Inclusive insurance incorporates excluded or underserved individuals (e.g. women) into its definition \citep{Misc:IAIS2015} --- not only low-income individuals. In contemporary usage, the term inclusive insurance is more prevalent and viewed as more precise.} is insurance aimed at low-income individuals. In other words, low-income individuals pay a premium proportional to the probability of a certain risk occurring in exchange for protection against it. The primary target market for microinsurance comprises people living in poverty or vulnerable individuals living just above the poverty line. Microinsurance products must be tailored to meet the characteristics of these individuals (e.g. microinsurance products need to be easy to understand, feature affordable premiums and address the specific vulnerabilities to which this population is often exposed). Microinsurance benefits are usually most effective when combined with the benefits of other complementary social protection instruments, such as social insurance or CTs \citep{Book:Churchill2012}. For example, \cite{Book:Churchill2006} illustrates how, as part of the reform of the healthcare system in Colombia in 1993, the government offers subsidies that enable the poor to be purchasers of health insurance. This expansion of social protection through microinsurance in Colombia also stimulates competition among microinsurance providers to serve the low-income market. Given the importance of microinsurance in recent years as a complementary tool to social protection strategies, we also analyse its role as a complement to CT programmes in Section \ref{Microinsurance-Section7}. Within the framework of this consolidation, we observe that microinsurance can contribute to reducing the expected discounted capital injections. Our results are consistent with previous findings, which highlight that microinsurance is a very helpful approach to social protection in very different settings (see Chapter 2 of \cite{Book:Churchill2012} for an overview of the potential of microinsurance for social protection).

The remainder of the paper is structured as follows. In Section \ref{TheStochasticControlProblem-Section2}, we introduce the stochastic control problem. In particular, Section \ref{TheStochasticControlProblem-Section2-Subsection21} presents a detailed description of the model studied throughout this manuscript, which corresponds to the original formulation from \cite{Article:Kovacevic2011}. Section \ref{TheStochasticControlProblem-Section2-Subsection22} defines the cost of social protection, compares it with the expected discounted cost to the government resulting from CTs providing perpetual regular transfers instead of lump-sum capital injections, and defines the optimisation problem with its corresponding value function. The Hamilton-Jacobi-Bellman (HJB) equation associated with the control problem of determining the optimal amount of transfer to inject over time and the properties of the optimal value function are presented in Section \ref{AnalysisofV:PropertiesandtheHamilton-Jacobi-BellmanEquation-Section3}. Threshold strategies are introduced in Section \ref{ThresholdStrategies-Section4}. Building upon this framework, Section \ref{Closed-FormSolutionsinaSpecialCase-Section5} presents a case in which the value functions of threshold strategies admit closed-form expressions. Conversely, Section \ref{GeneralCaseAnalysis:AbsenceofClosed-FormSolutions-Section6} addresses scenarios where analytical solutions are unavailable; in these cases, we employ a Monte Carlo-based approach to numerically estimate the corresponding value function. Furthermore, in the examples presented in Sections \ref{Closed-FormSolutionsinaSpecialCase-Section5} and \ref{GeneralCaseAnalysis:AbsenceofClosed-FormSolutions-Section6}, we determine the optimal threshold (defined as the threshold that minimises the cost borne by the government) and its associated value function, which is found to be the optimal value function among all admissible strategies for these examples. Section \ref{Microinsurance-Section7} incorporates microinsurance as a complementary instrument for social protection and analyses its role. For the examples involving microinsurance, we also identify the optimal threshold and evaluate the corresponding value function, which we conclude to be optimal among all admissible strategies. Lastly, concluding remarks are provided in Section \ref{Conclusion-Section8}.

\section{The Stochastic Control Problem} \label{TheStochasticControlProblem-Section2}

\subsection{Description of the Model} \label{TheStochasticControlProblem-Section2-Subsection21}

We assume that, in the absence of lump-sum capital transfers, a household's capital evolves according to the dynamics described by \cite{Article:Kovacevic2011}. At each time $t$, we assume that the income $I_{t}$ of an individual household is split into consumption $C_{t}$ and investment (or savings) $A_{t}$\ as $I_{t}=C_{t}+A_{t}$. The consumption as a function of the income is given by

\vspace{0.3cm}

\begin{align}
C_{t}=\left\{
\begin{array}
[c]{ll}
I_{{\small t}} & \textit{if}~I_{{\small t}}\leq I^{\ast},\\
I^{\ast}+a(I_{t}-I^{\ast}) & \textit{if}~I_{t}>I^{\ast},
\end{array}
\right.
\label{TheStochasticControlProblem-Section2-Equation1}
\end{align}

\vspace{0.3cm}

for some rate of consumption $a\in(0,1)$ and critical income $I^{\ast}$. So,

\vspace{0.3cm}

\begin{align}
A_{t}=\left\{
\begin{array}
[c]{ll}
0 & \textit{if}~I_{{\small t}}\leq I^{\ast},\\
(I_{t}-I^{\ast})(1-a) & \textit{if}~I_{t}>I^{\ast}.
\end{array}
\right.
\label{TheStochasticControlProblem-Section2-Equation2}
\end{align}

\vspace{0.3cm}

The household\rq s capital process $X_{t}$\ grows as $dX_{t}/c=A_{t}dt$\ with some positive constant $c,$ and income is generated through $I_{t}=bX_{t}$ with income generation $b>0$. So putting $r=(1-a)\cdot b \cdot c$ and $x^{\ast}=I^{\ast}/b$ we get the dynamic system $dX_{t}=r[X_{t}-x^{\ast}]^{+}dt$, with $[x]^{+} = \max(x,0)$. The \textit{critical capital} (or poverty line) $x^{\ast}$ is the upper bound of the poverty area: if the initial capital is above $x^{\ast}$ the capital (and consumption and investment) grows exponentially with rate $r$, whereas if the initial capital is below $x^{\ast}$, the capital remains constant and all the income (below the critical income $I^{\ast}$) is consumed.

We further suppose that the capital $X_{t}$ is subject to loss events (e.g. flood, hurricanes and earthquakes). The occurrence of these events follows a Poisson process $\left(N_{t}\right)_{t\geq0}$ with intensity $\lambda$, with the remaining proportion of capital in the $i$th event described by a sequence of i.i.d. random variables $\left(Z_{i}\right)_{i\geq 1}$ with distribution function $G_{Z}$ supported in $(0,1)$ and mean $\mu=\mathbb{E}\left[Z_{i}\right]$, independent to the Poisson process. If $x$ is the initial capital and $(\tau_{i},Z_{i})$ are the time and the remaining proportion of capital at the $i$th event, respectively, the capital process is given by

\vspace{0.3cm}

\begin{align}
\left\{
\begin{array}
[c]{lll}
dX_{t} & = & r \left[X_{t}-x^{\ast}\right]^{+}dt \hspace{0.3cm} \textit{for }\ t\in(\tau_{i},\tau_{i-1}),\\
X_{{\small \tau}_{i}} & = & X_{{\small \tau}_{i}^{-}} \cdot Z_{i}.
\end{array}
\right.
\label{TheStochasticControlProblem-Section2-Equation3}
\end{align}

\vspace{0.3cm}

That is, on the one hand, in between loss events, the household's capital process is given by 

\vspace{0.3cm}

\begin{align}
X_{t}=
\begin{cases}
\left(  X_{\tau_{i-1}}-x^{\ast}\right)  e^{r\left(  t-\tau_{i-1}\right)}+x^{\ast} & \textit{if }X_{\tau_{i-1}}>x^{\ast},\\
X_{\tau_{i-1}} & \textit{otherwise},
\end{cases}
\label{TheStochasticControlProblem-Section2-Equation4}
\end{align}

\vspace{0.3cm}

for $\tau_{i-1}\leq t<\tau_{i}$ and $\tau_{0}=0$. On the other hand, at the jump times $t=\tau_{i}$, the process is given by

\vspace{0.3cm}

\begin{align}
X_{\tau_{i}}=
\begin{cases}
\left[  \left(  X_{\tau_{i-1}}-x^{\ast}\right)  e^{r\left(  \tau_{i}-\tau_{i-1}\right)  }+x^{\ast}\right]  \cdot Z_{i} & \textit{if } X_{\tau_{i-1}}>x^{\ast},\\
X_{\tau_{i-1}} \cdot Z_{i} & \textit{otherwise}.
\end{cases}
\label{TheStochasticControlProblem-Section2-Equation5}
\end{align}

\vspace{0.3cm}


The model highlights the existence of a subsistence-induced nonlinearity in capital accumulation, which generates a poverty trap preventing low-income households from engaging in productive investment. Because all income below a critical income $I^{*}$ is devoted to basic consumption, savings, and thus growth, are only feasible above this level. As a result, initial conditions play a decisive role in determining long-run outcomes, leading to persistent inequality and divergence across households. From a policy perspective, the framework underscores the importance of threshold-crossing interventions, such as cash transfers, improved access to credit, and investments in health and institutional quality, which can shift households from stagnation to self-sustaining growth paths.

More precisely, the sample set is given by

\begin{align}
\Omega=\{(\tau_{i},Z_{i})_{i\geq1}\in\lbrack0,\infty)\times(0,1):\tau_{n}<\tau_{n+1}~\text{and }\lim_{n\rightarrow\infty}\tau_{n}=\infty\}.
\label{TheStochasticControlProblem-Section2-Equation6}
\end{align}

Here, $\mathcal{F}$ is the complete $\sigma$-field generated by the random variables $\tau_{i}:\Omega^{\mathbf{\lambda}}\rightarrow\lbrack0,\infty)$ and $Z_{i}:\Omega^{\lambda}\rightarrow(0,1)$; the filtration $\left(\mathcal{F}_{t}\right)  _{t\geq0},$ where $\mathcal{F}_{t}$ is the complete $\sigma$-field generated by the random variables $\tau_{i}:\Omega^{\mathbf{\lambda}}\rightarrow\lbrack0,\infty)$ and $Z_{i}:\Omega^{\lambda}\rightarrow(0,1)$ for $\tau_{i}\leq t$; and $\mathbb{P}$~is the probability measure defined in $\mathcal{F}$ which satisfies:

\begin{enumerate}

\item $(Z_{i})_{i\geq1}$ is a sequence of i.i.d. random variables with $\mathbb{P}(Z_{i}\leq z)=G_{Z}(z)$;

\item The counting process\textit{\ }$N_{t}:\Omega^{\mathbf{\lambda}}\rightarrow\mathbb{N}_{0}$ defined by$~N_{t}=\#\{n:\tau_{n}\leq t\}~$is a Poisson process of intensity $\lambda$ and;

\item The random variables $Z_{i}$ are independent of the counting process $N_{t}$.

\end{enumerate}

Given a discount rate $\delta>0$ and a continuously function $w:[0,\infty)\rightarrow\mathbb{R}$ differentiable in $[x^{\ast},\infty)$, the discounted infinitesimal generator of the Markov process $X_{t}$ with initial capital $x\geq x^{\ast}$ is given by

\vspace{0.3cm}

\begin{align}
\begin{array}
[c]{lll}
\widetilde{\mathcal{G}}\left(\left(e^{-\delta t}X_{t}\right)_{t\geq0},w\right)   & = & \lim\limits_{t\rightarrow0^{+}}\frac{\mathbb{E}_{x}\left[e^{-\delta t}w(X_{t})\right]-w(x)}{t}\\ \\
& = & r(x-x^{\ast})w^{\prime}(x)-(\lambda+\delta)w(x)+\lambda\int_{0}^{1}w(x \cdot z)dG_{Z}(z).
\end{array}
\label{TheStochasticControlProblem-Section2-Equation7}
\end{align}

\vspace{0.3cm}


\subsection{Lump-Sum Capital Injections vs Perpetual Regular Transfers} \label{TheStochasticControlProblem-Section2-Subsection22}

We consider the case in which the government decides to make lump-sum capital injections if, after a loss event, capital falls below the critical capital level $x^{\ast}$. The capital injection is such that it returns the household\rq s capital to the critical capital value $x^{\ast}$. For an initial capital $x\geq0$ and a positive discount rate $\delta$, we define the \textit{cost of social protection} and denote it as  $C(x)$, as the expected discounted cumulative injections necessary to avoid household\rq s capital from falling below the critical capital $x^{\ast}$. Note that, if $\tau_{1}$ and $Z_{1}$ denote the time and remaining proportion of the capital in the first loss event, respectively, it yields

\vspace{0.3cm}

\begin{align}
\begin{array}
[c]{lll}
C\left(x^{\ast}\right)  & = & \mathbb{E}\left[  \left(\left(1-Z_{1}\right)  x^{\ast}+C\left(x^{\ast}\right)  \right)  e^{-\delta\tau_{1}}\right] \\ \\
& = & \left((1-\mu)x^{\ast}+C\left(x^{\ast}\right)\right)\frac{\lambda}{\lambda+\delta},
\end{array}
\label{TheStochasticControlProblem-Section2-Equation8}
\end{align}

\vspace{0.3cm}

which leads to

\vspace{0.3cm}

\begin{align}
C\left(x^{\ast}\right)  =\frac{\lambda(1-\mu)x^{\ast}}{\delta}.
\label{TheStochasticControlProblem-Section2-Equation9}
\end{align}

\vspace{0.3cm}

Also, for $x\in\lbrack0,x^{\ast})$, we have

\vspace{0.3cm}

\begin{align}
C\left(x\right)=(x^{\ast}-x)+C\left(  x^{\ast}\right)=(x^{\ast}-x)+\frac{\lambda(1-\mu)x^{\ast}}{\delta},
\label{TheStochasticControlProblem-Section2-Equation10}
\end{align}

\vspace{0.3cm}

and for $x>x^{\ast}$, consider the first time in which the capital is less or equal to the poverty line $x^{\ast}$, that is,

\vspace{0.3cm}

\begin{align}
\overline{\tau}=\min\{\tau_{i}:X_{\tau_{i}}\leq x^{\ast}\},
\label{TheStochasticControlProblem-Section2-Equation11}
\end{align}

\vspace{0.3cm}

which yields,

\vspace{0.3cm}

\begin{align}
C\left(x\right)  =\mathbb{E}\left[\left((x^{\ast}-X_{\overline{\tau}})+C\left(x^{\ast}\right)  \right)e^{-\delta\overline{\tau}}\right]=\mathbb{E}\left[\left((x^{\ast}-X_{\overline{\tau}})+\frac{\lambda(1-\mu)x^{\ast}}{\delta}\right)e^{-\delta\overline{\tau}}\right].
\label{TheStochasticControlProblem-Section2-Equation12}
\end{align}

\vspace{0.3cm}

Now let us consider the income $I_{t}=bX_{t}$ generated by the capital process $X_{t}$ with income generation $b > 0$, and let us suppose that the government provides transfers at a rate $I^{\ast}-I_{t}$ when $X_{t}\leq x^{\ast}$. Thus, transfers are immediately consumed and are perpetually paid by the government after the \textit{trapping time} $\overline{\tau}$. We call this social protection strategy \textit{perpetual regular transfers} and denote $D(x)$ as the expected value of the discounted perpetual regular transfers:

\vspace{0.2cm}

\begin{align}
D(x)=\mathbb{E}_{x}\left[\int_{\overline{\tau}}^{\infty}\left(I^{\ast}-I_{s}\right)e^{-\delta s}ds\right].
\label{TheStochasticControlProblem-Section2-Equation13}
\end{align}

\vspace{0.3cm}

\begin{remark}
It is important to note that the government makes transfers from the moment the household\rq s capital falls below the poverty threshold. Moreover, the capital of a household does not grow once it falls below that threshold.
\end{remark}

\vspace{0.3cm}

\begin{proposition} \label{TheStochasticControlProblem-Section2-Proposition1}

The expected value of the discounted perpetual regular transfers, $D(x)$, is given by

\vspace{0.3cm}

\begin{align}
D(x)=\left(\frac{bx^{\ast}}{\delta}\right)\left(\frac{\lambda(1-\mu)}{\delta+\lambda(1-\mu)}\right)+\frac{b}{\delta+\lambda(1-\mu)}(x^{\ast}-x),
\label{TheStochasticControlProblem-Section2-Equation14}
\end{align}

\vspace{0.3cm}

for $x\leq x^{\ast}$, and

\vspace{0.3cm}

\begin{align}
D(x)=\mathbb{E}\left[D(X_{\overline{\tau}})e^{-\delta\overline{\tau}}\right],
\label{TheStochasticControlProblem-Section2-Equation15}
\end{align}

\vspace{0.3cm}

otherwise.

\end{proposition}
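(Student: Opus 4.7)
The plan is to split the argument into the two cases that already appear in the statement, handling $x\le x^{\ast}$ by a direct computation and $x>x^{\ast}$ by the strong Markov property at the trapping time $\overline{\tau}$.

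For $x\le x^{\ast}$, we have $\overline{\tau}=0$, so that
\begin{align}
D(x)=\int_{0}^{\infty}\mathbb{E}_{x}\!\left[I^{\ast}-bX_{s}\right]e^{-\delta s}\,ds=\frac{bx^{\ast}}{\delta}-b\int_{0}^{\infty}\mathbb{E}_{x}[X_{s}]\,e^{-\delta s}\,ds,
\end{align}
after exchanging expectation and integral (which is legitimate since $0\le I^{\ast}-I_{s}\le I^{\ast}$ on the region below the poverty line and $\delta>0$). The key observation is that, by \eqref{TheStochasticControlProblem-Section2-Equation4}--\eqref{TheStochasticControlProblem-Section2-Equation5}, once the capital is $\le x^{\ast}$ it cannot grow: between jumps it stays constant, and at the $i$-th jump it is multiplied by $Z_{i}$. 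Hence $X_{s}=x\prod_{i=1}^{N_{s}}Z_{i}$, so, conditioning on $N_{s}$ and using $\mathbb{E}[Z_{i}]=\mu$ together with the independence of $(Z_{i})$ and $(N_{t})$, one obtains $\mathbb{E}_{x}[X_{s}]=x\,e^{-\lambda(1-\mu)s}$. Substituting this, evaluating the elementary integral $\int_{0}^{\infty}e^{-(\delta+\lambda(1-\mu))s}\,ds=1/(\delta+\lambda(1-\mu))$, and rearranging the resulting difference $\frac{bx^{\ast}}{\delta}-\frac{bx}{\delta+\lambda(1-\mu)}$ into the form stated in \eqref{TheStochasticControlProblem-Section2-Equation14} (by splitting off the $x^{\ast}-x$ term) finishes this case.

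For $x>x^{\ast}$, the idea is that before the trapping time $\overline{\tau}$ the household is still above the poverty line, so the integrand $I^{\ast}-I_{s}$ contributes only after $\overline{\tau}$; then from $\overline{\tau}$ onward the process $(X_{\overline{\tau}+t})_{t\ge 0}$ is, conditional on $\mathcal{F}_{\overline{\tau}}$, a copy of the capital process started at $X_{\overline{\tau}}\le x^{\ast}$. A change of variable $s=\overline{\tau}+u$ gives
\begin{align}
D(x)=\mathbb{E}_{x}\!\left[e^{-\delta\overline{\tau}}\int_{0}^{\infty}(I^{\ast}-I_{\overline{\tau}+u})e^{-\delta u}\,du\right],
\end{align}
and applying the strong Markov property at $\overline{\tau}$ (which is a stopping time, being the first event time of a sequence of stopping times) turns the inner conditional expectation into $D(X_{\overline{\tau}})$, yielding \eqref{TheStochasticControlProblem-Section2-Equation15}.

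The main technical step is the strong Markov step in the second case: one has to justify that $\overline{\tau}$ is indeed a stopping time of the filtration defined just before the proposition and that the shifted process inherits the same dynamics with initial capital $X_{\overline{\tau}}$. This follows from the piecewise-deterministic Markov structure of $(X_{t})$ and the i.i.d./independence assumptions on $(Z_{i})$ and $(N_{t})$, but it is the only non-routine ingredient; the rest reduces to the elementary computation of $\mathbb{E}_{x}[X_{s}]$ described above.
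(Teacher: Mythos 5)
Your proof is correct, but it reaches \eqref{TheStochasticControlProblem-Section2-Equation14} by a different computation than the paper. For $x\le x^{\ast}$ the paper decomposes the time axis into the inter-jump intervals $[\tau_{n-1},\tau_{n})$, on which the income is frozen at $b\,x Z_{1}\cdots Z_{n-1}$, factorises each term by independence, and sums the resulting series using $\mathbb{E}\bigl[e^{-\delta\tau_{1}}\bigr]^{n-1}$ and $\mathbb{E}[Z_{1}\cdots Z_{n-1}]=\mu^{n-1}$; you instead apply Fubini--Tonelli to the time integral and use the pointwise identity $\mathbb{E}_{x}[X_{s}]=x\,\mathbb{E}\bigl[\mu^{N_{s}}\bigr]=x\,e^{-\lambda(1-\mu)s}$, then evaluate a single elementary integral. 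Both hinge on the same structural fact (below $x^{\ast}$ the capital is the pure multiplicative jump process $x\prod_{i\le N_{s}}Z_{i}$), and both are rigorous: your route is shorter and avoids summing a series (the interchange is justified since $0\le I^{\ast}-I_{s}\le I^{\ast}$), while the paper's renewal-interval decomposition keeps the per-interval cost explicit and matches the style of the conditioning argument used for $C(x^{\ast})$ in \eqref{TheStochasticControlProblem-Section2-Equation8}. For $x>x^{\ast}$ you actually supply more than the paper does: the paper's proof only treats $x\le x^{\ast}$ and leaves \eqref{TheStochasticControlProblem-Section2-Equation15} as essentially definitional, whereas you spell out the change of variable $s=\overline{\tau}+u$ and the strong Markov property at the stopping time $\overline{\tau}$ (with the usual convention that the contribution vanishes on $\{\overline{\tau}=\infty\}$), which is a sound and welcome addition.
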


\begin{proof}

Let $Z_{0}=1$. If the initial capital is $x\leq x^{\ast}$, then the trapping time $\overline{\tau}=0$ and we have

\vspace{0.3cm}

\begin{align}
D(x) & = \mathbb{E}\left[\sum\limits_{n=1}^{\infty}\int_{\tau_{n-1}}^{\tau_{n}}(I^{\ast}-I_{\tau_{n-1}})e^{-\delta t}dt\right]  \\ \\
& = \sum\limits_{n=1}^{\infty}\mathbb{E}\left[b(x^{\ast}-x \cdot Z_{1}Z_{2}\cdots Z_{n-1})\right] \cdot \mathbb{E}\left[  \int_{\tau_{n-1}}^{\tau_{n}}e^{-\delta
t}dt\right]  \\ \\
& = \sum\limits_{n=1}^{\infty}b(x^{\ast}-x \cdot \mu^{n-1})\mathbb{E}\left[e^{-\delta\tau_{1}}\right]  ^{n-1}\left(\int_{0}^{\infty}\left(\frac{1-e^{-\delta
t}}{\delta}\right)\lambda e^{-\lambda t}dt\right)  \\ \\
& = b\left(\frac{~x^{\ast}}{\delta}-\frac{~x}{\delta+\lambda(1-\mu)}\right).
\label{TheStochasticControlProblem-Section2-Equation16}
\end{align}

\vspace{0.3cm}

In particular,

\vspace{0.3cm}

\begin{align}
D(x^{\ast})=b\left(\frac{~x^{\ast}}{\delta}-\frac{x^{\ast}}{\delta+\lambda(1-\mu)}\right)=\left(\frac{bx^{\ast}}{\delta}\right)\left(\frac{\lambda(1-\mu)}{\delta+\lambda(1-\mu)}\right),
\label{TheStochasticControlProblem-Section2-Equation17}
\end{align}

\vspace{0.3cm}

and

\vspace{0.3cm}

\begin{align}
\begin{array}
[c]{lll}
D(x) & = & b\left(\frac{~x^{\ast}}{\delta}-\frac{~~x^{\ast}}{\delta+\lambda(1-\mu)}\right)+b\left(\frac{~~x^{\ast}}{\delta+\lambda(1-\mu)}-\frac{~x}{\delta+\lambda(1-\mu)}\right)\\ \\
& = & D(x^{\ast})+\left(\frac{b}{\delta+\lambda(1-\mu)}\right)(x^{\ast}-x),
\end{array}
\label{TheStochasticControlProblem-Section2-Equation18}
\end{align}

for $x\leq x^{\ast}$.
\end{proof}

Proposition \ref{TheStochasticControlProblem-Section2-Proposition2} compares the strategy of lump-sum capital transfers up to the poverty line $x^{\ast}$ whenever is necessary with perpetual regular transfers.

\vspace{0.3cm}

\begin{proposition} \label{TheStochasticControlProblem-Section2-Proposition2}

We have that, 

\vspace{0.3cm}

\begin{center}
$D(x^{\ast})-C\left(x^{\ast}\right)\geq0$ if, and only if, $b\geq\delta+\lambda(1-\mu).$
\end{center}

\vspace{0.3cm}

\end{proposition}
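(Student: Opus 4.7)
The proof is essentially a direct algebraic comparison of the two closed-form expressions already derived earlier in the section, so the plan is short and computational rather than conceptual.

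First, I would recall the two values that have already been established: from equation (2.9), $C(x^{\ast}) = \lambda(1-\mu)x^{\ast}/\delta$, and from equation (2.17), $D(x^{\ast}) = \bigl(bx^{\ast}/\delta\bigr)\bigl(\lambda(1-\mu)/(\delta+\lambda(1-\mu))\bigr)$. Both are available without any further work since Proposition \ref{TheStochasticControlProblem-Section2-Proposition1} has just been proved and $C(x^{\ast})$ was derived immediately before it.

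Next, I would factor the difference. Writing
\begin{align}
D(x^{\ast}) - C(x^{\ast}) = \frac{\lambda(1-\mu)x^{\ast}}{\delta}\left(\frac{b}{\delta+\lambda(1-\mu)} - 1\right) = \frac{\lambda(1-\mu)x^{\ast}}{\delta}\cdot\frac{b - \delta - \lambda(1-\mu)}{\delta+\lambda(1-\mu)},
\end{align}
the problem reduces to inspecting the sign of this product.

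Finally, I would observe that the leading factor $\lambda(1-\mu)x^{\ast}/\delta$ is strictly positive under the standing assumptions ($\lambda,\delta,x^{\ast}>0$ and $\mu = \mathbb{E}[Z_1] \in (0,1)$ since $Z_i$ is supported in $(0,1)$), and the denominator $\delta + \lambda(1-\mu)$ is also strictly positive. Hence the sign of $D(x^{\ast}) - C(x^{\ast})$ coincides exactly with the sign of $b - \delta - \lambda(1-\mu)$, which gives the claimed equivalence. There is no real obstacle here: the only thing worth flagging is that the equivalence is in fact an equivalence of strict signs as well (equality holds iff $b = \delta + \lambda(1-\mu)$), which comes for free from the same factorisation.
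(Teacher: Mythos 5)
Your proposal is correct and follows essentially the same route as the paper: the paper's proof also reduces the claim to the sign of the factorisation $\left(\frac{b}{\delta+\lambda(1-\mu)}-1\right)\left(\frac{x^{\ast}\lambda(1-\mu)}{\delta}+(x^{\ast}-x)\right)$, which at $x=x^{\ast}$ is exactly your computation. The only difference is that the paper carries out the algebra for all $x\leq x^{\ast}$ and then extends to $x>x^{\ast}$ via conditioning at the trapping time, whereas you specialise directly to $x^{\ast}$, which is all the stated proposition requires.
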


\begin{proof}

In the case $x\leq$ $x^{\ast}$, from Proposition \ref{TheStochasticControlProblem-Section2-Proposition1} and \eqref{TheStochasticControlProblem-Section2-Equation10} yields

\vspace{0.3cm}

\begin{align}
\begin{array}
[c]{lll}
D(x)-C\left(  x\right)  & = & b\left(\frac{~x^{\ast}}{\delta}-\frac{~x}{\delta+\lambda(1-\mu)}\right)-\left(  (x^{\ast}-x)+C(x^{\ast})\right) \\ \\
& = & \left(  \frac{b}{\delta+\lambda(1-\mu)}-1\right)  \left(\frac{x^{\ast}\lambda(1-\mu)}{\delta}+(x^{\ast}-x)\right).
\end{array}
\label{TheStochasticControlProblem-Section2-Equation19}
\end{align}

\vspace{0.3cm}

Then, $D(x)-C\left(  x\right)  \geq0,\ $if and only if, $b\geq\delta+\lambda(1-\mu)$. On the other hand, when $x>$ $x^{\ast}$, from Proposition \ref{TheStochasticControlProblem-Section2-Proposition1} and \eqref{TheStochasticControlProblem-Section2-Equation12} we have

\vspace{0.3cm}

\begin{align}
D(x)-C\left(x\right)  =\mathbb{E}\left[\left(D(X_{\overline{\tau}})-C(X_{\overline{\tau}})\right)e^{-\delta\overline{\tau}}\right].
\label{TheStochasticControlProblem-Section2-Equation20}
\end{align}

\vspace{0.3cm}

Hence, from \eqref{TheStochasticControlProblem-Section2-Equation19} we have the result.
\end{proof}

\vspace{-1cm}

\begin{figure}[H]
	\begin{subfigure}[b]{0.5\linewidth}
  		\includegraphics[width=7.5cm, height=7.5cm]{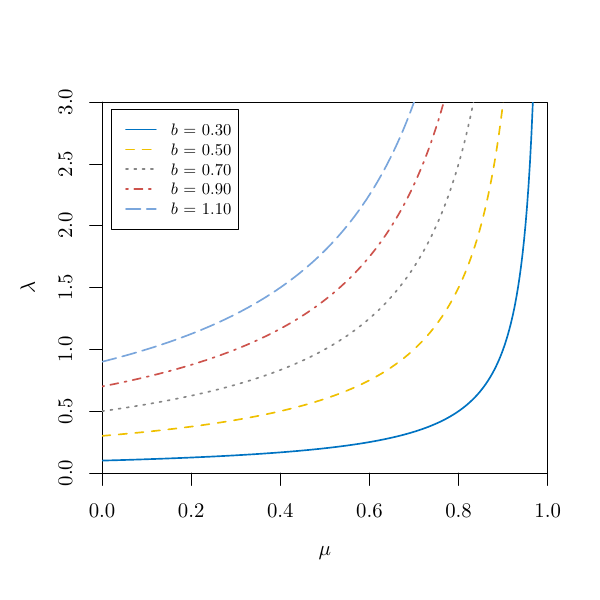}
		\caption{}
  		\label{TheStochasticControlProblem-Section2-Figure2-a}
	\end{subfigure}
	\begin{subfigure}[b]{0.5\linewidth}
  		\includegraphics[width=7.5cm, height=7.5cm]{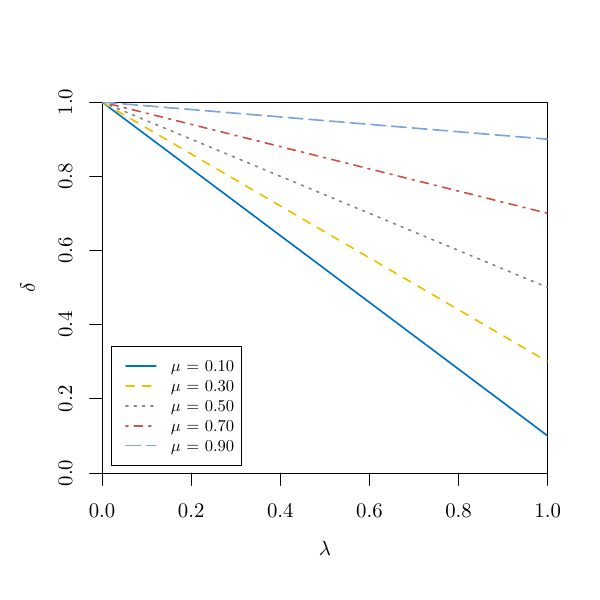}
		\caption{}
  		\label{TheStochasticControlProblem-Section2-Figure2-b}
	\end{subfigure}
	\caption{(a) Upper boundary of the region defined by the constraint $b \geq \delta + \lambda \left(1 - \mu\right)$ derived in Proposition \ref{TheStochasticControlProblem-Section2-Proposition2} with (a) fixed $\delta = 0.2$ and different values of $b$ and (b) fixed $b=1$ and different values of $\mu$.}
	\label{TheStochasticControlProblem-Section2-Figure2}
\end{figure}

Figure \ref{TheStochasticControlProblem-Section2-Figure2} shows the boundary curve defined by the constraint $b \geq \delta + \lambda \left(1 - \mu\right)$, where lump-sum capital injections are more cost-efficient than perpetual regular transfers. In particular, Figure \ref{TheStochasticControlProblem-Section2-Figure2-a} displays the feasible region for $\lambda$ for different values of the income generation rate $b>0$, which is below or on the curve. Similarly, Figure \ref{TheStochasticControlProblem-Section2-Figure2-b} provides the possible region for $\delta$, for different values of the expected remaining proportion of capital $0<\mu<1$, which is all values below or on this boundary. In particular, Figure \ref{TheStochasticControlProblem-Section2-Figure2-a} shows that, the feasible region for $\lambda$ is smaller as the expected losses are more severe (lower $\mu$). That is, there is a risk trade-off between the frequency and severity of the losses. In other words, for $C(x) \leq D(x)$ to hold, we see that as each individual loss is expected to be less severe (higher $\mu$), then the household can tolerate more frequent losses (higher $\lambda$). Moreover, we observe that a household with low income generation rate $b>0$ must experience very low frequency when losses are more severe. On this basis, we can conclude that for households experiencing frequent and severe losses, providing regular, ongoing transfers is a more cost-effective strategy for the government than offering capital injections, except in cases where households possess strong income-generating capacity. On the other hand, Figure \ref{TheStochasticControlProblem-Section2-Figure2-b} demonstrates that the size of the admissible region for $\delta$ is reduced when losses are more frequent (higher $\lambda$) and increased when the expected losses are less severe (higher $\mu$). Lump-sum transfers occur immediately at the loss event while perpetual regular transfers take place during the time the capital lies below the poverty threshold. That is, lump-sum capital injections are usually more costly today, because making the transfer earlier is more valuable than waiting. Indeed, the larger $\delta$, the more sharply future transfers are discounted and the discounted lump-sum transfers grow much higher relative to the discounted perpetual regular transfers. This is the reason why, for instance, we observe that the inequality $C(x) \leq D(x)$ only holds for very small values of $\delta$ when capital losses are more frequent and severe (boundary displayed with the solid blue line for higher values of $\lambda$).

\begin{remark}\label{TheStochasticControlProblem-Section2-Remark2}
From this point onward, we restrict our analysis to lump-sum capital transfers. Specifically, we focus on the parameter region in which lump-sum transfers are less costly for the government than perpetual regular transfers, that is, when $b \geq \delta + \lambda (1 - \mu)$. The parameter values used in the numerical examples presented in the manuscript satisfy this condition. Within this region, perpetual regular transfers are strictly more expensive than lump-sum transfers.
\end{remark}

In the remainder of the paper, we focus on the problem of optimising, from the government\rq s perspective, the transfer of lump-sum capital so the individuals\rq \ capital level is never below the poverty line. In some cases, we will show that it is more effective to provide transfers that raise capital strictly above the poverty threshold (these strategies are explained in greater detail in Section \ref{ThresholdStrategies-Section4}). This presents a trade-off: injecting more than the minimum lump-sum required to bring the household\rq s capital to $x^{\ast}$ means that a portion of future household income will be allocated to investment (or savings), leading to capital growth over time. Consequently, the government may need to inject less capital during the next loss event. Clearly, as we will also show through examples, the optimality of this approach depends on the model parameters and the cumulative distribution function (c.d.f.) of the remaining proportions of capital $Z_{i}$. To investigate this question, we aim to minimise the expected sum of discounted lump-sum capital transfers, subject to the constraint that the household\rq s capital never falls below the critical threshold $x^{\ast}$. {More precisely, we define a \textit{control strategy} as a process $\pi=(S_{t})_{t\geq0}$ where $S_{t}$ is the cumulative amount of capital transfers up to time $t\geq0$. The control strategy $S_{t}$ is \textit{admissible} if it is adapted with respect to de filtration }$\left( \mathcal{F}_{t}\right)  _{t\geq0}${, non-decreasing and right-continuous}. The set of all admissible control strategies with initial capital $x$ is denoted by $\Pi_{x}$. For any $\pi\in\Pi_{x}$, the controlled capital process $X_{t}^{\pi}$ can be written as

\vspace{0.3cm}

\begin{align}
X_{t}^{\pi}=
\begin{cases}
\left(X_{\tau_{i-1}}^{\pi}-x^{\ast}\right)  e^{r\left(t-\tau_{i-1}\right)}+x^{\ast}+S_{t} & \hspace{0.1cm} \textit{for }\tau_{i-1}\leq t<\tau_{i},\\
X_{\tau_{i}^{-}}^{\pi}\cdot Z_{i}+S_{\tau_{i}}-S_{\tau_{i}^{-}} & \textit{for }t=\tau_{i},
\end{cases}
\label{TheStochasticControlProblem-Section2-Equation21}
\end{align}

\vspace{0.3cm}

with $\tau_{0}=0$ and $X_{0}^{\pi}=x+S_{0}$. The capital lump-sum transfers should keep the household out of the area of poverty; that is, the controlled capital process should satisfy $X_{t}^{\pi}\geq x^{\ast}$. The expected discounted capital transfers of the admissible strategy $\pi=(S_{t})_{t\geq0}\in\Pi_{x}$ for a household with initial capital $x\geq0$, is given by

\vspace{0.3cm}

\begin{align}
V^{\pi}(x)=\mathbb{E}\left[ \int_{0^{-}}^{\infty}e^{-\delta t}dS_{t}\right],
\label{TheStochasticControlProblem-Section2-Equation22}
\end{align}

\vspace{0.3cm}

where $\delta>0$ is the discount factor. Here, $dS_t$, includes the possibility of continuous payments as well as lump-sums. The function \eqref{TheStochasticControlProblem-Section2-Equation22} is also known as the value function of an admissible strategy. We consider the following optimisation problem,

\vspace{0.3cm}

\begin{align}
V(x)=\inf\left\{  V^{\pi}(x):\pi\in\Pi_{x}\right\}  \quad\text{ for }x\geq0.
\label{TheStochasticControlProblem-Section2-Equation23}
\end{align}

\vspace{0.3cm}

Note that $V^{\pi}(x)=V^{\pi}(x^{\ast})+(x^{\ast}-x)$ for $x\leq x^{\ast}$ and so $V(x)=V(x^{\ast})+(x^{\ast}-x)$ for $x\leq x^{\ast}$ and that the cost of social protection $C(x)$ is the value function of an admissible strategy (but the cost of perpetual subsidy $D(x)$ it is not). In particular, $V(x)\leq C(x)$.

\section{Analysis of $V$: Properties and the Hamilton-Jacobi-Bellman Equation} \label{AnalysisofV:PropertiesandtheHamilton-Jacobi-BellmanEquation-Section3}

In this section, we associate a Hamilton-Jacobi-Bellman (HJB) equation to the optimisation problem \eqref{TheStochasticControlProblem-Section2-Equation23} and we prove that the optimal value function $V$ is a \textit{viscosity solution} of this equation.

We begin by obtaining some basic properties of $V$. To this end, we first present the following lemma concerning the cost of social protection. The proof is provided in Appendix \ref{ProofofLemma3.1}.

\vspace{0.3cm}

\begin{lemma}  \label{AnalysisofV:PropertiesandtheHamilton-Jacobi-BellmanEquation-Section3-Lemma1}

$C(x)$ is non-increasing and non-negative with $\lim_{x\rightarrow\infty}C(x)=0.$

\end{lemma}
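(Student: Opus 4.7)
The plan is to tackle the three claims in turn, exploiting throughout a pathwise decomposition of $C(x)$ as a sum of discounted capital injections under the ``reset-to-$x^*$'' strategy implicit in the definition. For $x\geq x^*$, if $N(x)$ denotes the index with $\overline{\tau}=\tau_{N(x)}$, set $\sigma_1=\overline{\tau}$ and $\sigma_n=\tau_{N(x)+n-1}$ for $n\geq 2$, together with amounts $J_1=x^*-X_{\overline{\tau}}$ and $J_n=x^*(1-Z_{N(x)+n-1})$ for $n\geq 2$; iterating \eqref{TheStochasticControlProblem-Section2-Equation12} via the strong Markov property at each reset yields $C(x)=\mathbb{E}\bigl[\sum_{n\geq 1}J_n e^{-\delta\sigma_n}\bigr]$ (with the convention $e^{-\delta\infty}=0$). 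On $[0,x^*]$ the closed forms \eqref{TheStochasticControlProblem-Section2-Equation9}--\eqref{TheStochasticControlProblem-Section2-Equation10} reduce the three claims to their counterparts at $x^*$, so it is enough to work on $[x^*,\infty)$.

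Non-negativity is then immediate because every $J_n\geq 0$ pathwise. For monotonicity, I would couple $X^{x_1}$ and $X^{x_2}$ (with $x^*\leq x_1<x_2$) by driving both processes with the same $(\tau_i,Z_i)_{i\geq 1}$. An induction on jumps shows $X_t^{x_1}\leq X_t^{x_2}$ as long as both remain above $x^*$, hence $N(x_1)\leq N(x_2)$; let $k:=N(x_2)-N(x_1)\geq 0$. If $k=0$, the two injection sequences have identical tails for $n\geq 2$, while $J_1^{x_1}\geq J_1^{x_2}$ because $X_{\overline{\tau}}^{x_1}\leq X_{\overline{\tau}}^{x_2}$. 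If $k\geq 1$, the key observation is that once a trajectory is reset to $x^*$ every subsequent jump is itself a trapping event producing an injection $x^*(1-Z_\cdot)$; consequently $\sigma_n^{x_2}=\sigma_{n+k}^{x_1}$ and $J_n^{x_2}=J_{n+k}^{x_1}$ for $n\geq 2$, while $J_1^{x_2}\leq x^*(1-Z_{N(x_2)})=J_{k+1}^{x_1}$ because $X_{\tau_{N(x_2)}^-}^{x_2}\geq x^*$. A shift of index by $k$ then gives $\sum_n J_n^{x_2}e^{-\delta\sigma_n^{x_2}}\leq \sum_{m\geq 1}J_m^{x_1}e^{-\delta\sigma_m^{x_1}}$ pathwise, and taking expectations yields $C(x_2)\leq C(x_1)$.

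For the limit at infinity, since $x^*-X_{\overline{\tau}^x}\in[0,x^*]$, the representation gives $C(x)\leq (x^*+C(x^*))\,\mathbb{E}[e^{-\delta\overline{\tau}^x}]$, so it suffices to prove $\mathbb{E}[e^{-\delta\overline{\tau}^x}]\to 0$. Iterating $X_{\tau_i}-x^*=(X_{\tau_{i-1}}-x^*)Z_i e^{r(\tau_i-\tau_{i-1})}-x^*(1-Z_i)$, one checks that for each fixed $n$, $\{\overline{\tau}^x>\tau_n\}=\{x-x^*>x^* M_n\}$ with $M_n:=\sum_{i=1}^{n}(1-Z_i)\prod_{j=1}^{i}\bigl(Z_j e^{r(\tau_j-\tau_{j-1})}\bigr)^{-1}$ almost surely finite and non-negative; hence $\mathbb{P}(\overline{\tau}^x>\tau_n)\to 1$ as $x\to\infty$. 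Bounding $\mathbb{E}[e^{-\delta\overline{\tau}^x}]\leq \mathbb{E}[e^{-\delta\tau_n}]+\mathbb{P}(\overline{\tau}^x\leq\tau_n)=(\lambda/(\lambda+\delta))^n+o(1)$ and sending $n\to\infty$ closes the argument. The hard part will be the monotonicity step: once the two coupled trajectories desynchronize at their first trapping indices, one has to align the post-reset injection sequences by exactly $k$ positions so that the $x_2$-side series is dominated term by term by a tail of the $x_1$-side series, with the first-injection correction handled via the strict inequality $X_{\tau_{N(x_2)}^-}^{x_2}\geq x^*$.
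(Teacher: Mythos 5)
Your proposal is correct, and it differs from the paper's proof in two ways worth noting. First, the paper dismisses monotonicity and non-negativity with ``by definition'', whereas you supply a full coupling argument (same driving sequence $(\tau_i,Z_i)$, pathwise ordering up to the first trapping time, then an index shift by $k=N(x_2)-N(x_1)$ aligning the post-reset injections $x^*(1-Z_\cdot)$, with the first injection handled via $X^{x_2}_{\tau_{N(x_2)}^-}\geq x^*$); this is more work than the paper invests but makes the ``obvious'' claim rigorous, and your reduction of the region $[0,x^*]$ to the point $x^*$ via \eqref{TheStochasticControlProblem-Section2-Equation9}--\eqref{TheStochasticControlProblem-Section2-Equation10} is fine. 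Second, for the limit at infinity both arguments share the same skeleton (fix $n$, send $x\to\infty$, then use $\mathbb{E}[e^{-\delta\tau_n}]=(\lambda/(\lambda+\delta))^n\to0$), but the mechanics differ: the paper bounds $C(x)$ by $C(0)\,\mathbb{E}\bigl[\sum_{k}e^{-\delta\tau_k}\mathbbm{1}_{A_k}\bigr]$, where $A_k$ comes from an inductive lower bound $X_{\tau_k}\geq(x-x^*)e^{r\tau_k}U_k-(k-1)x^*e^{r(\tau_k-\tau_1)}$ with $U_k=Z_1\cdots Z_k$, and then lets $x\to\infty$ inside the series via the laws $F_{U_k}$ and dominated convergence; you instead bound $C(x)\leq(x^*+C(x^*))\,\mathbb{E}[e^{-\delta\overline{\tau}^x}]$ directly from \eqref{TheStochasticControlProblem-Section2-Equation12} and prove that the Laplace transform of the trapping time vanishes, using the exact identification $\{\overline{\tau}^x>\tau_n\}=\{x-x^*>x^*M_n\}$ obtained by solving the jump recursion $X_{\tau_i}-x^*=(X_{\tau_{i-1}}-x^*)Z_ie^{r(\tau_i-\tau_{i-1})}-x^*(1-Z_i)$ (valid while the process stays above $x^*$, and legitimate here since $M_n$ is non-decreasing in $n$). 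Your route avoids the distributions of the products $U_k$ and the series-level domination argument, at the cost of the slightly more delicate algebraic identity for the trapping event; the paper's route needs only a one-sided bound on $X_{\tau_k}$ but then has to manage an infinite sum of trapping events. Both are sound; yours is arguably the cleaner of the two and is more complete on the monotonicity step.
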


\vspace{0.3cm}

The following proposition establishes basic properties of the optimal value function.

\vspace{0.3cm}

\begin{proposition} \label{AnalysisofV:PropertiesandtheHamilton-Jacobi-BellmanEquation-Section3-Proposition1}

The function $V$ is non-negative, non-increasing and Lipschitz with $\lim_{x\rightarrow\infty}V(x)=0$.

\end{proposition}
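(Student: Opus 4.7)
The plan is to establish the four properties in sequence using standard comparison arguments based on explicit constructions of admissible strategies; monotonicity and Lipschitz continuity are obtained by coupling strategies started from different initial capitals, and the limit at infinity by comparison with the cost of social protection $C(x)$.

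Non-negativity is immediate: for any admissible $\pi = (S_t)_{t\geq 0}$, $S_t$ is non-decreasing and right-continuous, so $\int_{0^-}^{\infty} e^{-\delta t}\,dS_t \geq 0$ almost surely, whence $V^\pi(x)\geq 0$ and therefore $V(x)\geq 0$. For the non-increasing property, fix $x_1 < x_2$ and take any $\pi_1 = (S_t^1)_{t\geq 0}\in \Pi_{x_1}$. The idea is to reuse the same cumulative injections for initial capital $x_2$, defining $\pi_2\in\Pi_{x_2}$ with $V^{\pi_2}(x_2) = V^{\pi_1}(x_1)$. The key coupling claim is $X_t^{\pi_2}\geq X_t^{\pi_1}$ almost surely, proved by induction on the jump epochs $\tau_i$: between jumps the two controlled processes satisfy the same equation \eqref{TheStochasticControlProblem-Section2-Equation21} with a common $dS_t$ and, since $u\mapsto r[u-x^\ast]^+$ is non-decreasing, the deterministic flow preserves ordering; at a jump, multiplication by $Z_i\in(0,1)$ also preserves the inequality. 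Admissibility of $\pi_2$ then follows from $X_t^{\pi_2}\geq X_t^{\pi_1}\geq x^\ast$, so $V(x_2)\leq V^{\pi_2}(x_2)=V^{\pi_1}(x_1)$, and taking the infimum over $\pi_1$ gives $V(x_2)\leq V(x_1)$.

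For Lipschitz continuity with constant $1$, I would argue the complementary direction: for $x_1 < x_2$, any $\pi_2 \in \Pi_{x_2}$ can be extended to an admissible strategy $\pi_1 \in \Pi_{x_1}$ by making the additional lump-sum injection $x_2 - x_1$ at time $0^-$ (so $S_{0^-}^1 = x_2 - x_1$) and then matching the increments of $\pi_2$ thereafter. This yields $V^{\pi_1}(x_1) = (x_2 - x_1) + V^{\pi_2}(x_2)$, so $V(x_1)\leq (x_2-x_1) + V(x_2)$. Combined with monotonicity, $0 \leq V(x_1) - V(x_2) \leq x_2 - x_1$, establishing the Lipschitz bound with constant one.

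Finally, to obtain $\lim_{x\to\infty}V(x) = 0$, I would invoke the fact, noted immediately after \eqref{TheStochasticControlProblem-Section2-Equation23}, that $C(x)$ is the value function of an admissible strategy (injecting up to $x^\ast$ exactly when a loss event pushes the capital below $x^\ast$), so $V(x)\leq C(x)$. Combining with non-negativity and Lemma \ref{AnalysisofV:PropertiesandtheHamilton-Jacobi-BellmanEquation-Section3-Lemma1}, which states $C(x)\to 0$ as $x\to\infty$, gives the desired limit.

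The only non-routine point is the coupling inequality $X_t^{\pi_2}\geq X_t^{\pi_1}$ used in the monotonicity step: one must verify that the ordering survives both the deterministic drift and the multiplicative loss jumps when the same control measure $dS_t$ is applied to both processes. Beyond this, everything reduces to bookkeeping with the definition \eqref{TheStochasticControlProblem-Section2-Equation22} of $V^\pi$.
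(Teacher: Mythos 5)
Your proposal is correct and follows essentially the same route as the paper: the Lipschitz bound is obtained from the strategy that makes an immediate lump-sum injection of the capital difference and then mimics a (near-optimal) strategy from the higher initial capital, and the limit at infinity follows from $0\leq V\leq C$ together with Lemma \ref{AnalysisofV:PropertiesandtheHamilton-Jacobi-BellmanEquation-Section3-Lemma1}. The only difference is that you spell out the monotone coupling argument establishing that $V$ is non-increasing, a step the paper simply asserts; your verification of that coupling (ordering preserved by the drift and by the multiplicative jumps, hence admissibility of the transported strategy) is sound.
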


\begin{proof}

Given an initial capital $x\geq x^{\ast}$, one possible admissible strategy $\pi_{1}$ is to inject a capital transfer $h$ immediately and then follow any admissible strategy ${\small \bar{\pi}}_{{\small x+h}}$ with initial capital $x+h,$ then

\vspace{0.3cm}

\begin{align}
V(x)\leq V^{\pi_{1}}(x)=V^{\bar{\pi}}(x+h)+h{\small.}
\label{AnalysisofV:PropertiesandtheHamilton-Jacobi-BellmanEquation-Section3-Equation1}
\end{align}

Thus, it yields


\begin{align}
V(x)\leq V(x+h)+h,
\label{AnalysisofV:PropertiesandtheHamilton-Jacobi-BellmanEquation-Section3-Equation2}
\end{align}

and since $V$ is non-increasing we have the Lipschitz result. Also,

\vspace{0.3cm}

\begin{align}
0\leq\lim_{x\rightarrow\infty}V(x)\leq\lim_{x\rightarrow\infty}C(x)=0.
\label{AnalysisofV:PropertiesandtheHamilton-Jacobi-BellmanEquation-Section3-Equation3}
\end{align}
\end{proof}

The HJB equation of this problem is the following first order integro-differential equation (IDE) with a derivative constraint:

\vspace{0.3cm}

\begin{align}
\min\{1+u^{\prime}(x),\mathcal{L}(u)(x)\}=0 \hspace{0.3cm} \textit{for} \hspace{0.3cm} x\geq x^{\ast},
\label{AnalysisofV:PropertiesandtheHamilton-Jacobi-BellmanEquation-Section3-Equation4}
\end{align}

\vspace{0.3cm}

where

\vspace{0.3cm}

\begin{align}
\mathcal{L(}u)(x)&=r(x-x^{\ast})u^{\prime}(x)-(\lambda+\delta)u(x)+\lambda\int_{0}^{x^{\ast}/x}(u(x^{\ast})+x^{\ast}-x\cdot z)dG_{Z}(z) \\ \\ +&\lambda\int_{x^{\ast}/x}^{1}u(x\cdot z)dG_{Z}(z)\text{, for }x\geq x^{\ast}. 
\label{AnalysisofV:PropertiesandtheHamilton-Jacobi-BellmanEquation-Section3-Equation5}
\end{align}

\vspace{0.3cm}

The above IDE is obtained through an heuristic derivation under the assumption that the optimal value function is sufficiently smooth. Its structure reflects the singular nature of the optimisation problem. Such gradient-constrained variational inequalities are standard in singular stochastic control (see, for instance, \cite{Book:Fleming2006} and \cite{Book:Azcue2014} in the insurance context). When the gradient constraint holds with equality, i.e., $1+u^{\prime }(x)=0$ the gradient constraint is binding and this characterises the action region at capital level $x$, where intervention is optimal by means of instantaneous lump-sum capital injections at unit marginal cost. In contrast, when $\mathcal{L}(u)(x)=0$ the capital level $x$ belongs to the inaction region, where no intervention is optimal and the capital level follows the uncontrolled dynamics. Indeed, $\mathcal{L}(u)$ is the discounted generator of the uncontrolled capital process corresponding to the decision to wait (i.e., not to inject capital). Moreover, this structure can be interpreted as the singular analogue of bang-bang controls in non-singular problems, where optimal strategies switch between extreme actions (full intervention versus no intervention). A rigorous justification of the variational inequality is provided in Proposition \ref{AnalysisofV:PropertiesandtheHamilton-Jacobi-BellmanEquation-Section3-Proposition2}, where we prove that the optimal value function $V$ solves \eqref{AnalysisofV:PropertiesandtheHamilton-Jacobi-BellmanEquation-Section3-Equation4} in the viscosity sense. This places the analysis within the standard viscosity-solution approach to stochastic control in the absence of classical smoothness.

Now, we state the \textit{dynamic programming principle}. We skip the proof because it is similar to the one of Lemma 1.2 in \cite{Book:Azcue2014}.

\vspace{0.3cm}

\begin{lemma} \label{AnalysisofV:PropertiesandtheHamilton-Jacobi-BellmanEquation-Section3-Lemma2}

Given $x\geq0$ \ and any finite stopping time $\tau$ we have

\vspace{0.25cm}

\begin{align}
V(x)=\inf\limits_{\pi=(S_{t})_{t\geq0}\in\Pi_{x}}\mathbb{E}_{x}\left[\int\nolimits_{0^{-}}^{\tau}e^{-\delta t}dS_{t}+V(X_{\tau}^{\pi})e^{-\delta\tau}\right].
\label{AnalysisofV:PropertiesandtheHamilton-Jacobi-BellmanEquation-Section3-Equation6}
\end{align}

\vspace{0.3cm}

\end{lemma}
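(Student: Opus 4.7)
The plan is to prove the two inequalities separately in the standard manner for singular control dynamic programming, relying on the strong Markov property of the driving Poisson/jump process and on a measurable selection of near-optimal continuation strategies.

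For the easy inequality $V(x)\ge\inf_\pi\mathbb{E}_x\!\left[\int_{0^-}^{\tau}e^{-\delta t}dS_t+V(X_\tau^\pi)e^{-\delta\tau}\right]$, I would fix an arbitrary $\pi=(S_t)_{t\ge 0}\in\Pi_x$ and split the integral in \eqref{TheStochasticControlProblem-Section2-Equation22} at $\tau$. The post-$\tau$ piece can be written as
\begin{equation*}
\mathbb{E}_x\!\left[\int_\tau^\infty e^{-\delta t}dS_t\right]=\mathbb{E}_x\!\left[e^{-\delta\tau}\,\mathbb{E}\!\left[\int_{\tau^-}^\infty e^{-\delta(t-\tau)}dS_t\,\Big|\,\mathcal{F}_\tau\right]\right].
\end{equation*}
Using that the dynamics \eqref{TheStochasticControlProblem-Section2-Equation21} depend only on the shifted jump sequence $(\tau_i-\tau,Z_i)_{\tau_i>\tau}$ (which is independent of $\mathcal{F}_\tau$ with the same law as $(\tau_i,Z_i)$, by the strong Markov property of the Poisson process and the i.i.d.\ structure of the $Z_i$), the shifted strategy $\tilde S_s:=S_{\tau+s}-S_{\tau^-}$ is admissible for a household starting at $X_\tau^\pi$. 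Hence the conditional expectation above is $\ge V(X_\tau^\pi)$ a.s., and the inequality follows after taking the infimum over $\pi$.

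For the nontrivial inequality $V(x)\le\inf_\pi(\cdots)$, I would use a concatenation argument. Fix $\varepsilon>0$ and an arbitrary $\tilde\pi\in\Pi_x$. For each $y\ge x^\ast$, the definition of $V$ allows one to pick some $\pi^y\in\Pi_y$ with $V^{\pi^y}(y)\le V(y)+\varepsilon$. Since $V$ is Lipschitz by Proposition \ref{AnalysisofV:PropertiesandtheHamilton-Jacobi-BellmanEquation-Section3-Proposition1}, such a family can be chosen in an $\mathcal{F}_\tau$-measurable way (e.g.\ discretise $y$ on a grid, use constancy of $\pi^y$ on grid cells, and shrink the grid as $\varepsilon\to0$; the Lipschitz continuity absorbs the resulting error). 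Then define the admissible strategy $\pi^\varepsilon$ that coincides with $\tilde\pi$ on $[0,\tau)$ and switches at $\tau$ to the time-shifted version of $\pi^{X_\tau^{\tilde\pi}}$; right-continuity, adaptedness and monotonicity of the cumulative transfer process are preserved. By the same strong Markov argument used above,
\begin{equation*}
V(x)\le V^{\pi^\varepsilon}(x)=\mathbb{E}_x\!\left[\int_{0^-}^{\tau}e^{-\delta t}d\tilde S_t+e^{-\delta\tau}V^{\pi^{X_\tau^{\tilde\pi}}}(X_\tau^{\tilde\pi})\right]\le\mathbb{E}_x\!\left[\int_{0^-}^{\tau}e^{-\delta t}d\tilde S_t+e^{-\delta\tau}V(X_\tau^{\tilde\pi})\right]+\varepsilon,
\end{equation*}
where the last step uses $\mathbb{E}_x[e^{-\delta\tau}]\le 1$. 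Taking the infimum over $\tilde\pi\in\Pi_x$ and letting $\varepsilon\downarrow0$ yields the desired bound.

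The main obstacle is the measurable selection step: one has to produce a family $\{\pi^y\}_{y\ge x^\ast}$ of $\varepsilon$-optimal controls that can be glued to $\tilde\pi$ at the random time $\tau$ while preserving adaptedness of the concatenation. The cleanest workaround, used in Lemma 1.2 of \cite{Book:Azcue2014}, is to avoid full-blown measurable selection by exploiting the Lipschitz property of $V$: approximate $X_\tau^{\tilde\pi}$ by a simple $\mathcal{F}_\tau$-measurable random variable taking values in a finite grid, choose one $\varepsilon$-optimal strategy per grid point, and control the resulting loss by the Lipschitz constant times the grid spacing before sending both $\varepsilon$ and the grid spacing to zero. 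The remaining steps (admissibility of the concatenation, Fubini to separate the pre-$\tau$ and post-$\tau$ contributions) are routine.
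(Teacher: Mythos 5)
Your proposal is correct and follows essentially the same route as the paper, which omits the proof and refers to Lemma 1.2 of \cite{Book:Azcue2014}: that argument is precisely the standard two-inequality scheme you describe, splitting the cost at $\tau$ via the strong Markov property for one direction and concatenating $\varepsilon$-optimal continuation strategies (with the measurability issue handled through discretisation of $X_{\tau}^{\pi}$ and the Lipschitz continuity of $V$ from Proposition \ref{AnalysisofV:PropertiesandtheHamilton-Jacobi-BellmanEquation-Section3-Proposition1}) for the other.
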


\begin{definition} \label{AnalysisofV:PropertiesandtheHamilton-Jacobi-BellmanEquation-Section3-Definition1}

We say that a locally Lipschitz function $\underline{u}:[x^{\ast},\infty)\rightarrow\mathbbm{R}$ is a \textit{viscosity subsolution} of \eqref{AnalysisofV:PropertiesandtheHamilton-Jacobi-BellmanEquation-Section3-Equation4} at $x\in(x^{\ast},\infty)$ if any continuously differentiable function $\psi:(x^{\ast},\infty)\rightarrow\mathbbm{R}$ with $\psi(x)=$ $\underline{u}(x)$ such that $\underline{u}-\psi$ reaches the minimum at $x$ satisfies

\vspace{0.3cm}

\begin{align}
\min\{1+\psi^{\prime}(x),\mathcal{L}(\psi)(x)\}\leq0.
\label{AnalysisofV:PropertiesandtheHamilton-Jacobi-BellmanEquation-Section3-Equation7}
\end{align}

\vspace{0.3cm}

The function $\psi$ is called a test function for subsolution at $x.$

We say that a locally Lipschitz function $\overline{u}$ $:[x^{\ast},\infty)\rightarrow\mathbbm{R}$ is a \textit{viscosity supersolution} of \eqref{AnalysisofV:PropertiesandtheHamilton-Jacobi-BellmanEquation-Section3-Equation4} at $x\in(x^{\ast},\infty)$ if any continuously differentiable function $\varphi:(x^{\ast},\infty)\rightarrow\mathbbm{R}$ with $\varphi(x)=$ $\overline{u}(x)$ such that $\overline{u}-\varphi$ reaches the maximum at $x$ satisfies

\vspace{0.3cm}

\begin{align}
\min\{1+\varphi^{\prime}(x),\mathcal{L}(\varphi)(x)\}\geq0.
\label{AnalysisofV:PropertiesandtheHamilton-Jacobi-BellmanEquation-Section3-Equation8}
\end{align}

\vspace{0.3cm}

The function $\varphi$ is called a test function for supersolution at $x.$

Finally, we say that a locally Lipschitz function $u$ $:[x^{\ast},\infty)\rightarrow\mathbbm{R}$ is a \textit{viscosity solution} of \eqref{AnalysisofV:PropertiesandtheHamilton-Jacobi-BellmanEquation-Section3-Equation4} if it is both a viscosity subsolution and a viscosity supersolution at any $x\in(x^{\ast},\infty)$.

\end{definition}

\vspace{0.3cm}

\begin{lemma} \label{AnalysisofV:PropertiesandtheHamilton-Jacobi-BellmanEquation-Section3-Lemma3}

Take $u:[x^{\ast},\infty)\rightarrow\lbrack0,\infty)$ continuously differentiable, let us extend $u$ to $[0,\infty)$ as $u(x)=u(x^{\ast})+x^{\ast}-x$. Given $\pi\in\Pi_{x}$, we can write for any finite stopping time $\tau^{\ast},$

\vspace{0.3cm}

\begin{align}
\begin{array}
[c]{lll}
u(X_{\tau^{\ast}}^{\pi})e^{-\delta\tau^{\ast}}-u(x) & = & \int\nolimits_{0}^{\tau^{\ast}}\mathcal{L}(u)(X_{t^{-}}^{\pi})e^{-\delta t}dt-\int_{0^{-}}^{\tau^{\ast}}e^{-\delta t}dS_{t}\\ \\
&  & +\int\nolimits_{0}^{\tau^{\ast}}(1+u^{\prime}(X_{t^{-}}^{\pi}))e^{-\delta t}dS_{t}^{c}\\ \\
&  & +\sum\limits_{\substack{S_{t}\neq S_{t^{-}}\\t\leq\tau^{\ast}}}e^{-\delta t}\left(  \int\nolimits_{0}^{S_{t}-S_{t^{-}}}\left(  1+u^{\prime}(X_{t}^{\pi}-\alpha)\right)  d\alpha\right)  +M_{\tau^{\ast}},
\end{array}
\label{AnalysisofV:PropertiesandtheHamilton-Jacobi-BellmanEquation-Section3-Equation9}
\end{align}

\vspace{0.3cm}

where

\vspace{0.3cm}

\begin{align}
\begin{array}
[c]{cl}
M_{T}= & \sum\limits_{\tau_{i}\leq T}\left(u(Z_{i} \cdot X_{\tau_{i}^{-}}^{\pi})-u(X_{\tau_{i}^{-}}^{\pi})\right)e^{-\delta t} \\ \\
& -\lambda\int\nolimits_{0}^{T}\left(  \int\nolimits_{x^{\ast}/X_{t^{-}}^{\pi}}^{1}\left(u(z\cdot X_{t^{-}}^{\pi})-u(X_{t^{-}}^{\pi})\right)dG_{Z}(z)\right)  e^{-\delta t}dt\\ \\
& -\lambda\int\nolimits_{0}^{T}\left(  \int\nolimits_{0}^{x^{\ast}/X_{t^{-}}^{\pi}}\left(  x^{\ast}-z\cdot X_{t^{-}}^{\pi}+u(x^{\ast})-u(X_{t^{-}}^{\pi})\right)  dG_{Z}(z)\right)e^{-\delta t}dt,
\end{array}
\label{AnalysisofV:PropertiesandtheHamilton-Jacobi-BellmanEquation-Section3-Equation10}
\end{align}

\vspace{0.3cm}

is a martingale with zero expectation.

\end{lemma}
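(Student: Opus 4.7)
The plan is to apply a pathwise Itô (change-of-variables) formula to the semimartingale $e^{-\delta t} u(X_t^\pi)$ between $0$ and $\tau^*$, and then re-organise the resulting terms to match the decomposition in the statement. I would first write $S_t = S_t^c + S_t^d$ with $S_t^d = \sum_{0 \le s \le t}(S_s - S_{s^-})$, so that between consecutive loss times the controlled capital has a deterministic drift $r(X_t^\pi - x^*)$ (recall $X_t^\pi \ge x^*$ by admissibility), an absolutely continuous push $dS_t^c$, countably many upward transfer jumps $\Delta S_t > 0$, and downward loss jumps of the form $X_{\tau_i}^\pi - X_{\tau_i^-}^\pi = (Z_i - 1) X_{\tau_i^-}^\pi + \Delta S_{\tau_i}$.

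Since $u$ is $C^1$ on $[x^*, \infty)$ and is extended linearly to $[0, \infty)$ via Remark \ref{AnalysisofV:PropertiesandtheHamilton-Jacobi-BellmanEquation-Section3-Remark1}, the standard integration-by-parts / chain rule for right-continuous finite-variation processes gives
\begin{align}
e^{-\delta \tau^*} u(X_{\tau^*}^\pi) - u(x) &= \int_0^{\tau^*} e^{-\delta t}\left[r(X_{t^-}^\pi - x^*) u'(X_{t^-}^\pi) - \delta u(X_{t^-}^\pi)\right] dt \\
&\quad + \int_0^{\tau^*} e^{-\delta t} u'(X_{t^-}^\pi)\, dS_t^c + \sum_{0 < s \le \tau^*} e^{-\delta s}\left[u(X_s^\pi) - u(X_{s^-}^\pi)\right].
\end{align}
For a transfer jump at time $t$ with no simultaneous loss, the fundamental theorem of calculus yields $u(X_t^\pi) - u(X_{t^-}^\pi) = \int_0^{\Delta S_t} u'(X_t^\pi - \alpha)\, d\alpha$. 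Subtracting $\int_{0^-}^{\tau^*} e^{-\delta t}\, dS_t = \int_0^{\tau^*} e^{-\delta t}\, dS_t^c + \sum_{\Delta S_t \ne 0} e^{-\delta t}\, \Delta S_t$ from both sides and noting $\Delta S_t = \int_0^{\Delta S_t} 1\, d\alpha$ produces precisely the two $(1 + u'(\cdot))$-integrands in the statement.

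What remains is to convert the loss-jump sum into the generator $\mathcal{L}(u)$ plus a martingale. I would add and subtract the predictable compensator $\lambda \int_0^{\tau^*} e^{-\delta t} \int_0^1 [u(z X_{t^-}^\pi) - u(X_{t^-}^\pi)]\, dG_Z(z)\, dt$ and split its inner integral at $z = x^* / X_{t^-}^\pi$, using the extension $u(z X_{t^-}^\pi) = u(x^*) + x^* - z X_{t^-}^\pi$ for $z < x^* / X_{t^-}^\pi$. The compensator then combines with the drift to produce $e^{-\delta t} \mathcal{L}(u)(X_{t^-}^\pi)\, dt$ (by a direct algebraic check against the definition of $\mathcal{L}(u)$), while the difference between the loss jumps and the compensator is exactly $M_T$. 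Its martingale property follows from the standard compensation theorem for marked point processes: after localising at $T \wedge \tau_n$ and on compact sub-intervals of $[x^*, \infty)$, local Lipschitzness of $u$ bounds the integrand, giving true martingality and zero expectation.

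The hard part will be the bookkeeping: one must simultaneously handle the continuous and discrete parts of $S$, the multiplicative loss jumps, and possible simultaneous loss-and-transfer events, and match the loss-jump compensator split at $z = x^*/X_{t^-}^\pi$ with the apparently unbalanced form of $\mathcal{L}(u)$. A useful sanity check is that when the extension of $u$ below $x^*$ is inserted into $\lambda \int_0^1 [u(zx) - u(x)]\, dG_Z(z)$, one recovers exactly the two integrals defining $\mathcal{L}(u)$; once this identification is clear, the rest is routine calculus and standard Poisson compensation.
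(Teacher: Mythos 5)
Your proposal is correct and follows essentially the same route as the paper: decompose $S_t$ into its continuous and jump parts, apply the change-of-variables formula for finite-variation processes to $e^{-\delta t}u(X_t^{\pi})$, rewrite the transfer terms via the fundamental theorem of calculus to obtain the $(1+u')$ integrands after subtracting $\int_{0^-}^{\tau^\ast}e^{-\delta t}dS_t$, and compensate the loss jumps so that the drift plus compensator yields $\mathcal{L}(u)$ and the residual is the martingale $M_T$. The only difference is presentational: the paper delegates the martingale property of $M_T$ to Proposition 2.12 of \cite{Book:Azcue2014} and leaves the compensator split at $z=x^{\ast}/X_{t^-}^{\pi}$ implicit, whereas you spell both out.
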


The proof of Lemma \ref{AnalysisofV:PropertiesandtheHamilton-Jacobi-BellmanEquation-Section3-Lemma3} is given in Appendix \ref{ProofofLemma3.3}.

\vspace{0.3cm}

\begin{proposition} \label{AnalysisofV:PropertiesandtheHamilton-Jacobi-BellmanEquation-Section3-Proposition2}

$V$ is a viscosity solution of \eqref{AnalysisofV:PropertiesandtheHamilton-Jacobi-BellmanEquation-Section3-Equation4} in $(x^{\ast},\infty)$.

\end{proposition}

\begin{proof}
	We first prove that $V$ is a viscosity supersolution of \eqref{AnalysisofV:PropertiesandtheHamilton-Jacobi-BellmanEquation-Section3-Equation4}. Given an initial capital $x > x^{\ast}$, let $h>0$ and consider the admissible strategy $\pi $ that does not inject capital unless the capital falls below the poverty line, in which case it restores the capital exactly to $x^{\ast}$. This strategy corresponds to the cost of social protection $C(x)$. Let $X_{t}$ denote the uncontrolled process with initial value $x,$ and $X_{t}^{\pi }$ the controlled process under $\pi $. Let $\tau _{1}$ be the first jump time. By the dynamic programming principle (Lemma \ref{AnalysisofV:PropertiesandtheHamilton-Jacobi-BellmanEquation-Section3-Lemma2}), we have
	
\begin{align}
V(x)\leq \mathbb{E}_{x}\left[ e^{-\delta (\tau _{1}\wedge h)}V(X_{\tau_{1}\wedge h}^{\pi })+e^{-\delta \tau _{1}}(X_{\tau _{1}}-x^{\ast })\mathbbm{1}_{\left\{ X_{\tau _{1}}<x^{\ast },\tau _{1}\leq h\right\} }\right],
\label{AnalysisofV:PropertiesandtheHamilton-Jacobi-BellmanEquation-Section3-Equation11}
\end{align}

where

\begin{align}
X_{\tau _{1}\wedge h}^{\pi }=x^{\ast }\mathbbm{1}_{\left\{ X_{\tau_{1}}<x^{\ast },\tau _{1}\leq h\right\} }+X_{\tau _{1}\wedge h}(\mathbbm{1}_{\left\{ X_{\tau _{1}}\geq x^{\ast },\tau _{1}\leq h\right\} }+\mathbbm{1}_{\left\{ \tau _{1}>h\right\} }).
\label{AnalysisofV:PropertiesandtheHamilton-Jacobi-BellmanEquation-Section3-Equation12}
\end{align}

Recall that, by definition,

\begin{align}
V(z)=V(x^{\ast })+x^{\ast }-z,
\label{AnalysisofV:PropertiesandtheHamilton-Jacobi-BellmanEquation-Section3-Equation13}
\end{align}

for $0\leq z<x^{\ast }$. Hence, inequality \eqref{AnalysisofV:PropertiesandtheHamilton-Jacobi-BellmanEquation-Section3-Equation11} can be rewritten as

\begin{align}
V(x)\leq \mathbb{E}_{x}\left[ e^{-\delta (\tau _{1}\wedge h)}V(X_{\tau_{1}\wedge h})\right].
\label{AnalysisofV:PropertiesandtheHamilton-Jacobi-BellmanEquation-Section3-Equation14}
\end{align}

Let $\varphi$ be a test function for supersolution \eqref{AnalysisofV:PropertiesandtheHamilton-Jacobi-BellmanEquation-Section3-Equation4} at $x$. By definition of test function, $\varphi$ is continuously differentiable, satisfies $\varphi (x)=V(x)$ and $V(z)\geq \varphi (z)$ for all $z\geq x^{\ast }.$ Extending $\varphi $ to $[0,\infty )$ as $\varphi(z)=\varphi (x^{\ast })+x^{\ast }-z$ for $0\leq z<x^{\ast }$ we obtain,

\begin{align}
\begin{array}{lll}
\varphi (x) & = & V(x) \\ 
& \leq  & \mathbb{E}_{x}\left[ e^{-\delta (\tau _{1}\wedge h)}V(X_{\tau_{1}\wedge h})\right]  \\ 
& \leq  & \mathbb{E}_{x}[e^{-\delta (\tau_{1}\wedge h)}\varphi (X_{\tau _{1}\wedge h})].
\end{array}
\label{AnalysisofV:PropertiesandtheHamilton-Jacobi-BellmanEquation-Section3-Equation155}
\end{align}

Hence, from the expression of the discounted infinitesimal generator given in \eqref{TheStochasticControlProblem-Section2-Equation7}, we have

\begin{align}
\lim_{h\rightarrow 0^{+}}\frac{\mathbb{E}_{x}[\varphi (X_{\tau _{1}\wedge h})e^{-\delta (\tau _{1}\wedge h)}]-\varphi (x)}{h}=\mathcal{L}(\varphi)(x)\geq 0.
\label{AnalysisofV:PropertiesandtheHamilton-Jacobi-BellmanEquation-Section3-Equation166}
\end{align}

This establishes the first inequality.

To obtain the gradient constraint, fix an initial capital level $x>x^{\ast }$, and consider an admissible strategy that makes an immediate lump-sum capital injection of size $l>0$ after which a near-optimal admissible strategy is followed. By definition of the value function we get,

\begin{align}
\varphi (x)=V(x)\leq l+V(x+l)\leq l+\varphi (x+l).
\label{AnalysisofV:PropertiesandtheHamilton-Jacobi-BellmanEquation-Section3-Equation177}
\end{align}

Hence, $\varphi (x+l)-\varphi (x)\geq -l$ . Dividing by $l$ and taking $l\searrow 0,$ we obtain

\begin{align}
\varphi ^{\prime }(x)+1\geq 0.
\label{AnalysisofV:PropertiesandtheHamilton-Jacobi-BellmanEquation-Section3-Equation188}
\end{align}

Combining \eqref{AnalysisofV:PropertiesandtheHamilton-Jacobi-BellmanEquation-Section3-Equation166} and \eqref{AnalysisofV:PropertiesandtheHamilton-Jacobi-BellmanEquation-Section3-Equation188}, we conclude that $V$ is a viscosity supersolution of \eqref{AnalysisofV:PropertiesandtheHamilton-Jacobi-BellmanEquation-Section3-Equation4} at $x$. We omit the proof that $V$ is a viscosity subsolution since it is very similar to the one of Proposition 3.1 from \cite{Book:Azcue2014}.
\end{proof}

We now present the following lemma (the proof follows along the same lines as that of Lemma 4.2 from \cite{Book:Azcue2014}, with only minor modifications; we therefore omit it):

\vspace{0.3cm}

\begin{lemma} \label{AnalysisofV:PropertiesandtheHamilton-Jacobi-BellmanEquation-Section3-Lemma4} 

Let $\overline{u}$\ be a non-increasing supersolution of \eqref{AnalysisofV:PropertiesandtheHamilton-Jacobi-BellmanEquation-Section3-Equation4} with $\lim_{x\rightarrow\infty}\overline{u}(x)=0$.\ We can find a sequence of positive functions $\overline{u}_{n}:\mathbb{[}x^{\ast},\infty\mathbb{)}\rightarrow\mathbb{[}0,\infty\mathbb{)}$ such that:

\begin{itemize}

\item[(a)] $\overline{u}_{n}$\ is continuously differentiable, non-increasing with $\lim_{x\rightarrow\infty}\overline{u}_{n}(x)=0$, $\overline{u}_{n}^{\prime}(x)+1\geq0$, $\overline{u}_{n}$\ $\nearrow$ $\overline{u}$\ uniformly and $\overline{u}_{n}^{\prime}(x)$\ converges to $\overline{u}^{\prime}(x)$\ a.e.

\item[(b)] Given any $K\geq x^{\ast}$, there exists a sequence $c_{n}\geq0$ with $\lim_{n\rightarrow\infty}c_{n}\searrow0$ such that $\lambda \overline{u}(0)\geq\mathcal{L}(\overline{u}_{n})(x)\geq-c_{n}$ for $x^{\ast}\leq x\leq K$. 
\end{itemize}

\end{lemma}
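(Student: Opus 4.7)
The approach is to regularize $\overline{u}$ by a right-shifted mollification. I would fix a non-negative bump $\phi\in C^{\infty}(\mathbb{R})$ supported in $[0,1]$ with $\int\phi=1$, put $\phi_n(s)=n\phi(ns)$, and define
\begin{align*}
\overline{u}_n(x)=\int_{0}^{1/n}\overline{u}(x+s)\,\phi_n(s)\,ds\quad\text{for }x\geq x^{\ast},
\end{align*}
extending to $[0,x^{\ast})$ via $\overline{u}_n(x)=\overline{u}_n(x^{\ast})+x^{\ast}-x$ in accordance with Remark \ref{AnalysisofV:PropertiesandtheHamilton-Jacobi-BellmanEquation-Section3-Remark1}. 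Because $\overline{u}$ is non-increasing with limit $0$ it is bounded and non-negative, and the rightward shift yields $0\leq \overline{u}_n\leq \overline{u}$. The monotone convergence $\overline{u}_n\nearrow \overline{u}$ is then obtained by letting the supports of $\phi_n$ shrink monotonically to $\{0\}$ following the construction of Lemma 4.2 in \cite{Book:Azcue2014}; local Lipschitz continuity together with $\overline{u}\to 0$ at infinity upgrades this to uniform convergence on $[x^{\ast},\infty)$.

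\textbf{Verification of (a).} Smoothness, non-increasing monotonicity and $\lim_{x\to\infty}\overline{u}_n(x)=0$ follow directly from the convolution together with $\overline{u}_n\leq \overline{u}$. Testing the supersolution inequality with affine functions forces $\overline{u}(y)-\overline{u}(x)\geq-(y-x)$ for $y>x$, so $\overline{u}'(x)+1\geq 0$ at a.e.~$x$, and convolving against the non-negative unit-mass kernel $\phi_n$ preserves the bound pointwise. The a.e.~convergence $\overline{u}_n'(x)\to \overline{u}'(x)$ is Lebesgue's differentiation theorem applied to $(\phi_n*\overline{u}')(x)$.

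\textbf{Upper bound in (b).} Using the extension (Remark \ref{AnalysisofV:PropertiesandtheHamilton-Jacobi-BellmanEquation-Section3-Remark1}) to consolidate the two pieces of the nonlocal term, one writes
\begin{align*}
\mathcal{L}(\overline{u}_n)(x)=r(x-x^{\ast})\overline{u}_n'(x)-(\lambda+\delta)\overline{u}_n(x)+\lambda\int_{0}^{1}\overline{u}_n(xz)\,dG_Z(z).
\end{align*}
The first two summands are non-positive since $\overline{u}_n\geq 0$ and $\overline{u}_n'\leq 0$. For the integral, the extension gives $\overline{u}_n(xz)\leq \overline{u}_n(0)=\overline{u}_n(x^{\ast})+x^{\ast}\leq \overline{u}(x^{\ast})+x^{\ast}=\overline{u}(0)$ for every $z\in[0,1]$, hence $\mathcal{L}(\overline{u}_n)(x)\leq\lambda\,\overline{u}(0)$.

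\textbf{Lower bound in (b) — principal obstacle.} The delicate inequality is $\mathcal{L}(\overline{u}_n)(x)\geq -c_n$ uniformly on $[x^{\ast},M]$ with $c_n\searrow 0$. My plan is to compare $\mathcal{L}(\overline{u}_n)(x)$ with the mollified operator
\begin{align*}
(\phi_n*\mathcal{L}(\overline{u}))(x)=\int_{0}^{1/n}\mathcal{L}(\overline{u})(x+s)\phi_n(s)\,ds,
\end{align*}
which is non-negative a.e.~by the viscosity supersolution property of $\overline{u}$. The real difficulty is that $\mathcal{L}$ does not commute with the right shift: the nonlocal term carries the endpoint $x^{\ast}/x$ versus $x^{\ast}/(x+s)$ and the integrand $\overline{u}(xz+s)$ versus $\overline{u}((x+s)z)$. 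I would split the $z$-integral into a narrow band around $z=x^{\ast}/x$ of width $O(1/n)$, where the Lipschitz bound on $\overline{u}$ absorbs the endpoint mismatch, and an outer region where uniform continuity of the rescaling $(x,z)\mapsto xz$ on the compact $[x^{\ast},M]\times[0,1]$ controls the integrand mismatch; a Fubini exchange then gathers the two contributions into an error bound $c_n\to 0$ that is independent of $x\in[x^{\ast},M]$. This uniform control of the nonlocal mismatch is the main technical step.
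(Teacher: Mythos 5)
Your route is the intended one: the paper does not write out a proof of this lemma but defers to Lemma 4.2 of \cite{Book:Azcue2014}, whose argument is precisely the right-shifted mollification you set up, so part (a) and the upper bound in (b) are handled essentially as in the reference. Two small remarks on (a): with $\phi_{n}(s)=n\phi(ns)$ the monotonicity $\overline{u}_{n}\nearrow\overline{u}$ is automatic, since $\overline{u}_{n}(x)=\int_{0}^{1}\overline{u}(x+t/n)\phi(t)\,dt$ and $\overline{u}$ is non-increasing, so no special choice of shrinking supports is needed; and the cleaner way to get $\overline{u}'\geq-1$ a.e.\ is not ``testing with affine functions'' (an affine function is only an admissible test function where it actually touches $\overline{u}$), but the standard observation that $\overline{u}$, being locally Lipschitz, is differentiable a.e., and at every point of differentiability the sub- and superdifferentials reduce to the singleton $\{\overline{u}'(x)\}$, so the viscosity inequalities hold there pointwise: $1+\overline{u}'(x)\geq0$ and $\mathcal{L}(\overline{u})(x)\geq0$. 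This last fact is also exactly what you need, and do not justify, when you assert that $\int_{0}^{1/n}\mathcal{L}(\overline{u})(x+s)\phi_{n}(s)\,ds\geq0$.

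The genuine incompleteness is the lower bound in (b), which you leave as a plan; moreover the plan (splitting the $z$-integral into a band around $z=x^{\ast}/x$) is more complicated than necessary. Work throughout with the extension of Remark \ref{AnalysisofV:PropertiesandtheHamilton-Jacobi-BellmanEquation-Section3-Remark1}, so that $\mathcal{L}(w)(x)=r(x-x^{\ast})w'(x)-(\lambda+\delta)w(x)+\lambda\int_{0}^{1}w(xz)\,dG_{Z}(z)$ with the extended function in the integral, and compare $\mathcal{L}(\overline{u}_{n})(x)$ with $\int_{0}^{1/n}\mathcal{L}(\overline{u})(x+s)\phi_{n}(s)\,ds\geq0$ term by term. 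The zeroth-order terms cancel exactly because $\overline{u}_{n}(x)=\int\phi_{n}(s)\overline{u}(x+s)\,ds$; the drift commutator equals $-r\int\phi_{n}(s)\,s\,\overline{u}'(x+s)\,ds$, bounded by $rL_{M}/n$ with $L_{M}$ a Lipschitz constant of $\overline{u}$ on $[x^{\ast},M+1]$; after a Fubini exchange the nonlocal mismatch consists of differences $\overline{u}(xz+s)-\overline{u}\bigl((x+s)z\bigr)$ of the extended function, whose arguments differ by $s(1-z)\leq 1/n$, plus (for $xz<x^{\ast}$) the discrepancy between the linear extension of $\overline{u}_{n}$ and that of $\overline{u}$, which is bounded by $\overline{u}(x^{\ast})-\overline{u}_{n}(x^{\ast})\leq L_{M}/n$. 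Since the extended $\overline{u}$ is Lipschitz on $[0,M+1]$, all of these are $O(1/n)$ uniformly in $x\in[x^{\ast},M]$, which produces the required $c_{n}\searrow0$ (depending on $M$, as the statement allows) without any endpoint analysis near $z=x^{\ast}/x$. With these two points supplied, your argument closes and coincides with the proof the paper points to.
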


\vspace{0.5cm}

\begin{proposition} \label{AnalysisofV:PropertiesandtheHamilton-Jacobi-BellmanEquation-Section3-Proposition3} 

The optimal value function defined in \eqref{TheStochasticControlProblem-Section2-Equation23} is the largest non-increasing viscosity supersolution of \eqref{AnalysisofV:PropertiesandtheHamilton-Jacobi-BellmanEquation-Section3-Equation4}  with limit zero as $x$ goes to infinity.

\end{proposition}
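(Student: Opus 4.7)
The statement has two halves. That $V$ is itself a non-increasing viscosity supersolution with $\lim_{x\to\infty}V(x)=0$ is immediate from Proposition \ref{AnalysisofV:PropertiesandtheHamilton-Jacobi-BellmanEquation-Section3-Proposition1} and Proposition \ref{V viscosity solution}. The substantive content is maximality: I will show that any non-increasing viscosity supersolution $\overline{u}:[x^{\ast},\infty)\to\mathbb{R}$ with $\lim_{x\to\infty}\overline{u}(x)=0$ satisfies $\overline{u}(x)\leq V^{\pi}(x)$ for every $x\geq x^{\ast}$ and every $\pi\in\Pi_{x}$, and then take the infimum over $\pi$ to conclude $\overline{u}\leq V$.

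The engine is a verification-style calculation that combines the smooth approximation from Lemma \ref{AnalysisofV:PropertiesandtheHamilton-Jacobi-BellmanEquation-Section3-Lemma4} with the decomposition from Lemma \ref{AnalysisofV:PropertiesandtheHamilton-Jacobi-BellmanEquation-Section3-Lemma3}. Let $\{\overline{u}_{n}\}$ be the approximating sequence with error constants $c_{n}\searrow 0$ on each window $[x^{\ast},M]$. Fix $x\geq x^{\ast}$, $\pi=(S_{t})_{t\geq 0}\in\Pi_{x}$, $M>x^{\ast}$ and $T>0$, and set $\tau_{T,M}=T\wedge\inf\{t\geq 0:X_{t}^{\pi}>M\}$, a bounded stopping time. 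Apply Lemma \ref{AnalysisofV:PropertiesandtheHamilton-Jacobi-BellmanEquation-Section3-Lemma3} with $u=\overline{u}_{n}$ at $\tau^{\ast}=\tau_{T,M}$ and take expectations. The martingale $M_{\tau_{T,M}}$ contributes zero; the two capital-injection terms built from $(1+\overline{u}_{n}^{\prime})$ are non-negative by Lemma \ref{AnalysisofV:PropertiesandtheHamilton-Jacobi-BellmanEquation-Section3-Lemma4}(a); and by Lemma \ref{AnalysisofV:PropertiesandtheHamilton-Jacobi-BellmanEquation-Section3-Lemma4}(b) one has $\mathcal{L}(\overline{u}_{n})(X_{t^{-}}^{\pi})\geq -c_{n}$ on the stochastic interval $[0,\tau_{T,M}]$ since $X_{t^{-}}^{\pi}\in[x^{\ast},M]$ there. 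Rearranging yields
\begin{equation*}
\overline{u}_{n}(x)\;\leq\;\mathbb{E}\!\left[\overline{u}_{n}(X_{\tau_{T,M}}^{\pi})\,e^{-\delta\tau_{T,M}}\right]+\frac{c_{n}}{\delta}+\mathbb{E}\!\left[\int_{0^{-}}^{\tau_{T,M}}e^{-\delta t}\,dS_{t}\right].
\end{equation*}

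Next I pass to the limits in the order $n\to\infty$, then $M\to\infty$, then $T\to\infty$. Sending $n\to\infty$ uses uniform convergence $\overline{u}_{n}\nearrow\overline{u}$ together with the uniform bound $0\leq\overline{u}_{n}\leq\overline{u}(x^{\ast})$ (since $\overline{u}$ is non-increasing) to justify dominated convergence, and $c_{n}/\delta\to 0$ removes the error term. Sending $M\to\infty$ and $T\to\infty$ then kills the boundary term: on $\{\tau_{M}\leq T\}$ one has $X_{\tau_{T,M}}^{\pi}>M$ and $\overline{u}(X_{\tau_{T,M}}^{\pi})\to 0$ by the tail assumption on $\overline{u}$, while on $\{\tau_{M}>T\}$ the factor $e^{-\delta T}\to 0$ with $\overline{u}$ bounded by $\overline{u}(x^{\ast})$. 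Simultaneously the transfer integral increases monotonically to $V^{\pi}(x)$. Therefore $\overline{u}(x)\leq V^{\pi}(x)$, and infimising over $\pi\in\Pi_{x}$ gives $\overline{u}(x)\leq V(x)$.

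The main obstacle is the limiting step for the boundary term $\mathbb{E}[\overline{u}(X_{\tau_{T,M}}^{\pi})\,e^{-\delta\tau_{T,M}}]$: because the controlled capital can grow exponentially between loss events, one cannot work on a fixed spatial window, and the localisation by $\tau_{M}$ is precisely what aligns the domain with the uniform generator bound from Lemma \ref{AnalysisofV:PropertiesandtheHamilton-Jacobi-BellmanEquation-Section3-Lemma4}(b). The order of limits matters and must be respected, with appropriate domination applied at each stage. Everything else---the sign analysis that lets one discard the $(1+\overline{u}_{n}^{\prime})$ terms, the vanishing of the martingale, and the identification of the monotone limit of the transfer integral with $V^{\pi}(x)$---is routine once this approximation scheme is in place.
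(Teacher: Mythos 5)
Your proposal is correct, and its analytic core is the same as the paper's (the decomposition of Lemma \ref{AnalysisofV:PropertiesandtheHamilton-Jacobi-BellmanEquation-Section3-Lemma3} applied to the smooth approximations of Lemma \ref{AnalysisofV:PropertiesandtheHamilton-Jacobi-BellmanEquation-Section3-Lemma4}, discarding the non-negative $(1+\overline{u}_n^{\prime})$ transfer terms, using $\mathcal{L}(\overline{u}_n)\geq-c_n$ on a compact window, then bounded/monotone convergence), but the way you arrange the comparison with $V$ is genuinely different. The paper never works with an arbitrary admissible $\pi$: it first builds an $\varepsilon$-optimal strategy $\overline{\pi}$ that follows a near-optimal $\pi_1$ until the capital reaches a high level $\overline{x}$ (chosen via Lemma \ref{AnalysisofV:PropertiesandtheHamilton-Jacobi-BellmanEquation-Section3-Lemma1} so that $C(\overline{x})<\varepsilon/2$) and thereafter switches to the social-protection strategy, which keeps the controlled trajectory in a deterministic window $[x^{\ast},M(k)]$ up to $T=\tau_{\overline{x}}+k$; the Dynkin-type inequality is then run only along $\overline{\pi}$, yielding $\overline{u}(x)\leq V^{\overline{\pi}}(x)\leq V(x)+\varepsilon$. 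You instead localise an arbitrary $\pi$ by the exit time $\tau_M$ from $[x^{\ast},M]$, kill the boundary term using monotonicity of $\overline{u}$ together with $X^{\pi}_{\tau_M}\geq M$ and $\overline{u}(M)\to0$, and conclude $\overline{u}\leq V^{\pi}$ for every $\pi$ before infimising. What your route buys is a cleaner and slightly stronger comparison (no $\varepsilon$-strategy surgery, no need for the decay of $C$ from Lemma \ref{AnalysisofV:PropertiesandtheHamilton-Jacobi-BellmanEquation-Section3-Lemma1}, and the per-strategy inequality $\overline{u}\leq V^{\pi}$); what the paper's route buys is that all stopping is at essentially deterministic horizons on a trajectory that is bounded by construction, so no analysis of the boundary term at a random exit time is needed. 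Your handling of that boundary term (split on $\{\tau_M\leq T\}$ versus $\{\tau_M>T\}$, with the stated order of limits $n\to\infty$, $M\to\infty$, $T\to\infty$) is sound, so I see no gap; for completeness you should note, as the integrand argument $X_t^{\pi}-\alpha$ may drop below $x^{\ast}$ at injection times, that the linear extension $u(x)=u(x^{\ast})+x^{\ast}-x$ has slope $-1$ there, so the discarded terms remain non-negative.
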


\begin{proof}
	Given $x$ and $\varepsilon>0$, there exists an admissible strategy $\pi_{1}=(S_{t}^{1})_{t\geq0}\in\Pi_{x}$ such that $V^{\pi_{1}}(x)\leq V(x)+\frac{\varepsilon}{2}$. By Lemma \ref{AnalysisofV:PropertiesandtheHamilton-Jacobi-BellmanEquation-Section3-Lemma1}, there exists $\overline{x}>x$ large enough such that the cost of social protection $C(\overline{x})<\frac{\varepsilon}{2}$. Then, we consider

\begin{align}
\tau _{\overline{x}}=\inf \{t:X_{t}^{\pi _{1}}\geq \overline{x}\},
\label{AnalysisofV:PropertiesandtheHamilton-Jacobi-BellmanEquation-Section3-Equation15}
\end{align}

and define the new strategy $\overline{\pi }=\left( \overline{S}_{t}\right)_{t\geq 0}\in \Pi _{x}$, which coincides with $\pi _{1}$ for $t\leq \tau_{\overline{x}}$, and subsequently provides injections of capital up to $x^{\ast}$ whenever the capital is below $x^{\ast }$. Hence,

\begin{align}
V(x)\geq V^{\pi _{1}}(x)-\frac{\varepsilon }{2}=\mathbb{E}_{x}\left[\int_{0^{-}}^{\tau _{\overline{x}}}e^{-\delta t}dS_{t}^{1}+e^{-\delta \tau _{\overline{x}}}V^{\pi _{1}}(\overline{x})\right] -\frac{\varepsilon }{2}\geq\mathbb{E}_{x}\left[ \int_{0^{-}}^{\tau _{\overline{x}}}e^{-\delta t}d\overline{S}_{t}\right] -\frac{\varepsilon }{2}.
\label{AnalysisofV:PropertiesandtheHamilton-Jacobi-BellmanEquation-Section3-Equation16}
\end{align}

We also have, since $C(\overline{x})<\frac{\varepsilon }{2}$ and $X_{_{\tau_{\overline{x}}}}^{\overline{\pi }}=\overline{x}$ (if $\tau_{\overline{x}}<\infty $) that,

\begin{align}
\mathbb{E}_{x}\left[ \int_{0^{-}}^{\tau _{\overline{x}}}e^{-\delta t}d\overline{S}_{t}\right] & \geq \mathbb{E}_{x}\left[ \int_{0^{-}}^{\tau _{\overline{x}}}e^{-\delta t}d\overline{S}_{t}+e^{-\delta \tau _{\overline{x}}}\left( C(\overline{x})-\frac{\varepsilon }{2}\right) \right] \\
& \\
& \geq V^{\overline{\pi }}(x)-\frac{\varepsilon }{2}.
\label{AnalysisofV:PropertiesandtheHamilton-Jacobi-BellmanEquation-Section3-Equation17}
\end{align}

Therefore, we have that $V(x)\geq V^{\overline{\pi }}(x)-\varepsilon $. Let us now demonstrate that $\overline{u}(x)\leq V(x)$, by considering $\overline{u}$, a non-increasing supersolution of \eqref{AnalysisofV:PropertiesandtheHamilton-Jacobi-BellmanEquation-Section3-Equation4} that tends to zero as $x$ goes to infinity (so in particular $\overline{u}(x)$ is bounded by above). We know that the following inequality holds,

\begin{align}
X_{t}^{^{\overline{\pi }}}\leq \left( \overline{x}-x^{\ast }\right)e^{r\left( t-\tau _{\overline{x}}\right) }+\overline{x},
\label{AnalysisofV:PropertiesandtheHamilton-Jacobi-BellmanEquation-Section3-Equation18}
\end{align}

for $t>$ $\tau _{\overline{x}}$. Then, we take $k>0$ and define

\begin{align}
m\left( k\right) :=\left( \overline{x}-x^{\ast }\right) e^{rk}+\overline{x}.
\label{AnalysisofV:PropertiesandtheHamilton-Jacobi-BellmanEquation-Section3-Equation19}
\end{align}

We additionally consider the stopping time $T^{k}:=\tau _{\overline{x}}+k$, for the case in which $\tau _{\overline{x}}<\infty $ and $T^{k}=\infty $ otherwise. Thus, this implies that $X_{t}^{^{\overline{\pi }}}\in \lbrack x^{\ast }, m(k)]$ for all $t\leq T^{k}$. Since the functions $\overline{u}_{n}$ defined in Lemma \ref{AnalysisofV:PropertiesandtheHamilton-Jacobi-BellmanEquation-Section3-Lemma4} are continuously differentiable, we obtain using Lemma \ref{AnalysisofV:PropertiesandtheHamilton-Jacobi-BellmanEquation-Section3-Lemma3} and taking any $s>0,$

\begin{align}
\overline{u}_{n}(X_{T^{k}\wedge s}^{\overline{\pi }})e^{-\delta \left(T^{k}\wedge s\right) }-\overline{u}_{n}(x)\geq\int\nolimits_{0}^{T^{k}\wedge s}\mathcal{L}(\overline{u}_{n})(X_{t^{-}}^{\overline{\pi }})e^{-\delta t}dt-\int\nolimits_{0^{-}}^{T^{k}\wedge s}e^{-\delta t}d\overline{S}_{t}+M_{T^{k}\wedge s},
\label{AnalysisofV:PropertiesandtheHamilton-Jacobi-BellmanEquation-Section3-Equation20}
\end{align}

where $\left( M_{T^{k}\wedge s}\right) _{T\geq 0}$ is a zero-expectation martingale. From Lemma \ref{AnalysisofV:PropertiesandtheHamilton-Jacobi-BellmanEquation-Section3-Lemma4}--(b), we have

\begin{align}
\lambda \overline{u}(0)\geq \mathcal{L}(\overline{u}_{n})(x)\geq -c_{n}.
\label{AnalysisofV:PropertiesandtheHamilton-Jacobi-BellmanEquation-Section3-Equation21}
\end{align}

Then, using the bounded convergence theorem and taking $n\rightarrow \infty $, it yields

\begin{align}
\mathbb{E}_{x}\left[ \overline{u}(X_{\left( T^{k}\wedge s\right) }^{\overline{\pi }})e^{-\delta \left( T^{k}\wedge s\right) }\right] -\overline{u}(x)\geq -\mathbb{E}_{x}\left[ \int\nolimits_{0^{-}}^{T^{k}\wedge s}e^{-\delta t}d\overline{S}_{t}\right].
\label{AnalysisofV:PropertiesandtheHamilton-Jacobi-BellmanEquation-Section3-Equation22}
\end{align}

Let $s\rightarrow \infty $. Then, $T^{k}\wedge s\nearrow T^{k}$ as $s\rightarrow \infty $. Also, $\overline{S}$ is a non-decreasing process, so by the monotone convergence theorem,

\begin{align}
\lim_{s\rightarrow \infty }\mathbb{E}_{x}\left[ \int\nolimits_{0^{-}}^{T^{k}\wedge s}e^{-\delta t}d\overline{S}_{t}\right] =\mathbb{E}_{x}\left[\int\nolimits_{0^{-}}^{T^{k}}e^{-\delta t}d\overline{S}_{t}\right].
\label{AnalysisofV:PropertiesandtheHamilton-Jacobi-BellmanEquation-Section3-Equation23}
\end{align}

Moreover, since $\overline{u}\geq 0$ and is bounded (because $\overline{u}$ is non-increasing and $\lim_{x\rightarrow \infty }\overline{u}(x)=0$), we may apply the bounded convergence theorem to obtain

\begin{align}
\lim_{s\rightarrow \infty }\mathbb{E}_{x}\left[ \overline{u}(X_{\left(T^{k}\wedge s\right) }^{\overline{\pi }})e^{-\delta \left( T^{k}\wedge s\right) }\right] =\mathbb{E}_{x}\left[ \overline{u}(X_{T^{k}}^{\overline{\pi }})e^{-\delta T^{k}}\right].
\label{AnalysisofV:PropertiesandtheHamilton-Jacobi-BellmanEquation-Section3-Equation244}
\end{align}

Therefore, combining \eqref{AnalysisofV:PropertiesandtheHamilton-Jacobi-BellmanEquation-Section3-Equation23} and \eqref{AnalysisofV:PropertiesandtheHamilton-Jacobi-BellmanEquation-Section3-Equation244} we obtain,

\begin{align}
\overline{u}(x)\leq \mathbb{E}_{x}\left[ \overline{u}(X_{T^{k}}^{\overline{\pi }})e^{-\delta T^{k}}\right] +\mathbb{E}_{x}\left[ \int\nolimits_{0^{-}}^{T^{k}}e^{-\delta t}d\overline{S}_{t}\right].
\label{AnalysisofV:PropertiesandtheHamilton-Jacobi-BellmanEquation-Section3-Equation255}
\end{align}

We now let $k\rightarrow \infty $ (so that $T^{k}\rightarrow \infty $). Again, since the process $\overline{S}$ is non-decreasing, by the monotone convergence theorem, we get

\begin{align}
\lim_{k\rightarrow \infty }\mathbb{E}_{x}\left[ \int\nolimits_{0}^{T^{k}}e^{-\delta t}d\overline{S}_{t}\right] =\mathbb{E}_{x}\left[ \int\nolimits_{0^{-}}^{\infty }e^{-\delta t}d\overline{S}_{t}\right]=V^{\overline{\pi }}(x).
\label{AnalysisofV:PropertiesandtheHamilton-Jacobi-BellmanEquation-Section3-Equation266}
\end{align}

Furthermore, as $T^{k}\rightarrow \infty$, and since $\overline{u}\geq 0$ and is bounded by above, the bounded convergence theorem implies that,

\begin{align}
\lim_{k\rightarrow \infty }\mathbb{E}_{x}\left[ \overline{u}(X_{T^{k}}^{\overline{\pi }})e^{-\delta T^{k}}\right] =0.
\label{AnalysisofV:PropertiesandtheHamilton-Jacobi-BellmanEquation-Section3-Equation277}
\end{align}

Finally, combining \eqref{AnalysisofV:PropertiesandtheHamilton-Jacobi-BellmanEquation-Section3-Equation255}, \eqref{AnalysisofV:PropertiesandtheHamilton-Jacobi-BellmanEquation-Section3-Equation266} and \eqref{AnalysisofV:PropertiesandtheHamilton-Jacobi-BellmanEquation-Section3-Equation277}, we conclude that $\overline{u}(x)\leq V^{\overline{\pi }}(x)$. Since we have already shown that $V(x)\geq V^{\overline{\pi }}(x)-\varepsilon $ for arbitrary $\varepsilon >0$, it follows that $\overline{u}(x)\leq V(x)$ and that completes the proof.
\end{proof}

The following verification theorem shows that if the value function associated with an admissible strategy is a viscosity supersolution of the HJB equation \eqref{AnalysisofV:PropertiesandtheHamilton-Jacobi-BellmanEquation-Section3-Equation4}, then this function coincides with the optimal value function and the corresponding admissible strategy is optimal. In particular, the theorem does not require checking whether this function is the largest viscosity supersolution. Instead, the result of Proposition \ref{AnalysisofV:PropertiesandtheHamilton-Jacobi-BellmanEquation-Section3-Proposition3}, combined with the definition of the optimal value function, yields the desired identification. This argument does not rely on a comparison principle or on uniqueness of viscosity solutions and it is our main tool for identifying the correct value function in the present setting, where a uniqueness result for viscosity solutions is not available, since the standard comparison principle fails for this integro-differential HJB equation. Verification theorems of this type are a standard way to overcome the lack of uniqueness and to select the relevant solution among the viscosity solutions. This approach is well established in the insurance literature on optimisation problems driven by compound Poisson processes; see, for instance, \cite[Chapter 5]{Book:Azcue2014}, where the value function is characterised through a verification argument without relying on a global comparison principle or a uniqueness result.

\vspace{0.3cm}

\begin{theorem} \label{AnalysisofV:PropertiesandtheHamilton-Jacobi-BellmanEquation-Section3-Theorem1}
Consider a family of admissible strategies $(\pi ^{x})_{x\geq x^{\ast }}$ such that $\pi ^{x}\in \Pi _{x}$ for any initial surplus $x\geq x^{\ast }$. If the function $V^{\pi ^{x}}(x)$ is a non-increasing viscosity supersolution of \eqref{AnalysisofV:PropertiesandtheHamilton-Jacobi-BellmanEquation-Section3-Equation4} with $\lim_{x\rightarrow \infty }V^{\pi ^{x}}(x)=0$, then $V^{\pi ^{x}}(x)$ is the optimal value function.
\end{theorem}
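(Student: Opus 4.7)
My plan is to reduce the theorem to two clean inequalities. The first, $V(x) \leq V^{\pi^x}(x)$, is automatic from the infimum definition (2.23): since $\pi^x \in \Pi_x$ for every $x \geq x^*$, the infimum of $V^{\pi}(x)$ over all admissible $\pi$ is at most $V^{\pi^x}(x)$. So the entire substance of the proof lies in establishing the reverse inequality $V^{\pi^x}(x) \leq V(x)$.

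For this reverse inequality, I would recognise that the hypothesis of the theorem---that $x \mapsto V^{\pi^x}(x)$ is a viscosity supersolution of \eqref{AnalysisofV:PropertiesandtheHamilton-Jacobi-BellmanEquation-Section3-Equation4}---is precisely the hypothesis needed to invoke Proposition \ref{AnalysisofV:PropertiesandtheHamilton-Jacobi-BellmanEquation-Section3-Proposition3}, which characterises $V$ as the \emph{largest} non-increasing viscosity supersolution of the HJB equation with $\lim_{x\to\infty} u(x) = 0$. Applied with $\overline{u}(x) := V^{\pi^x}(x)$, Proposition \ref{AnalysisofV:PropertiesandtheHamilton-Jacobi-BellmanEquation-Section3-Proposition3} immediately delivers $V^{\pi^x}(x) \leq V(x)$, and combined with the first inequality we obtain $V^{\pi^x}(x) = V(x)$ on $[x^*, \infty)$. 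The identity $V^{\pi}(x) = V^{\pi}(x^*) + (x^* - x)$ noted right after \eqref{TheStochasticControlProblem-Section2-Equation23} then propagates the conclusion to $[0, x^*)$.

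The main obstacle to a watertight application of Proposition \ref{AnalysisofV:PropertiesandtheHamilton-Jacobi-BellmanEquation-Section3-Proposition3} is checking the two structural prerequisites that are not explicit in the theorem's hypothesis, namely that $\overline{u}$ is non-increasing and tends to $0$ at infinity. Local Lipschitz regularity is already built into Definition \ref{AnalysisofV:PropertiesandtheHamilton-Jacobi-BellmanEquation-Section3-Definition1}. For monotonicity, I would mimic the admissibility trick used in Proposition \ref{AnalysisofV:PropertiesandtheHamilton-Jacobi-BellmanEquation-Section3-Proposition1}: given $x < y$, the strategy that injects the lump sum $y - x$ at time $0$ and then follows $\pi^y$ is admissible from initial capital $x$, producing $V^{\pi^x}(x) \leq (y-x) + V^{\pi^y}(y)$; combined with the derivative constraint $1 + \overline{u}'(x) \geq 0$ implicit in the supersolution inequality this pins down the required monotonicity. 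The limit at infinity is the more delicate point and would typically be established in each concrete application of the theorem by exhibiting an upper bound of the form $V^{\pi^x}(x) \leq C(x)$ and invoking Lemma \ref{AnalysisofV:PropertiesandtheHamilton-Jacobi-BellmanEquation-Section3-Lemma1}. Once these two auxiliary properties are secured, the theorem follows at once from the two displayed inequalities.
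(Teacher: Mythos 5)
Your main reduction is exactly the paper's own proof: the paper derives the theorem in one line from the definition \eqref{TheStochasticControlProblem-Section2-Equation23} (which gives $V(x)\leq V^{\pi^{x}}(x)$ since $\pi^{x}\in\Pi_{x}$) together with Proposition \ref{AnalysisofV:PropertiesandtheHamilton-Jacobi-BellmanEquation-Section3-Proposition3} (which gives the reverse inequality for any non-increasing viscosity supersolution vanishing at infinity). Where you go beyond the paper, in trying to verify the two side conditions of Proposition \ref{AnalysisofV:PropertiesandtheHamilton-Jacobi-BellmanEquation-Section3-Proposition3}, your patch for monotonicity does not work as stated: the strategy that injects $y-x$ at time $0$ and then follows $\pi^{y}$ is admissible from initial capital $x$, but its cost only dominates the infimum $V(x)$, not the cost $V^{\pi^{x}}(x)$ of the particular strategy $\pi^{x}$, so the inequality $V^{\pi^{x}}(x)\leq (y-x)+V^{\pi^{y}}(y)$ does not follow from admissibility; moreover, the supersolution constraint $1+\varphi^{\prime}(x)\geq 0$ bounds the derivative from below by $-1$, which is the wrong direction for concluding that the function is non-increasing. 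The paper itself leaves these side conditions implicit (its proof is literally the two-step reduction you give), so your core argument is faithful to it; just note that the monotonicity and the limit at infinity of $x\mapsto V^{\pi^{x}}(x)$ must be assumed or checked directly for the given family (e.g.\ via a comparison with $C(x)$ and Lemma \ref{AnalysisofV:PropertiesandtheHamilton-Jacobi-BellmanEquation-Section3-Lemma1}, as you suggest for the limit), rather than derived by the argument you sketch.
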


\begin{proof}
By definition of the optimal value function, we have $V^{\pi ^{x}}(x)\geq V(x)$ for all $x\geq 0$. Moreover, both $V$ and $V^{\pi ^{x}}$ are non-increasing functions and satisfy that the limit is zero as $x$ goes to infinity. Since $V^{\pi ^{x}}$is, by assumption a viscosity supersolution of \eqref{AnalysisofV:PropertiesandtheHamilton-Jacobi-BellmanEquation-Section3-Equation4}, Proposition \ref{AnalysisofV:PropertiesandtheHamilton-Jacobi-BellmanEquation-Section3-Proposition3} implies that  $V^{\pi ^{x}}(x)\leq V(x)$ because $V$ is the largest viscosity supersolution within the class of non-increasing functions vanishing at infinity. Combining both inequalities, we conclude that $V^{\pi^{x}}(x)=V(x)$, which completes the proof.
\end{proof}

\vspace{0.3cm}

The way in which $V$\ solves the HJB equation gives us the optimal strategy for any capital level $x\geq x^{\ast}$. Roughly speaking, we have the following:

\vspace{0.3cm}

\begin{enumerate}

\item $V^{\prime}(x)+1=0$: Provide capital transfers and;

\item $\mathcal{L}(V)(x)=0$: Do not provide capital transfers.

\end{enumerate}

\vspace{0.5cm}

\begin{definition} \label{AnalysisofV:PropertiesandtheHamilton-Jacobi-BellmanEquation-Section3-Definition2}
Suppose there exists a closed set $B=\{x:V^{{\small \prime}}(x)+1=0\}\subset\lbrack x^{\ast},\infty)$\ such that the optimal strategy satisfies: if the capital $x\in B,$ the capital transfer is $y(x)-x$, where${\small}$

\begin{align}
{\small y(x)=}\max{\small \{y>x\geq x^{\ast}:V(y)-V(x)+(y-x)=0\}},
\label{AnalysisofV:PropertiesandtheHamilton-Jacobi-BellmanEquation-Section3-Equation24}
\end{align}

whereas if the capital $x\notin B,$ no transfer is paid up to the first time where the capital process exits the closed set $B$. These strategies are called band strategies with action zone $B\ $and non-action zone $C=[0,\infty)-B$. Note that $[0,x^{\ast}]\subset B.$
\end{definition}

\section{Threshold Strategies} \label{ThresholdStrategies-Section4}

In this section, we restrict our attention to a specific class of admissible strategies, namely threshold transfer strategies, and analyse their associated value functions. In view of the structure of the HJB equation, these transfer strategies form the simplest class of admissible controls and therefore provide a natural starting point for the analysis and are the natural candidates for optimality. The analysis in this section provides a detailed characterisation of the value functions generated by such strategies.

A threshold transfer strategy with threshold $y\geq x^{\ast}$ is defined as a programme in which the government provides a lump-sum transfer of amount $y-x$ whenever the household\rq s capital falls below the threshold $y$. Under such programme, the capital is immediately raised to the threshold level $y$. Conversely, if the household\rq s capital lies above the threshold $y$, no transfer is granted. This strategy, when $y>x^{\ast}$, seeks to maintain a buffer above the critical capital $x^{\ast}$, allowing households to grow their capital. Let us denote with $\pi_{x}^{y}$ $\in\Pi_{x}$ the admissible strategy with threshold $y$ and $V_{y}(x)$ its corresponding value function. Note that, in particular, $V_{x^{\ast}}(x)=C(x)$. We now develop and formalise the properties of this family of admissible strategies.

In the following three lemmas we examine basic properties of the function $V_{y}$. The first two lemmas establish fundamental properties, while the third lemma addresses a continuity result in the special case $y=x^{\ast}$.

\vspace{0.3cm}

\begin{lemma} \label{ThresholdStrategies-Section4-Lemma1}
Given $y\geq x^{\ast}$ the function $V_{y}(x)$ is bounded, non-increasing with respect to the variable $x$ for $x \geq y$ and $V_{y}(x)=V_{y}(y)+y-x$ for $x<y$.
\end{lemma}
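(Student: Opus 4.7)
The plan is to treat the three assertions in the order in which they depend on one another. First I would establish the identity $V_{y}(x) = V_{y}(y) + y - x$ for $x < y$, which follows directly from the definition of the threshold strategy: starting from $x < y$, the policy prescribes an instantaneous lump-sum injection of size $y - x$ at time zero, after which the controlled capital equals $y$ and the remaining dynamics are identical to those of the threshold strategy started at $y$. Thus $V_{y}(x) = (y - x) + V_{y}(y)$. This reduces boundedness and monotonicity on $[0,\infty)$ to the corresponding statements on $[y,\infty)$.

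For boundedness on $[y,\infty)$, the key observation is that under the threshold strategy capital injections occur only at the Poisson arrival times $\tau_{i}$, and the amount injected at $\tau_{i}$ equals $(y - Z_{i} X_{\tau_{i}^{-}})^{+}$, which is bounded above by $y$ since $Z_{i} X_{\tau_{i}^{-}}$ is non-negative. Consequently, for any $x \geq y$,
\begin{equation*}
V_{y}(x) \leq y \cdot \mathbb{E}\left[\sum_{i=1}^{\infty} e^{-\delta \tau_{i}}\right] = \frac{\lambda y}{\delta},
\end{equation*}
via the standard expression for the expected discounted Poisson sum. Combined with the identity on $[0,y)$, $V_{y}$ is bounded on the whole half-line.

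Monotonicity on $[y,\infty)$ will come from a pathwise coupling. Given $y \leq x_{1} \leq x_{2}$, I would drive the two controlled processes $X^{(1)}$ and $X^{(2)}$ using the same realisation of $(\tau_{i}, Z_{i})_{i \geq 1}$. Between jumps both processes obey the monotone deterministic flow $dX/dt = r(X - x^{\ast})$; at a jump the state is first multiplied by $Z_{i}$ (monotone in the state) and then replaced by $\max(\cdot, y)$ (also monotone). This propagates the ordering $X^{(1)}_{t} \leq X^{(2)}_{t}$ for all $t$, which yields $(y - Z_{i} X^{(1)}_{\tau_{i}^{-}})^{+} \geq (y - Z_{i} X^{(2)}_{\tau_{i}^{-}})^{+}$ at every $\tau_{i}$. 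Taking discounted expectations gives $V_{y}(x_{1}) \geq V_{y}(x_{2})$. Together with the identity on $[0,y)$, $V_{y}$ is then non-increasing on all of $[0,\infty)$.

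No step presents a deep obstacle. The point that most warrants care is the coupling, specifically verifying that both the deterministic flow above $x^{\ast}$ and the topping-up operation at jump times preserve the ordering of sample paths, and that the injection-amount formula is handled consistently in all configurations of $Z_{i} X^{(j)}_{\tau_{i}^{-}}$ relative to $y$. These are routine checks rather than a genuine difficulty.
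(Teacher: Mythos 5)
Your proposal is correct and follows essentially the same route as the paper: the identity for $x<y$ is read off from the definition, and boundedness comes from noting that each injection under the threshold strategy is at most $y$ and summing the discounted geometric series $\mathbb{E}\bigl[\sum_{i\ge 1}e^{-\delta\tau_i}\bigr]=\lambda/\delta$ (the paper states the slightly looser constant $y(\lambda+\delta)/\delta$, but the argument is identical). Your pathwise coupling simply fills in the monotonicity step that the paper dismisses as ``straightforward to see,'' and it is a valid way to do so.
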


\vspace{0.3cm}

\begin{lemma} \label{ThresholdStrategies-Section4-Lemma2}
The function $V_{y}(x)$ for threshold $y>x^{\ast}$ is Lipschitz with respect to the variable $x$ in $[y, +\infty)$ and $C(x)=V_{x^{\ast}}(x)$ is Lipschitz with respect to the variable $x$ in any set $[w, \infty)$ with $w > x^{*}$.
\end{lemma}

\vspace{0.3cm}

\begin{lemma} \label{ThresholdStrategies-Section4-Lemma3}
The function $C(x)$ is continuous.
\end{lemma}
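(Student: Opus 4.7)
The plan is to decompose $[0,\infty)$ into $[0,x^{\ast})$, $\{x^{\ast}\}$, and $(x^{\ast},\infty)$ and exploit the structural results already available. On $[0,x^{\ast})$, identity \eqref{TheStochasticControlProblem-Section2-Equation10} exhibits $C$ as affine, so continuity is immediate. On $(x^{\ast},\infty)$, Lemma \ref{ThresholdStrategies-Section4-Lemma2} gives Lipschitz continuity on every set of the form $[w,\infty)$ with $w>x^{\ast}$, and since $(x^{\ast},\infty)$ is the union of such sets, $C$ is continuous there. Left-continuity at $x^{\ast}$ is also immediate from the affine formula, so the only real task is right-continuity, i.e.\ showing $\lim_{x\searrow x^{\ast}}C(x)=C(x^{\ast})$.

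For this, I would work from the representation \eqref{TheStochasticControlProblem-Section2-Equation12},
\begin{equation*}
C(x)=\mathbb{E}\left[\bigl(x^{\ast}-X_{\overline{\tau}(x)}^{x}+C(x^{\ast})\bigr)e^{-\delta\overline{\tau}(x)}\right],
\end{equation*}
where $X^{x}$ denotes the uncontrolled process starting from $x$ and $\overline{\tau}(x)=\min\{\tau_{i}:X_{\tau_{i}}^{x}\le x^{\ast}\}$. The key claim is that $\overline{\tau}(x)\to\tau_{1}$ and $X_{\overline{\tau}(x)}^{x}\to x^{\ast}Z_{1}$ almost surely as $x\searrow x^{\ast}$. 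Since $G_{Z}$ is supported in $(0,1)$, we have $Z_{1}<1$ almost surely; on such a sample path, $\bigl((x-x^{\ast})e^{r\tau_{1}}+x^{\ast}\bigr)Z_{1}\le x^{\ast}$ for all $x-x^{\ast}$ sufficiently small (with the threshold depending on the path), hence $\overline{\tau}(x)=\tau_{1}$ eventually, and both convergences follow at once.

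To pass to the limit, I would invoke dominated convergence using the uniform bound $0\le X_{\overline{\tau}(x)}^{x}\le x^{\ast}$ together with $e^{-\delta\overline{\tau}(x)}\le 1$, so the integrand is dominated by the constant $x^{\ast}+C(x^{\ast})$. The limit evaluates to
\begin{equation*}
\mathbb{E}\left[\bigl(x^{\ast}(1-Z_{1})+C(x^{\ast})\bigr)e^{-\delta\tau_{1}}\right]=\bigl((1-\mu)x^{\ast}+C(x^{\ast})\bigr)\frac{\lambda}{\lambda+\delta}=C(x^{\ast}),
\end{equation*}
the last equality being precisely the fixed-point relation \eqref{TheStochasticControlProblem-Section2-Equation8}--\eqref{TheStochasticControlProblem-Section2-Equation9}. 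The only (modest) technical point is the pathwise identification $\overline{\tau}(x)=\tau_{1}$ for small $x-x^{\ast}$, which rests entirely on $G_{Z}(\{1\})=0$; after that, the argument is routine bookkeeping.
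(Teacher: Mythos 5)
Your proposal is correct, and its first half (affine on $[0,x^{\ast})$, Lipschitz on $(x^{\ast},\infty)$ via Lemma \ref{ThresholdStrategies-Section4-Lemma2}, reduction to right-continuity at $x^{\ast}$) is exactly the paper's decomposition. The right-continuity step, however, is executed differently. The paper runs two uncontrolled processes from $x^{\ast}$ and $x^{\ast}+h$ under the threshold-$x^{\ast}$ strategy, introduces the event $A_{h}=\{Z_{1}\leq x^{\ast}/(x^{\ast}+he^{r\tau_{1}})\}$ on which both fall below $x^{\ast}$ at the first jump, shows $\mathbb{P}(A_{h}^{c})\to 0$, and bounds the nonnegative difference $C(x^{\ast})-C(x^{\ast}+h)$ by $\mathbb{E}\left[he^{r\tau_{1}}Z_{1}\mathbbm{1}_{\{A_{h}\}}e^{-\delta\tau_{1}}\right]+C(x^{\ast})\,\mathbb{P}(A_{h}^{c})$, which vanishes by bounded convergence; no limit of $C$ is ever computed, only an upper bound on the gap (monotonicity of $C$ supplies the lower bound). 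You instead pass to the limit directly in the representation \eqref{TheStochasticControlProblem-Section2-Equation12}: the pathwise observation that $\overline{\tau}(x)=\tau_{1}$ for $x-x^{\ast}$ small (valid since $Z_{1}<1$ a.s.) gives a.s. convergence of the integrand, dominated convergence gives the limit, and the fixed-point relation \eqref{TheStochasticControlProblem-Section2-Equation8}--\eqref{TheStochasticControlProblem-Section2-Equation9} identifies it as $C(x^{\ast})$. The probabilistic core is the same in both arguments --- your exceptional set is precisely the paper's $A_{h}^{c}$, and both rest on $G_{Z}$ placing no mass at $1$ --- but your route avoids the two-process comparison and the bookkeeping of the transfer difference on $A_{h}$, at the price of explicitly invoking \eqref{TheStochasticControlProblem-Section2-Equation12} and the independence of $\tau_{1}$ and $Z_{1}$ to evaluate the limiting expectation; the paper's bound-the-difference argument is slightly more robust in that it never needs the exact value of the limit, only that the gap is small.
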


\vspace{0.3cm}

The proofs of these lemmas are provided in Appendices \ref{ProofofLemma4.1}, \ref{ProofofLemma4.2} and \ref{ProofofLemma4.3}, respectively. On the other hand, the proof of Proposition \ref{ThresholdStrategies-Section4-Proposition1}, which states properties of $V_{y}$, is omitted as it is essentially identical to that of Lemma \ref{AnalysisofV:PropertiesandtheHamilton-Jacobi-BellmanEquation-Section3-Lemma1}.

\vspace{0.3cm}

\begin{proposition} \label{ThresholdStrategies-Section4-Proposition1}
The function $V_{y}$ satisfies $\lim_{x\rightarrow\infty}V_{y}(x)=0.$
\end{proposition}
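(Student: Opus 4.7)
Fix $y\geq x^{\ast}$. For $x\leq y$, Lemma \ref{ThresholdStrategies-Section4-Lemma1} gives $V_y(x)=V_y(y)+y-x$, so the nontrivial case is $x>y$. Under $\pi_x^y$, no transfer is made before the first time $\tau_y:=\inf\{t\geq 0:X_t^{\pi_x^y}<y\}$ that the free process falls below $y$; since capital is non-decreasing between jumps, $\tau_y$ is almost surely a jump time, and the transfer at $\tau_y$ has size $y-X_{\tau_y^-}Z_{N_{\tau_y}}\in[0,y]$, leaving the capital at $y$. By the strong Markov property of the controlled PDMP at $\tau_y$,
\begin{equation}
V_y(x)=\mathbb{E}_x\!\left[e^{-\delta\tau_y}\bigl(y-X_{\tau_y^-}Z_{N_{\tau_y}}\bigr)\right]+\mathbb{E}_x\!\left[e^{-\delta\tau_y}\right]V_y(y)\leq\bigl(y+V_y(y)\bigr)\,\mathbb{E}_x\!\left[e^{-\delta\tau_y}\right].
\end{equation}
Since $V_y(y)$ is finite by Lemma \ref{ThresholdStrategies-Section4-Lemma1}, it suffices to prove $\mathbb{E}_x[e^{-\delta\tau_y}]\to 0$ as $x\to\infty$.

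Let $T_y$ be the index of the jump at which capital first drops below $y$, so $\tau_y=\tau_{T_y}$. For every $n<T_y$ we have $X_{\tau_n}\geq y\geq x^{\ast}$, and the dynamics in between jumps imply $X_{\tau_{n+1}^-}\geq X_{\tau_n}=X_{\tau_n^-}Z_n$. Iterating from $X_{\tau_1^-}\geq x$ yields $X_{\tau_{T_y}^-}\geq x\prod_{i=1}^{T_y-1}Z_i$, and combined with $X_{\tau_{T_y}^-}Z_{T_y}<y$ this gives the path-wise inequality $\prod_{i=1}^{T_y}Z_i<y/x$. Define
\begin{equation}
N_x:=\min\!\left\{n\geq 1:\prod_{i=1}^n Z_i<\frac{y}{x}\right\},
\end{equation}
which is finite a.s.\ (for $x\geq y$) and depends only on $(Z_i)_{i\geq 1}$. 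The above inequality forces $T_y\geq N_x$, hence $\tau_y=\tau_{T_y}\geq\tau_{N_x}$ and $e^{-\delta\tau_y}\leq e^{-\delta\tau_{N_x}}$.

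Because $N_x$ is a function of $(Z_i)_{i\geq 1}$ alone, it is independent of the inter-arrival sequence $(\tau_i-\tau_{i-1})_{i\geq 1}$ of i.i.d.\ $\mathrm{Exp}(\lambda)$ variables. Decomposing on $\{N_x=n\}$,
\begin{equation}
\mathbb{E}\!\left[e^{-\delta\tau_{N_x}}\right]=\sum_{n=1}^{\infty}\mathbb{E}\!\left[e^{-\delta\tau_n}\right]\mathbb{P}(N_x=n)=\mathbb{E}\!\left[q^{N_x}\right],\qquad q:=\frac{\lambda}{\lambda+\delta}\in(0,1).
\end{equation}
Since the $Z_i\in(0,1)$ are i.i.d., $\sum_{i=1}^n\log Z_i\to-\infty$ a.s., so $\prod_{i=1}^n Z_i\to 0$ a.s.; in particular $N_x$ is nondecreasing in $x$ and $N_x\to\infty$ a.s. as $x\to\infty$. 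Therefore $q^{N_x}\to 0$ a.s., and dominated convergence ($q^{N_x}\leq 1$) yields $\mathbb{E}[q^{N_x}]\to 0$, which combined with the bound of the first paragraph gives $V_y(x)\to 0$.

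\textbf{Expected obstacle.} The delicate point is not the bound by $\mathbb{E}_x[e^{-\delta\tau_y}]$, but decoupling the continuous stopping time $\tau_y$ from the discrete count $T_y$: the path-wise comparison $\prod_{i=1}^{T_y}Z_i<y/x$ (which uses the monotone growth of the capital between jumps above $x^{\ast}$) is what allows replacement of the intractable $T_y$ by the tractable, inter-arrival-independent $N_x$, after which the Poisson calculation is routine.
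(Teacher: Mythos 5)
Your proof is correct. The paper does not prove this proposition directly: it refers to the proof of Lemma \ref{AnalysisofV:PropertiesandtheHamilton-Jacobi-BellmanEquation-Section3-Lemma1} in the appendix (the case $V_{x^{\ast}}=C$), which is said to carry over. Both arguments share the same core mechanism --- for the capital to fall below the threshold the running product $Z_{1}\cdots Z_{n}$ must drop below roughly $y/x$, while each jump contributes a discount factor $q=\lambda/(\lambda+\delta)$ --- but the executions differ. The paper decomposes over the index $k$ of the trapping jump, bounds the continuation cost by $C(0)$, includes $\{\overline{\tau}=\tau_{k}\}$ in an event $A_{k}$ controlled by the c.d.f.\ $F_{U_{k}}$ of $U_{k}=Z_{1}\cdots Z_{k}$, and passes to the limit term by term using the summable majorant $q^{k}$ and bounded convergence. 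You instead reduce everything to the single quantity $\mathbb{E}_{x}[e^{-\delta\tau^{y}}]$ via $V_{y}(x)\leq\bigl(y+V_{y}(y)\bigr)\mathbb{E}_{x}[e^{-\delta\tau^{y}}]$, then replace the intractable first-passage index $T_{y}$ by the $Z$-measurable index $N_{x}$ through the path-wise inequality $\prod_{i\leq T_{y}}Z_{i}<y/x$, getting $\mathbb{E}_{x}[e^{-\delta\tau^{y}}]\leq\mathbb{E}[q^{N_{x}}]\rightarrow0$ by dominated convergence since $N_{x}\rightarrow\infty$ a.s.; the independence of $(Z_{i})$ from the Poisson arrivals, which you invoke to compute $\mathbb{E}[e^{-\delta\tau_{N_{x}}}]=\mathbb{E}[q^{N_{x}}]$, is exactly the model assumption. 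Your route avoids the explicit $F_{U_{k}}$ estimates and the series manipulation, and your iteration $X_{\tau_{n+1}^{-}}\geq X_{\tau_{n}}$ (valid because pre-passage capital stays above $x^{\ast}$) is cleaner than the paper's induction, which carries an extra $-(k-1)x^{\ast}e^{r(\tau_{k}-\tau_{1})}$ correction; what the paper's version buys is a single statement for $C$ that is then quoted verbatim for every threshold. The only cosmetic caveat in yours is the event $\{\tau^{y}=\infty\}$, which is harmless under the convention $e^{-\delta\cdot\infty}=0$.
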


\vspace{0.3cm}

Let us define

\begin{align}
\mathcal{L}^{y}\mathcal{(}W)(x)& :=r(x-x^{\ast})W^{\prime}(x)-(\delta+\lambda)W(x)+\lambda\int_{y/x}^{1}W(x\cdot z)dG_{Z}(z) \\ \\ 
&+ \lambda\int_{0}^{y/x}\left(\left(y-x\cdot z\right)+ W(y)\right)dG_{Z}(z),
\label{ThresholdStrategies-Section4-Equation1}
\end{align}

and the associated IDE,

\begin{align}
\mathcal{L}^{y}\mathcal{(}W)(x)=0.
\label{ThresholdStrategies-Section4-Equation2}
\end{align}


We now show that the value function $V_{y}$ associated with the threshold transfer strategy that maintains the capital at or above level $y$ is the unique solution of Equation \eqref{ThresholdStrategies-Section4-Equation2} for $x>y$, subject to the value-matching condition at the threshold $W(y)=V_{y}(y)$ and with limit zero as the capital goes to infinity. The value-matching condition connects the continuation value before any transfer is required ($x>y$) with the value obtained after the threshold strategy restores the capital to the level $y$. The value $V_{y}(y)$ is therefore not prescribed in advance and should not be interpreted as a Dirichlet boundary condition. Rather, it is determined jointly with $W$ through this matching relation. When a closed-form solution of Equation \eqref{ThresholdStrategies-Section4-Equation2} is not available, this value will be approximated in the numerical examples using Monte Carlo simulations (as it is explained in Section \ref{GeneralCaseAnalysis:AbsenceofClosed-FormSolutions-Section6}). In Section \ref{Closed-FormSolutionsinaSpecialCase-Section5}, since $Z_{i\text{ }}$ is assumed to follow a particular case of the Beta distribution, a closed-form solution is available, which allows us to compute $V_{y}(y)$ explicitly (see Proposition \ref{Closed-FormSolutionsinaSpecialCase-Section5-Proposition1}). When the threshold is $y = x^{\ast }$, since $V_{x^{\ast }}(x)=C(x)$ this value can be obtained directly regardless the distribution of $Z_{i}$, as explained in Subsection \ref{TheStochasticControlProblem-Section2-Subsection22}.

\vspace{0.3cm}

\begin{proposition} \label{ThresholdStrategies-Section4-Proposition2}

Fix $y\geq x^{\ast }$, the value function $V_{y}$ associated with the threshold strategy at level $y$ is given by

\begin{align}
V_{y}(x)=%
\begin{cases}
\left( y-x\right) +V_{y}(y) \hspace{0.3cm} \text{for}\ 0<x\leq y, \\ 
W(x) \hspace{1.7cm} \text{for }x>y,%
\end{cases}%
\label{ThresholdStrategies-Section4-Equation3}
\end{align}

where $W$ is the unique classical solution of \eqref{ThresholdStrategies-Section4-Equation2} on $(y,\infty)$ satisfying $\lim_{x\to\infty}W(x)=0$ and the value-matching condition $W(y)=V_{y}(y)$.



If $y=x^{\ast }$ and $x=x^{\ast },$ then $V_{x^{\ast }}=C$ and in this case, although $C^{\prime }(x^{\ast })$ might not exist, the boundary value $C(x^{\ast })$ satisfies

\begin{align}
-(\delta +\lambda )C(x^{\ast })+\lambda \int_{0}^{1}\left( \left(x^{\ast}-x^{\ast}\cdot z\right) +C(x^{\ast })\right) dG_{Z}(z)=0.
\label{ThresholdStrategies-Section4-Equation55}
\end{align}

\end{proposition}

\begin{proof}

Fix a threshold level $y\geq x^{\ast}$. Let us first prove that $V_{y}$ is a classical solution of the IDE \eqref{ThresholdStrategies-Section4-Equation2}. Consider an initial capital $x\geq y $ such that $V_{y}^{\prime }(x)$ exists. Under the threshold strategy, no transfers are made before the first down-crossing below the threshold $y$ caused by a shock. Let $\tau_{1}$ be the first shock time. We denote by $X_{\tau_{1}}$ the uncontrolled capital level immediately after the shock and before any possible transfer. If the post-shock level satisfies $X_{\tau_1}>y$, the process continues without intervention, whereas if $X_{\tau_1}<y$, the threshold strategy immediately injects the amount $y-X_{\tau_1}$ and the controlled process is restored to the level $y$. Hence, by the dynamic programming principle, we have

\begin{align}
V_{y}(x)=\mathbb{E}\left[ V_{y}\left( X_{h\wedge \tau _{1}}\right)e^{-\delta \left( h\wedge \tau _{1}\right) }\right],
\label{ThresholdStrategies-Section4-Equation5}
\end{align}

for any $h>0$. Then, by (\ref{TheStochasticControlProblem-Section2-Equation7}), we get

\begin{align}
\begin{array}{lll}
0 & = & \lim_{h\rightarrow 0^{+}}\frac{\mathbb{E}\left[ V_{y}\left(X_{h\wedge \tau _{1}}\right) e^{-\delta \left( h\wedge \tau _{1}\right) }\right] -V_{y}(x)}{h} \\ 
& = & r\left( x-x^{\ast }\right) V_{y}^{\prime }(x)-(\lambda +\delta)V_{y}(x)+\lambda \int_{0}^{1}V_{y}(x\cdot z)dG_{Z}(z).
\end{array}
\label{ThresholdStrategies-Section4-Equation6}
\end{align}

Thus, at every point $x \in (y, \infty)$ where $V_{y}$ is differentiable, $V_{y}$ satisfies the IDE \eqref{ThresholdStrategies-Section4-Equation2}, and we can write
\begin{align}
V_{y}^{\prime }(x)=\frac{(\lambda +\delta )V_{y}(x)-\lambda\int_{0}^{1}V_{y}(x\cdot z)dG_{Z}(z)}{r\left( x-x^{\ast }\right)}.
\label{ThresholdStrategies-Section4-Equation7}
\end{align}

By Lemma \ref{ThresholdStrategies-Section4-Lemma2} and \ref{ThresholdStrategies-Section4-Lemma3}, the function $V_{y}(x)$ is absolutely continuous for $x \in [y,\infty )$ and differentiable in a full measure set in $[y,\infty )$. Define for $w>y$,

\begin{align}
g(w):=\frac{(\lambda +\delta )V_{y}(w)-\lambda \int_{0}^{1}V_{y}(w\cdot z)dG_{Z}(z)}{r\left( w-x^{\ast }\right) }.
\label{ThresholdStrategies-Section4-Equation8}
\end{align}

Since $V_{y}(x)$ is continuous in $x$, the function $g$ is continuous on $(y,\infty )$ and, if $y>x^{\ast}$, it is continuous at $w=y$ as well. Also, if $y>x^{\ast}$, absolute continuity of $V_{y}$ implies that we can write

\begin{align}
V_{y}(x)=\int_{y}^{x}g(w)dw+V_{y}(y),
\label{ThresholdStrategies-Section4-Equation9}
\end{align}

which shows that $V_{y}(x)$ is continuously differentiable and is a classical solution of the IDE on $[y,\infty)$.

If $y=x^{\ast }$, then $V_{x^{\ast }}(x)=C(x)$, the cost of social protection defined in Subsection \ref{TheStochasticControlProblem-Section2-Subsection22}. Although $g$ is not defined at $w=x^{\ast }$, the same argument applies on any interval $[x^{\ast}+\varepsilon ,x]$ for $x>x^{\ast }$ and any $0<\varepsilon <x-x^{\ast }$. Nevertheless, one can proceed as in the case $V_{y}(x)$ for $y>x^{\ast }$ yielding

\begin{align}
V_{x^{\ast }}(x)=C(x)=\int_{x^{\ast }+\varepsilon }^{x}g(w)dw+C(x^{\ast}+\varepsilon).
\label{ThresholdStrategies-Section4-Equation10}
\end{align}

Letting $\varepsilon \searrow 0$, this argument shows that $C(x)$ is continuously differentiable and a classical solution of the IDE in $(x^{\ast},\infty )$. Moreover, from \eqref{TheStochasticControlProblem-Section2-Equation9} we have

\begin{align}
-(\delta+\lambda)C\left( x^{\ast}\right) +\lambda\int_{0}^{1}\left( \left(x^{\ast}-x^{\ast}\cdot z\right) +C\left( x^{\ast}\right) \right) dG_{Z}(z)=0.
\label{ThresholdStrategies-Section4-Equation11}
\end{align}

We now prove the uniqueness result. Let $u$ be any classical solution of the IDE on $\left( y,\infty \right) $satisfying the value-matching condition $u(y)=V_{y}(y)$ and $\lim_{x\rightarrow \infty }u(x)=0$. Let us extend $u$ to $\left[ 0,y\right] $ as $u(z)=V_{y}(z)=z-y+V_{y}(y)$ for $0\leq z\leq y$. We are going to prove that $u=V_{y}$ on $\left(y,\infty \right) $ and hence that $V_{y}$ is the unique solution associated with these value-matching conditions. Fix $x>y$, consider the threshold strategy $\pi :=\pi _{x}^{y}$ and let

\begin{align}
\tau ^{y}=\inf \left\{ t\geq 0:X_{t}\leq y\right\}.
\label{ThresholdStrategies-Section4-Equation12}
\end{align}

Take $T>0$, using Lemma \ref{AnalysisofV:PropertiesandtheHamilton-Jacobi-BellmanEquation-Section3-Lemma3}, we obtain

\begin{align}
\mathbb{E}\left[e^{-\delta \left( T\wedge \tau ^{y}\right) }u(X_{T\wedge \tau^{y}}^{\pi })\right]-u(x)=\mathbb{E}\left[\int\nolimits_{0}^{T\wedge \tau ^{y}}\mathcal{L}^{y}(u)(X_{t^{-}})e^{-\delta t}dt-\mathbbm{1}_{\left\{ T\wedge \tau^{y}=\tau ^{y}\right\} }(y-X_{\tau ^{y}})e^{-\delta \left( T\wedge \tau^{y}\right)}\right].
\label{ThresholdStrategies-Section4-Equation13}
\end{align}

Since $\mathcal{L}^{y}(u)(x)=0$ for all $x>y$,

\begin{align}
u(x)=\mathbb{E}\left[e^{-\delta \left( T\wedge \tau ^{y}\right) }\left(u(X_{T\wedge \tau ^{y}}^{\pi })+\mathbbm{1}_{\left\{ T\wedge \tau^{y}=\tau ^{y}\right\} }(y-X_{\tau ^{y}})\right)\right].
\label{ThresholdStrategies-Section4-Equation14}
\end{align}

Using the value-matching condition $u(y)=V_{y}(y)$ and that $u(z)=V_{y}(z)$ for $z\leq y$,

\begin{align}
u(x)=\mathbb{E}\left[\mathbbm{1}_{\left\{ T\wedge \tau ^{y}=\tau^{y}\right\} }e^{-\delta \tau ^{y}}V_{y}(X_{\tau ^{y}})\right]+\mathbb{E}\left[\mathbbm{1}_{\left\{ T< \tau ^{y}\right\} }e^{-\delta T}u(X_{T})\right].
\label{ThresholdStrategies-Section4-Equation15}
\end{align}

Letting $T\rightarrow \infty $, and using bounded convergence (since $u$ is bounded and vanishes at infinity),

\begin{align}
\lim_{T\rightarrow \infty }\mathbb{E}\left[\mathbbm{1}_{\left\{ T<\tau ^{y}\right\} }e^{-\delta T}u(X_{T})\right] =0,
\label{ThresholdStrategies-Section4-Equation16}
\end{align}

and also

\begin{align}
\lim_{T\rightarrow \infty }\mathbb{E}\left[\mathbbm{1}_{\left\{ T\wedge \tau^{y}=\tau ^{y}\right\} }e^{-\delta \tau ^{y}}V_{y}(X_{\tau^{y}})\right]=\mathbb{E}\left[e^{-\delta \tau ^{y}}V_{y}(X_{\tau ^{y}})\right].
\label{ThresholdStrategies-Section4-Equation17}
\end{align}

Moreover, since no transfers are made under the threshold strategy prior to $\tau ^{y}$, we obtain

\begin{align}
\mathbb{E}\left[e^{-\delta \tau ^{y}}V_{y}(X_{\tau ^{y}})\right]=V_{y}(x).
\label{ThresholdStrategies-Section4-Equation18}
\end{align}

Therefore, $u(x)=V_{y}(x)$.
\end{proof}

\vspace{0.3cm}

Alternatively, one can also characterise $V_{y}$ in $[y,\infty)$ as a unique fixed-point of an operator, as established by the Proposition \ref{ThresholdStrategies-Section4-Proposition3}.

\vspace{0.3cm}

\begin{proposition}  \label{ThresholdStrategies-Section4-Proposition3}

The operator $\mathcal{T}$: $\mathcal{W}\rightarrow\mathcal{W}$ defined as

\begin{align}
\begin{array}
[c]{ccc}
T(W)(x) & = & \mathbb{E}\left[(y-X_{\tau_{1}}+W(y))\mathbbm{1}_{\left\{  X_{\tau_{1}}<y\right\}}e^{-\delta\tau_{1}}\right]  +\mathbb{E}\left[W\left(X_{\tau_{1}}\right) \mathbbm{1}_{\left\{  X_{\tau_{1}}\geq y\right\}  } e^{-\delta\tau_{1}}\right],
\end{array}
\label{ThresholdStrategies-Section4-Equation19}
\end{align}

where

\begin{align}
\mathcal{W=}\left\{  W:[y,\infty)\rightarrow\lbrack0,\infty)\text{ bounded and non-negative functions}\right\},
\label{ThresholdStrategies-Section4-Equation20}
\end{align}

with the norm $\left\Vert W\right\Vert =\sup_{[y,\infty)}W(x)$ is a contraction and has the function $V_{y}$ as the unique fixed-point.

\end{proposition}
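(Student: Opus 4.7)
The plan is to apply the Banach fixed-point theorem on the complete metric space $(\mathcal{W}, \|\cdot\|_\infty)$, with the bulk of the work being a clean contraction estimate. First I would verify $\mathcal{T}(\mathcal{W}) \subset \mathcal{W}$. Non-negativity is immediate because every term in the definition of $T(W)(x)$ is non-negative: $y - X_{\tau_1} \geq 0$ on the event $\{X_{\tau_1} < y\}$, and $W$ itself is non-negative. Boundedness follows from $(y - X_{\tau_1})\mathbbm{1}_{\{X_{\tau_1} < y\}} \leq y$ together with $W(y), W(X_{\tau_1}) \leq \|W\|$, giving $\|\mathcal{T}(W)\| \leq y + \|W\|$.

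Next I would establish the contraction. For $W_1, W_2 \in \mathcal{W}$, the linearity of $\mathcal{T}$ in $W$ yields
\begin{align}
\mathcal{T}(W_1)(x) - \mathcal{T}(W_2)(x) &= \mathbb{E}\!\left[(W_1(y)-W_2(y))\,\mathbbm{1}_{\{X_{\tau_1}<y\}}e^{-\delta\tau_1}\right] \\
&\quad + \mathbb{E}\!\left[(W_1(X_{\tau_1})-W_2(X_{\tau_1}))\,\mathbbm{1}_{\{X_{\tau_1}\geq y\}}e^{-\delta\tau_1}\right].
\end{align}
Bounding each factor by $\|W_1 - W_2\|$ and using that the two indicators partition $\Omega$, I obtain
\begin{align}
|\mathcal{T}(W_1)(x) - \mathcal{T}(W_2)(x)| \leq \|W_1 - W_2\|\,\mathbb{E}[e^{-\delta\tau_1}] = \|W_1 - W_2\|\,\frac{\lambda}{\lambda+\delta},
\end{align}
since $\tau_1$ is exponential with rate $\lambda$. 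As $\delta > 0$, the constant $\lambda/(\lambda+\delta) < 1$, so $\mathcal{T}$ is a contraction and by Banach's theorem admits a unique fixed point in $\mathcal{W}$.

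Finally I would identify this fixed point as $V_y$ restricted to $[y,\infty)$. Starting from $x \geq y$ under the threshold strategy $\pi_x^y$, no transfers occur on $[0,\tau_1)$ because the capital stays above $x^\ast$ and evolves deterministically by \eqref{TheStochasticControlProblem-Section2-Equation4}. At $t=\tau_1$, the strategy transfers exactly $y - X_{\tau_1}$ on $\{X_{\tau_1}<y\}$ and nothing on $\{X_{\tau_1}\geq y\}$, and then the controlled process restarts from $y$ or from $X_{\tau_1}$, respectively. Applying the strong Markov property at $\tau_1$ reproduces exactly the right-hand side of the definition of $T(V_y)(x)$, so $V_y = \mathcal{T}(V_y)$ on $[y,\infty)$ and uniqueness of the fixed point concludes the proof. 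The main conceptual obstacle is bookkeeping at the jump time $\tau_1$: one must distinguish pre-jump capital, post-jump capital, and post-transfer capital to ensure the Markov decomposition matches the two indicator terms in $T(W)$; once this is organised the argument is routine and the contraction constant drops out of the exponential clock.
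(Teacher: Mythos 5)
Your proof is correct and takes essentially the same route as the paper: the paper's argument is precisely the linearity-plus-indicator-partition estimate giving the contraction constant $\lambda/(\lambda+\delta)$, followed by the assertion that $T(V_{y})=V_{y}$. You merely make explicit two steps the paper leaves implicit, namely that $\mathcal{T}$ maps $\mathcal{W}$ into itself and the first-jump (strong Markov) decomposition showing $V_{y}$ is indeed a fixed point.
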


\begin{proof}

We have the following,

\begin{align}
T(W_{1})(x)-T(W_{2})(x)  &  =\left(  W_{1}(y)-W_{2}(y)\right)  \mathbb{E}\left[e^{-\delta\tau_{1}}\mathbbm{1}_{\{X_{\tau_{1}}<y\}}\right] \\ \\
&  +\mathbb{E}\left[\left(W_{1}(X_{\tau_{1}})-W_{2}(X_{\tau_{1}})\right)\mathbbm{1}_{\left\{  X_{\tau_{1}}\geq y\right\}}e^{-\delta\tau_{1}}\right] \\ \\
&  \leq\left\Vert W_{1}-W_{2}\right\Vert \mathbb{E}\left[e^{-\delta\tau_{1}}\right]  =\left\Vert W_{1}-W_{2}\right\Vert \left(\frac{\lambda}{\lambda+\delta}\right).
\label{ThresholdStrategies-Section4-Equation21}
\end{align}

Thus, it is a contraction and has a unique fixed-point. Since $T\left(V_{y}\right)=V_{y}$ we have the result.
\end{proof}

In the following remark, we describe the procedure used to identify the optimal value function in the examples presented in the subsequent sections. We begin by determining the optimal threshold within the class of threshold strategies. The corresponding value function is then taken as a candidate for optimality, whose optimality is subsequently assessed through a verification argument based on the Hamilton--Jacobi--Bellman (HJB) equation. While this procedure confirms optimality in all the examples considered in this paper, we do not establish in general that the optimal strategy must have a threshold structure.

\begin{remark} \label{ThresholdStrategies-Section4-Remark1}

For any threshold $y \geq x^\ast$, consider $V_y(x)$ as the value function associated with the threshold strategy with level $y$. We first identify a candidate optimal threshold strategy by minimising the cost within the class of threshold strategies. If several minimisers exist, we choose the smallest one. Hence,


\begin{align}
	V_{y^\star}(x)=\inf_{y\geq x^\ast} V_y(x), \qquad x\geq 0.
\end{align}



This minimisation provides a candidate value function for the original control problem. Its optimality among all admissible strategies is then established by verifying the conditions of Theorem \ref{AnalysisofV:PropertiesandtheHamilton-Jacobi-BellmanEquation-Section3-Theorem1}. More precisely, from Proposition \ref{ThresholdStrategies-Section4-Proposition2}, this reduces to verifying that

\begin{align}
	\mathcal{L}(V_{y^{\star }})(x)=-r(x-x^{\ast })-\delta (V_{y^{\star}}(y^{\star })+y^{\star }-x)+\lambda x(1-\mu )\geq 0,
\end{align}

for all $x\in \lbrack x^{\ast },y^{\star })$ and $V_{y^{\star }}^{\prime}(x)\geq -1$ for all $x\in \lbrack y^{\star },\infty )$.

\end{remark}

\section{Closed-Form Solutions for Threshold Strategies in the Special Case $Z_{i}\sim Beta(\alpha,1)$} \label{Closed-FormSolutionsinaSpecialCase-Section5}

In this section, we derive explicit expressions for the value function associated with the threshold strategies introduced in Section \ref{ThresholdStrategies-Section4}, under the assumption that the remaining proportion of capital follows a particular case of the Beta distribution; that is, when $Z_{i}\sim Beta(\alpha,1)$, case for which the distribution function is $G_{Z}(z)=z^{\alpha}$ and the p.d.f. is $g_{Z}(z) = \alpha z^{\alpha -1}$ for $0<z<1$, where $\alpha >0$. More precisely, in this case we were able to derive an explicit solution to the IDE \eqref{ThresholdStrategies-Section4-Equation2} in accordance with the criteria established in Proposition \ref{ThresholdStrategies-Section4-Proposition2}. Moreover, using this explicit solution we can obtain the value-matching condition $V_{y}(y)$ for threshold levels $y > x^{\ast}$ (recall that, when the threshold is $ y = x^{\ast}$, $V_{x^{\ast}}(x^{\ast}) = C(x^{\ast})$ is known regardless of the distribution of $Z_{i}$).

Although these solutions are derived for a specific class of admissible strategies, as noted earlier, they serve as natural candidates for the global solution of the HJB Equation \eqref{AnalysisofV:PropertiesandtheHamilton-Jacobi-BellmanEquation-Section3-Equation4}. Furthermore, following the procedure described in Remark \ref{ThresholdStrategies-Section4-Remark1}, after optimising the threshold parameter $y$ (i.e., after calculating $y^\star$), we verify whether the corresponding candidate value function satisfies the conditions of the verification theorem (Theorem \ref{AnalysisofV:PropertiesandtheHamilton-Jacobi-BellmanEquation-Section3-Theorem1}). In all the examples considered in this section, these conditions are satisfied, and therefore the value function associated with the optimal threshold strategy is indeed the optimal value function among all admissible strategies.


In the following proposition, we derive a closed-form expression for the value function associated with a threshold strategy in the aforementioned case, assuming the remaining proportions of capital are $Beta(\alpha,1)$---distributed. Remark \ref{Closed-FormSolutionsinaSpecialCase-Section5-Remark1} then considers the particular case in which the threshold coincides with the poverty line, i.e., $y = x^{\ast}$, so that $V_{x^{\ast}}(x)=C(x)$. Moreover, the remark shows how $C(x)$ in this special case can also be derived using the Gerber-Shiu function.

\vspace{0.3cm}

\begin{proposition} \label{Closed-FormSolutionsinaSpecialCase-Section5-Proposition1}

Consider a household capital process defined as in \eqref{TheStochasticControlProblem-Section2-Equation4} and \eqref{TheStochasticControlProblem-Section2-Equation5}, with initial capital $x\ge 0$, capital growth rate $r$, intensity $\lambda > 0$ and remaining proportions of capital with distribution $Beta(\alpha, 1)$ where $\alpha >0$; that is, $Z_{i}\sim Beta(\alpha, 1)$. The value function corresponding to a threshold strategy with level $y > x^{\ast}$ is given by

\begin{align}
V_{y}(x)=
\begin{cases}
\left(y-x\right)\ + \mathcal{F}_{0}\left(\frac{x^{*}}{y}\right)\overline{A}(y)
\hspace{1.4cm} \textit{if} & 0\leq x\leq y,\\ \\
\mathcal{F}_{0}\left(\frac{x^{*}}{x}\right)\overline{A}\left(y\right)\left(\frac{y}{x}\right)^{b} \hspace{2.1cm} \textit{if} & \hspace{-0.1cm}  x>y,
\end{cases}
\label{Closed-FormSolutionsinaSpecialCase-Section5-Equation1}
\end{align}


where $\delta \ge 0$ is the force of interest for valuation, $\mathcal{F}_{0}(z)=\ _{2} F_{1}\left(b, b - c + 1; b - a + 1; z\right)$, $\mathcal{F}_{1}(z)= \ _{2} F_{1}\left(b + 1, b - c + 1; b - a + 1; z\right)$, ${ }_{2} F_{1}\left(\cdot \right)$ is Gauss\rq s Hypergeometric Function as defined in \eqref{Closed-FormSolutionsinaSpecialCase-Section5-Equation6}, \scriptsize $a=\left(-(\delta+\lambda-\alpha r)-\sqrt{(\delta+\lambda-\alpha r)^{2}+4r\alpha\delta}\right)/2r$\normalsize, \scriptsize$b=\left(-(\delta+\lambda-\alpha r)+\sqrt{(\delta+\lambda-\alpha r)^{2}+4r\alpha\delta}\right)/2r$\normalsize, \scriptsize $c=\alpha$\normalsize \ and $\overline{A}\left(y\right)  = \frac{\lambda y}{\left(\alpha + 1\right)\left[\delta\mathcal{F}_{0}\left(\frac{x^{*}}{y}\right) + \frac{rb\left(y-x^{*}\right)}{y}\mathcal{F}_{1}\left(\frac{x^{*}}{y}\right)\right]}$.


\end{proposition}

\begin{proof}

For $x > y$, under the assumption $Z_{i}\sim Beta(\alpha, 1)$, the IDE \eqref{ThresholdStrategies-Section4-Equation2} can be written such that

\vspace{0.3cm}

\begin{align}
    \begin{split}
        0 = r(x-x^{*})W'(x)
        -(\delta + \lambda)W(x)
        +\lambda \left(\frac{y}{x}\right)^{\alpha} \left[\frac{y + (\alpha + 1) W(y)}{\alpha + 1}\right]+\lambda \int_{y/x}^{1}W(x\cdot z)\alpha z^{\alpha-1}dz.\\ \\
        \label{Closed-FormSolutionsinaSpecialCase-Section5-Equation2}
    \end{split}
\end{align}

\vspace{-0.3cm}

Differentiating both sides of \eqref{Closed-FormSolutionsinaSpecialCase-Section5-Equation2} with respect to $x$, and performing some algebraic manipulations, yields the following second-order ordinary differential equation (ODE):

\vspace{0.3cm}

\begin{align}
    \begin{split}
        0 = r(x^{2}-xx^{*})W''(x)+\left[(r(1+\alpha)-\delta-\lambda)x-r\alpha x^{*}\right]W'(x)-\alpha \delta W(x).
        \label{Closed-FormSolutionsinaSpecialCase-Section5-Equation3}
    \end{split}
\end{align}

\vspace{0.3cm}
Let us consider the change of variable $t:=\frac{x}{x^{*}}$ and define $f(t):=W(t x^{\ast})$, Equation \eqref{Closed-FormSolutionsinaSpecialCase-Section5-Equation3} reduces to Gauss\rq s hypergeometric differential equation \citep{Book:Slater1960}

\vspace{0.3cm}

\begin{align}
    t(1-t)\cdot f''(t) + [c - (1+a+b)t] f'(t) - ab f(t) =0,
    \label{Closed-FormSolutionsinaSpecialCase-Section5-Equation4}
\end{align}

\vspace{0.3cm}

for \footnotesize$a=\left(-(\delta+\lambda-\alpha r)-\sqrt{(\delta+\lambda-\alpha r)^{2}+4r\alpha\delta}\right)/2r$\normalsize, \footnotesize$b=\left(-(\delta+\lambda-\alpha r)+\sqrt{(\delta+\lambda-\alpha r)^{2}+4r\alpha\delta}\right)/2r$\normalsize \ and \footnotesize$c= \alpha$\normalsize, with regular singular points at $t=0, 1, \infty$ (corresponding to $x=0,x^{*},\infty$, respectively). A general solution of \eqref{Closed-FormSolutionsinaSpecialCase-Section5-Equation4} in the neighborhood of the singular point $t=\infty$ is given by

\vspace{0.3cm}

\begin{align}
    f(t):=W(x)= &A_{1}\left(\frac{x^{\ast}}{x}\right)^{a} { }_{2} F_{1}\left(a, a-c+1 ; a-b+1 ; \frac{x^{\ast}}{x}\right)\\&+A_{2}\left(\frac{x^{\ast}}{x}\right)^{b} { }_{2} F_{1}\left(b, b-c+1 ; b-a+1 ; \frac{x^{\ast}}{x}\right),
    \label{Closed-FormSolutionsinaSpecialCase-Section5-Equation5}
\end{align}
\normalsize

\vspace{0.3cm}

for arbitrary constants $A_{1},A_{2} \in \mathbb {R}$ (see for example, Equations (15.5.7) and (15.5.8) of \cite{Book:Abramowitz1964}). Here,

\vspace{0.3cm}

\begin{align}
    { }_{2} F_{1}(a, b ; c ; z)=\sum_{n=0}^{\infty} \frac{(a)_{n}(b)_{n}}{(c)_{n}} \frac{z^{n}}{n !}
    \label{Closed-FormSolutionsinaSpecialCase-Section5-Equation6}
\end{align}

is Gauss\rq s hypergeometric function and $(a)_{n}=\frac{\Gamma(a+n)}{\Gamma(n)}$ denotes the Pochhammer symbol \citep{Book:Seaborn1991}.

The constants $A_1$ and $A_2$ are obtained from the boundary conditions at $y$ and at infinity. The condition $\lim_{x\to\infty} W(x) = 0$ implies $A_{1}=0$. Evaluating \eqref{Closed-FormSolutionsinaSpecialCase-Section5-Equation2} at $x=y$ yields

\vspace{0.3cm}

\begin{align}
    W'\left(y^{+}\right) = \frac{1}{r\left(y - x^{*}\right)}\left(\delta W(y) - \frac{\lambda y}{\alpha + 1}\right).
    \label{Closed-FormSolutionsinaSpecialCase-Section5-Equation7}
\end{align}

\vspace{0.3cm}

Then, using the differential properties of Gauss’s hypergeometric function, namely $\tfrac{d}{dz} { }_{2} F_{1}(a, b ; c ; z) = \tfrac{ab}{c} { }_{2} F_{1}(a + 1, b + 1; c + 1; z)$, we obtain the following,

\begin{align}
&A_{2}\left(-\frac{b}{y}\left(\frac{x^{\ast}}{y}\right)^{b}{ }_{2} F_{1}\left(b, b - c + 1; b - a + 1; \frac{x^{\ast}}{y}\right) \right. \\& \left. - \left(\frac{x^{\ast}}{y}\right)^{b} \frac{bx^{\ast}\left(b - c + 1\right)}{\left(b - a + 1\right) y^{2}} { }_{2} F_{1}\left(b + 1, b - c + 2; b - a + 2; \frac{x^{\ast}}{y}\right) \right) \\ & = \frac{1}{r\left(y - x^{\ast}\right)}\left(A_{2} \delta \left(\frac{x^{\ast}}{y}\right)^{b} { }_{2} F_{1}\left(b, b - c + 1 ; b - a + 1 ; \frac{x^{\ast}}{y}\right) - \frac{\lambda y}{\alpha + 1}\right).
\label{Closed-FormSolutionsinaSpecialCase-Section5-Equation8}
\end{align}

Hence, solving for $A_{2}$, it yields,

\begin{align}
A\left(y\right) := A_{2} & = \frac{\lambda y \left(\frac{y}{x^{*}}\right)^{b}}{\left(\alpha + 1\right)\left[\delta\mathcal{F}_{0}\left(\frac{x^{*}}{y}\right) + \frac{rb\left(y-x^{*}\right)}{y}\mathcal{F}_{1}\left(\frac{x^{*}}{y}\right)\right]},
\label{Closed-FormSolutionsinaSpecialCase-Section5-Equation9}
\end{align}


where $\mathcal{F}_{0}(z):=\ _{2} F_{1}\left(b, b - c + 1; b - a + 1; z\right)$ and $\mathcal{F}_{1}(z):= \ _{2} F_{1}\left(b + 1, b - c + 1; b - a + 1; z\right)$. Then, defining $\overline{A}\left(y\right):=A\left(y\right)\left(\frac{x^{*}}{y}\right)^{b}$ yields the value function given by \eqref{Closed-FormSolutionsinaSpecialCase-Section5-Equation1}.
\end{proof}

\begin{remark} \label{Closed-FormSolutionsinaSpecialCase-Section5-Remark1}

We now consider the particular case in which the threshold level coincides with the poverty line, i.e., $y = x^{\ast}$. In this case, the value function is: $V_{x^{\ast}}(x) = C(x)$. That is, the value function is equal to the cost of social protection, introduced in Subsection \ref{TheStochasticControlProblem-Section2-Subsection22}. Hence, setting $y = x^{\ast}$ in \eqref{Closed-FormSolutionsinaSpecialCase-Section5-Equation1} yields,

\vspace{0.3cm}

\begin{align}
V_{x^{\ast}}(x) = C\left(x\right)  =\left\{
\begin{array}
[c]{ccc}%
\left(  x^{\ast}-x\right)  +\frac{\lambda x^{\ast}}{\left(  \alpha+1\right)\delta} & \hspace{0.7cm} \textit{if} & 0\leq x\leq x^{\ast},\\ \\
\frac{\lambda x^{\ast}\mathcal{F}_{0}\left(\frac{x^{*}}{x}\right)}{\left(  \alpha+1\right)\delta \mathcal{F}_{0}(1) }{\left(  \frac{x^{\ast}}{x}\right)  ^{b}} & \hspace{0.5cm} \textit{if} & x>x^{\ast},
\end{array}
\right.
\label{Closed-FormSolutionsinaSpecialCase-Section5-Equation10}
\end{align}

\vspace{0.3cm}

where $\mathcal{F}_{0}(z)=\ _{2} F_{1}\left(b, b - c + 1; b - a + 1; z\right)$, \footnotesize $ a=\left(-(\delta+\lambda-\alpha r)-\sqrt{(\delta+\lambda-\alpha r)^{2}+4r\alpha\delta}\right)/2r$, $b=\left(-(\delta+\lambda-\alpha r)+\sqrt
{(\delta+\lambda-\alpha r)^{2}+4r\alpha\delta}\right)/2r$ \normalsize and \footnotesize $c=\alpha$\normalsize. 

Note that one can also obtain Equation \eqref{Closed-FormSolutionsinaSpecialCase-Section5-Equation10} for $x>x^{*}$ by means of the Gerber-Shiu expected discounted penalty function, recently derived by \cite{Article:Flores-Contro2025}. Indeed, for $x>x^{*}$, the cost of social protection $C(x)$ is given by

\begin{align}
C\left(x\right)= \mathbbm{E}\left[\mid X_{\overline{\tau}}-x^{*}\mid e^{-\delta \overline{\tau}};\overline{\tau}<\infty\right] + C(x^{*}) \cdot \mathbbm{E}\left[e^{-\delta\overline{\tau}};\overline{\tau}<\infty\right],
\label{Closed-FormSolutionsinaSpecialCase-Section5-Equation11}
\end{align}

where $\mathbbm{E}\left[e^{-\delta \overline{\tau}};\overline{\tau}<\infty\right]$ is equivalent to $\mathbbm{E}\left[e^{-\delta \overline{\tau}} \mathbbm{1}_{\{\overline{\tau}<\infty\}} \right]$. Under the assumption of $Beta(\alpha,1)$---distributed remaining proportions of capital; that is, $Z_{i}\sim Beta(\alpha,1)$, Equation \eqref{Closed-FormSolutionsinaSpecialCase-Section5-Equation11} yields to \eqref{Closed-FormSolutionsinaSpecialCase-Section5-Equation10}, as $\mathbbm{E}\left[\mid X_{\overline{\tau}}-x^{*}\mid e^{-\delta \overline{\tau}};\overline{\tau}<\infty\right]=x^{*}/\left(\alpha + 1\right) \cdot \mathbbm{E}\left[e^{-\delta \overline{\tau}};\overline{\tau}<\infty\right]$ (see, for example, Section 5 of \cite{Article:Flores-Contro2025}) and $\mathbbm{E}\left[e^{-\delta \overline{\tau}};\overline{\tau}<\infty\right]$ is the Laplace transform of the trapping time, given by Equation (10) from \cite{Article:Flores-Contro2025}.

\end{remark}

\begin{figure}[H]
  	\centering
        \begin{subfigure}[b]{0.32\linewidth}
  	\includegraphics[width=4.5cm, height=4.5cm]{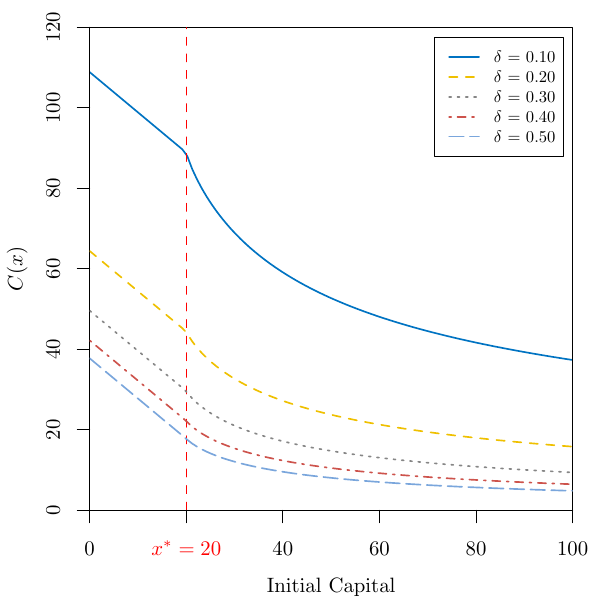}
	\caption{}
  	\label{Closed-FormSolutionsinaSpecialCase-Section5-Figure1-a}
	\end{subfigure}
 	 \hspace{0.01cm}
         \begin{subfigure}[b]{0.32\linewidth}
 	 \includegraphics[width=4.5cm, height=4.5cm]{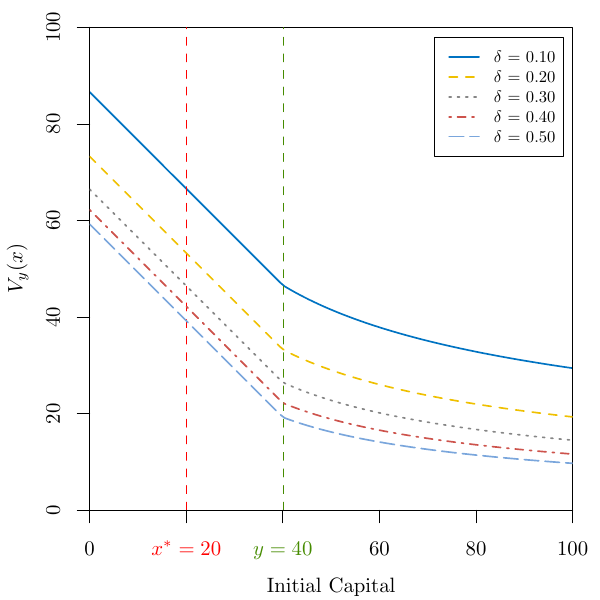}
	 \caption{}
  	\label{Closed-FormSolutionsinaSpecialCase-Section5-Figure1-b}
	\end{subfigure}
  	 \hspace{0.01cm}
          \begin{subfigure}[b]{0.32\linewidth}
 	 \includegraphics[width=4.5cm, height=4.5cm]{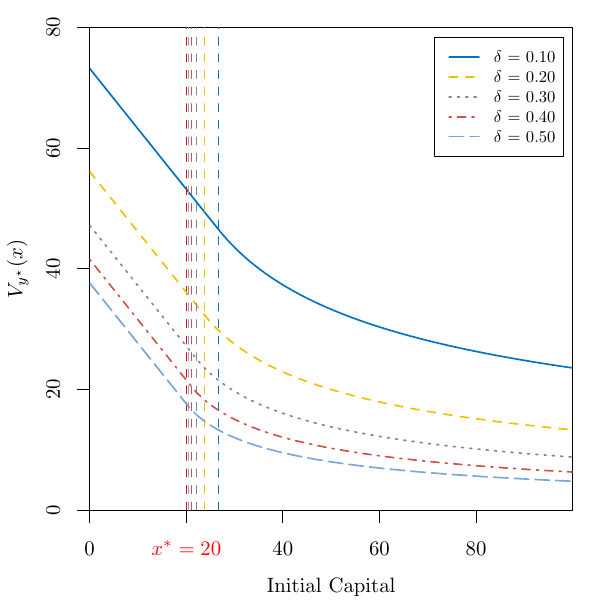}
	 \caption{}
  	\label{Closed-FormSolutionsinaSpecialCase-Section5-Figure1-c}
	\end{subfigure}
	 \\ \vspace{0.1cm}
         \begin{subfigure}[b]{0.32\linewidth}
 	 \includegraphics[width=4.5cm, height=4.5cm]{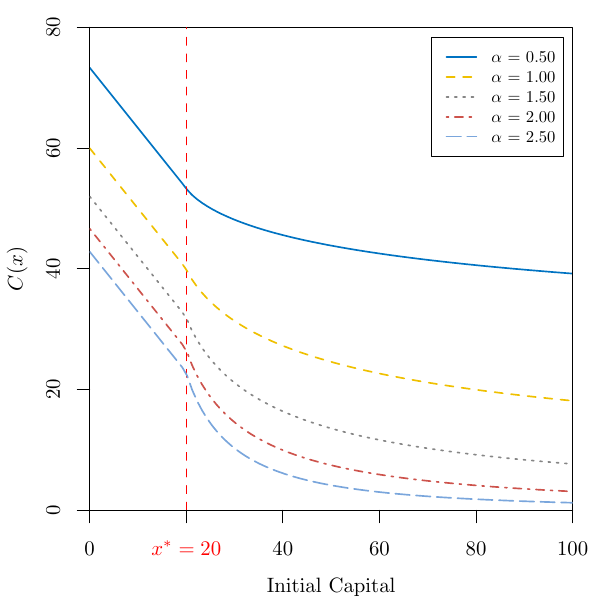}
	 \caption{}
  	\label{Closed-FormSolutionsinaSpecialCase-Section5-Figure1-d}
	\end{subfigure}
 	 \hspace{0.01cm}
         \begin{subfigure}[b]{0.32\linewidth}
 	 \includegraphics[width=4.5cm, height=4.5cm]{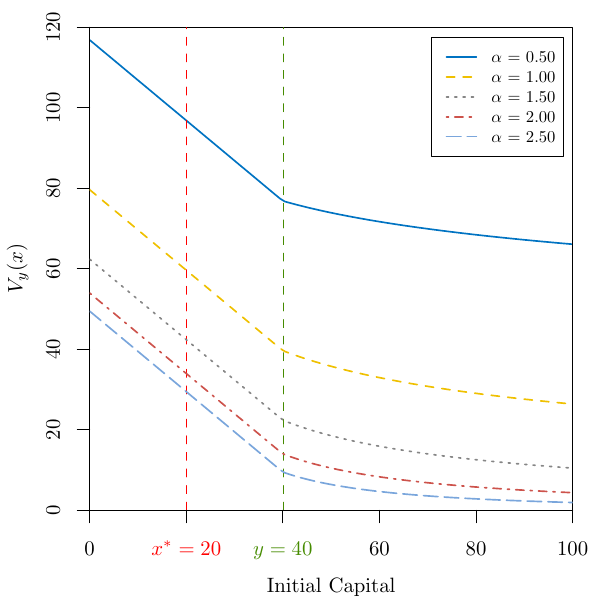}
	 \caption{}
  	\label{Closed-FormSolutionsinaSpecialCase-Section5-Figure1-e}
	\end{subfigure}
  	\hspace{0.01cm}
         \begin{subfigure}[b]{0.32\linewidth}
 	 \includegraphics[width=4.5cm, height=4.5cm]{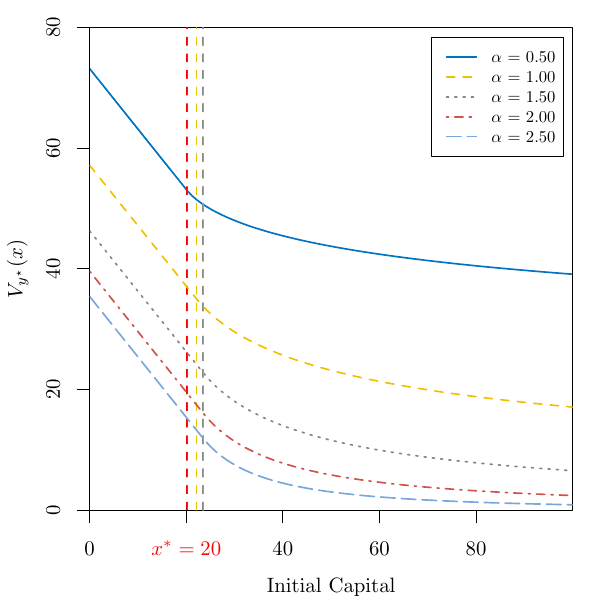}
	 \caption{}
  	\label{Closed-FormSolutionsinaSpecialCase-Section5-Figure1-f}
	\end{subfigure}
	 \\ \vspace{0.1cm}
\caption{The first row (a)--(c) shows the cost of social protection and the value function of a threshold strategy when $Z_{i} \sim Beta(1.25, 1)$, $a = 0.10$, $b = 3$, $c = 0.40$, $\lambda = 1$, $x^{*} = 20$ for $\delta = 0.10, 0.20, 0.30, 0.40, 0.50$. The second row (d)--(f) displays the cost of social protection and the value function of a threshold strategy when $Z_{i} \sim Beta(\alpha, 1)$, $a = 0.10$, $b = 3$, $c = 0.40$, $\lambda = 1$, $\delta = 0.25$, $x^{*} = 20$, $y=40$ for $\alpha = 0.50, 1.00, 1.50, 2.00, 2.50$. Panels (a) and (d) display $C(x)$ corresponding to $V_{y}(x)$ with threshold $y=20$, while (b) and (e) show $V_{y}(x)$ with threshold $y=40$. Similarly, panels (c) and (f) show $V_{y^{\star}}(x)$ with the optimal threshold $y^{\star}$.}
\label{Closed-FormSolutionsinaSpecialCase-Section5-Figure1}
\end{figure}

In the context of Remark \ref{ThresholdStrategies-Section4-Remark1}, we therefore restrict attention to the class of threshold strategies and seek the optimal threshold within this class. Letting $h(y):=V_{y}(x^{\ast})$, the candidate optimal threshold $y^{\star}$ is obtained as a solution to the first-order condition $h^{\prime}(y)= 0$, using that in this case we have the formula for this derivative from the closed form solution \eqref{Closed-FormSolutionsinaSpecialCase-Section5-Equation1}. Since this condition cannot be solved analytically, the solution is computed numerically for $y^{\star}$. This procedure yields a candidate value function $V_{y^{\star}}(x)$ for optimality. Upon verifying that the conditions stated in Remark \ref{ThresholdStrategies-Section4-Remark1} are satisfied, we find that, in all examples considered in this section, the value function associated to threshold $y^{\star}$ fulfills these conditions. Therefore, we conclude that $V_{y^{\star}}(x)$ is optimal both within the class of threshold strategies and over the set of all admissible strategies.

\begin{figure}[H]
	\begin{subfigure}[b]{\linewidth}
	\centering
  		\includegraphics[width=7.5cm, height=7.5cm]{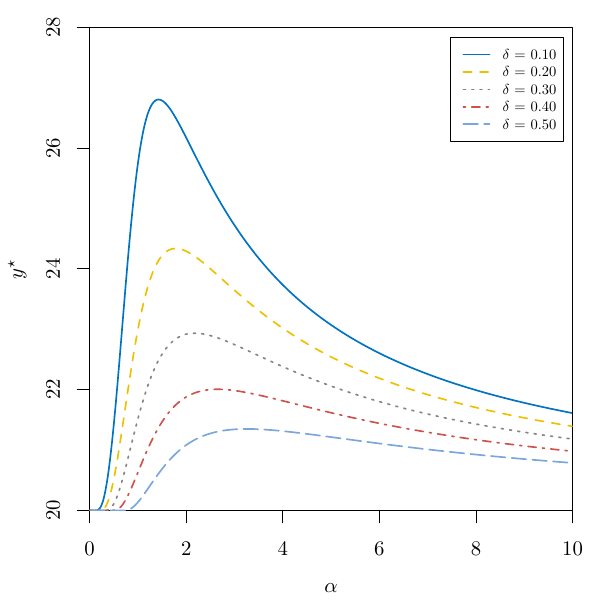}
  		\label{Closed-FormSolutionsinaSpecialCase-Section5-Figure2}
	\end{subfigure}
	\caption{The optimal threshold $y^{\star}$ when $Z_{i} \sim Beta(\alpha, 1)$, $a = 0.10$, $b = 3$, $c = 0.40$, $\lambda = 1$, $x^{*} = 20$ for different values of the discount rate $\delta$.}
	\label{Closed-FormSolutionsinaSpecialCase-Section5-Figure2}
\end{figure}

Figure \ref{Closed-FormSolutionsinaSpecialCase-Section5-Figure1} presents some examples of value functions corresponding to the class of threshold strategies with different threshold levels, under the assumption that $Z_{i}\sim Beta(\alpha,1)$. Specifically, we consider: (i) Figures \ref{Closed-FormSolutionsinaSpecialCase-Section5-Figure1-a} and \ref{Closed-FormSolutionsinaSpecialCase-Section5-Figure1-d}, where the threshold coincides with the poverty line, $y = x^{\ast}$, for which the value function is given by \eqref{Closed-FormSolutionsinaSpecialCase-Section5-Equation10}; (ii) Figures \ref{Closed-FormSolutionsinaSpecialCase-Section5-Figure1-b} and \ref{Closed-FormSolutionsinaSpecialCase-Section5-Figure1-e}, where the threshold exceeds the poverty line, $y > x^{\ast}$, for which the value function is given by \eqref{Closed-FormSolutionsinaSpecialCase-Section5-Equation1}; and (iii) Figures \ref{Closed-FormSolutionsinaSpecialCase-Section5-Figure1-c} and \ref{Closed-FormSolutionsinaSpecialCase-Section5-Figure1-f} where the threshold is set to the optimal level, $y = y^{\star}$, with the corresponding value function again given by \eqref{Closed-FormSolutionsinaSpecialCase-Section5-Equation1} (with the optimal threshold in these examples verified to satisfy the conditions stated in Remark \ref{ThresholdStrategies-Section4-Remark1}). The functions are clearly decreasing, in agreement with the earlier results. Furthermore, as expected, the functions decrease with respect to both the discount rate $\delta$ and the shape parameter $\alpha$. This reflects the fact that a lower discount rate assigns greater weight to future transfers, thereby leading to higher costs. Similarly, a larger value of $\alpha$ corresponds to a higher expected remaining proportion of capital (higher $\mu$), which in turn reduces the total amount of discounted transfers. A comparison of Figures \ref{Closed-FormSolutionsinaSpecialCase-Section5-Figure1-a} and \ref{Closed-FormSolutionsinaSpecialCase-Section5-Figure1-b} with \ref{Closed-FormSolutionsinaSpecialCase-Section5-Figure1-c}, as well as Figures \ref{Closed-FormSolutionsinaSpecialCase-Section5-Figure1-d} and \ref{Closed-FormSolutionsinaSpecialCase-Section5-Figure1-e} with \ref{Closed-FormSolutionsinaSpecialCase-Section5-Figure1-f}, highlights the potential cost savings for the government when adopting the strategy corresponding to the optimal threshold $y^{\star}$. Note that the optimal threshold in Figure \ref{Closed-FormSolutionsinaSpecialCase-Section5-Figure1-c} is $y^{\star} = 26.66, 23.82, 22.16, 21.10, 20.41$, corresponding to the case $\delta = 0.10, 0.20, 0.30, 0.40, 0.50$, respectively (for completeness, the optimal thresholds shown in Figure \ref{Closed-FormSolutionsinaSpecialCase-Section5-Figure1-f} for $\alpha = 0.50, 1.00, 1.50, 2.00, 2.50$ are $y^{\star} = 20.27, 22.16, 23.32, 23.56, 23.42$, respectively). As noted previously, the conditions described in Remark \ref{ThresholdStrategies-Section4-Remark1} were verified to hold, and hence we conclude that the resulting value function $V_{y^{\star}}(x)$ is the optimal among all admissible strategies.

Figure \ref{Closed-FormSolutionsinaSpecialCase-Section5-Figure2} shows the sensitivity of the optimal threshold $y^{\star}$ with respect to the discount factor $\delta$ and the shape parameter $\alpha$. This figure shows that the optimal threshold $y^{\star}$ is not monotone in $\alpha$. This can be interpreted as follows: when the expected remaining proportion of capital is low (small $\alpha$), the optimal threshold $y^{\star}$ becomes less relevant, since the household will almost surely fall into poverty after the first capital loss. However, as the expected remaining proportion of capital increases (with higher $\alpha$), the choice of an optimal threshold gains importance: the household is less likely to fall into poverty after the first loss and thus has the possibility of rebuilding its capital. It is important to note, however, that another effect arises when $\alpha$ becomes sufficiently large: capital losses have a smaller impact on the household, which in turn may reduce the importance of selecting an optimal threshold, as the losses themselves become less significant.


\section{General Case Analysis: Absence of Closed-Form Solutions} \label{GeneralCaseAnalysis:AbsenceofClosed-FormSolutions-Section6}

In general, it is not straightforward to derive explicit formulas for the value function of a threshold strategy when more general cases are considered (e.g., when the distribution of the remaining proportion of capital differs from the $Beta(\alpha,1)$ specification, or when a microinsurance cover is taken into account, as discussed in Section \ref{Microinsurance-Section7}). Monte Carlo simulation is an alternative way to produce estimates and is particularly useful when dealing with cases for which closed-form formulas are not available. In this section, we introduce a simple and efficient methodology that allows to generate fairly accurate approximations for these value functions. This approach will also allow us to estimate the optimal threshold, which, as previously emphasised, represents a key aspect of the problem.

\subsection{Methodology} \label{GeneralCaseAnalysis:AbsenceofClosed-FormSolutions-Section6-Subsection1}

We begin by computing $V_{y}(y)$. Let us consider trajectories of $X_t^{\pi_y}$, with initial capital $y$, and the sequence $(\tau_{n}, Z_{n})$, up to the time $\tau^y=\min \left\{t: X_t^{\pi_y}<y\right\}$. For each trajectory $\omega_i$, we find $\left(\tau^y\left(\omega_i\right), J_y\left(\omega_i\right)\right)$, with $J_y\left(\omega_i\right)=y\left(\omega_i\right)-X_{\tau^y}^{\pi_y}\left(\omega_i\right)$. Thus, we have the following,

\vspace{0.3cm}

\begin{align}
V_{y}(y) \approx \frac{1}{N} \sum_{i=1}^N\left[J_y\left(\omega_i\right) e^{-\delta \tau^y\left(\omega_i\right)}+ V_{y}(y)e^{-\delta\tau^y\left(\omega_i\right)}\right],
\label{GeneralCaseAnalysis:AbsenceofClosed-FormSolutions-Section6-Equation1}
\end{align}

\vspace{0.3cm}

and therefore, one can approximate $V_{y}(y)$ as follows,

\vspace{0.3cm}

\begin{align}
V_{y}(y) \approx \frac{\frac{1}{N} \sum\limits_{i=1}^NJ_y\left(\omega_i\right) e^{-\delta \tau^y\left(\omega_i\right)}}{1-\frac{1}{N} \sum\limits_{i=1}^N e^{-\delta \tau^y\left(\omega_i\right)}}.
\label{GeneralCaseAnalysis:AbsenceofClosed-FormSolutions-Section6-Equation2}
\end{align}

\vspace{0.3cm}

Then, we compute $V_{y}(y)$ for $x>y$. We focus on the trajectories of $X_t^{\pi_y}$, with initial capital $x>y$, and the sequence $\left(\tau_{n}, Z_{n}\right)$, up to the time $\tau^y=\min \left\{t: X_t^{\pi_y}<y\right\}$. Similarly, for each trajectory $\omega_{i}$, we find $\left(\tau^y\left(\omega_i\right), J_y\left(\omega_i\right)\right)$, with $J_y\left(\omega_i\right)=y\left(\omega_i\right)-X_{\tau^y}^{\pi_y}\left(\omega_i\right)$. Hence, we can approximate $V_{y}(y)$ as follows,

\vspace{0.3cm}

\begin{align}
V_{y}(y) \approx \frac{1}{N} \sum\limits_{i=1}^N\left[J_y\left(\omega_i\right) e^{-\delta \tau^y\left(\omega_i\right)}+ V_{y}(y)e^{-\delta\tau^y\left(\omega_i\right)}\right],
\label{GeneralCaseAnalysis:AbsenceofClosed-FormSolutions-Section6-Equation3}
\end{align}

\vspace{0.3cm}

where the function $V_{y}(y)$ denotes the approximation \eqref{GeneralCaseAnalysis:AbsenceofClosed-FormSolutions-Section6-Equation2}. 

\begin{remark}
	If $\tau^{y}$ is infinite, the value function equals zero, since no capital transfer is required. In the methodology described in Section \ref{GeneralCaseAnalysis:AbsenceofClosed-FormSolutions-Section6-Subsection1}, we truncate time: if $\tau^{y}>T$, for sufficiently large $T$, we treat $\tau^{y}$ as infinite. This approximation is reasonable due to the discounting effect.
\end{remark} 

\begin{example} \label{GeneralCaseAnalysis:AbsenceofClosed-FormSolutions-Section6-Example1}
	Let us consider the situation in which the remaining proportion of capital follows a Kumaraswamy distribution; that is,  $Z_{i} \sim Kumaraswamy(p, q)$, case for which the c.d.f is $G_{Z}(z) = 1-(1-z^{p})^{q}$ and the p.d.f. is $g_{z}(z)=pqz^{p-1}(1-z^{p})^{q-1}$ for $0<z<1$, where $p>0$ and $q>0$. Clearly, under this assumption, deriving an analytical solution to the IDE \eqref{ThresholdStrategies-Section4-Equation2} becomes significantly more difficult; hence, alternative methods as the Monte Carlo simulation methodology presented in this section are required to derive the value function of a threshold strategy.
\end{example}

Figure \ref{GeneralCaseAnalysis:AbsenceofClosed-FormSolutions-Section6-Figure1} displays the cost of social protection and the value function of a threshold strategy for the case described in Example \ref{GeneralCaseAnalysis:AbsenceofClosed-FormSolutions-Section6-Example1}. That is, it presents some examples of value functions corresponding to the class of threshold strategies with different threshold levels, under the assumption that $Z_{i} \sim Kumaraswamy(p, q)$. Just as for Section \ref{Closed-FormSolutionsinaSpecialCase-Section5}, Figures \ref{GeneralCaseAnalysis:AbsenceofClosed-FormSolutions-Section6-Figure1-a} and \ref{GeneralCaseAnalysis:AbsenceofClosed-FormSolutions-Section6-Figure1-d} correspond to the case where the threshold coincides with the poverty line, $y = x^{\ast}$. Figures \ref{GeneralCaseAnalysis:AbsenceofClosed-FormSolutions-Section6-Figure1-b} and \ref{GeneralCaseAnalysis:AbsenceofClosed-FormSolutions-Section6-Figure1-e} illustrate the case where the threshold exceeds the poverty line, $y > x^{\ast}$. Finally, Figures \ref{GeneralCaseAnalysis:AbsenceofClosed-FormSolutions-Section6-Figure1-c} and \ref{GeneralCaseAnalysis:AbsenceofClosed-FormSolutions-Section6-Figure1-f} correspond to the optimal threshold, $y = y^{\star}$. Again, as for Section \ref{Closed-FormSolutionsinaSpecialCase-Section5}, a candidate for optimal threshold in these examples has been verified to satisfy the conditions stated in Remark \ref{ThresholdStrategies-Section4-Remark1}. To determine $y^{\star}$, we again consider the first-order condition $h^{\prime}(y) = 0$. However, unlike in Section \ref{Closed-FormSolutionsinaSpecialCase-Section5}, where this derivative can be computed analytically, here it is evaluated numerically due to the absence of closed-form expressions. The 99\% confidence interval for the functions is included for reference. The figures exhibit the same behavior as those presented in the examples of Section \ref{Closed-FormSolutionsinaSpecialCase-Section5}. It is worth noting, however, that Figures \ref{GeneralCaseAnalysis:AbsenceofClosed-FormSolutions-Section6-Figure1-d}–\ref{GeneralCaseAnalysis:AbsenceofClosed-FormSolutions-Section6-Figure1-f} display higher costs than Figures \ref{Closed-FormSolutionsinaSpecialCase-Section5-Figure1-d}–\ref{Closed-FormSolutionsinaSpecialCase-Section5-Figure1-f}. This result is not unexpected, since the expected value of the remaining proportion of capital modelled with the Kumaraswamy distribution ($\mu^{{\scaleto{\text{ {\fontfamily{qcr}\selectfont KUMARASWAMY}}}{2.5pt}}} = \left(q\cdot \Gamma\left(1+1/p\right)\cdot\Gamma\left(q\right)\right)/\Gamma\left(1+1/p+q\right)$) used in this example is lower than that obtained with the Beta distribution ($\mu^{{\scaleto{\text{ {\fontfamily{qcr}\selectfont BETA}}}{2.5pt}}} = \alpha/(\alpha + 1)$) used in Section \ref{Closed-FormSolutionsinaSpecialCase-Section5}. The optimal thresholds in Figures \ref{GeneralCaseAnalysis:AbsenceofClosed-FormSolutions-Section6-Figure1-c} (for $\delta = 0.10, 0.20, 0.30, 0.40, 0.50$) and \ref{GeneralCaseAnalysis:AbsenceofClosed-FormSolutions-Section6-Figure1-f} (for $p = 0.50, 1.00,$ $1.50, 2.00, 2.50$) are given by $y^{\star} = 29.28, 25.73, 23.79, 21.45, 20$ and $y^{\star} = 20, 20, 20,$$21.18, 23.85$, respectively. As noted previously, we also verify (numerically) that the conditions stated in Remark \ref{ThresholdStrategies-Section4-Remark1} are satisfied for all examples considered in this section. This proves that $V_{y^{\star}}(x)$ is optimal both within the class of threshold strategies and over the set of all admissible strategies. On the other hand, Figure \ref{GeneralCaseAnalysis:AbsenceofClosed-FormSolutions-Section6-Figure2} illustrates the sensitivity of the optimal threshold $y^{\star}$ with respect to the discount factor $\delta$ and the parameter $p$. The pattern of sensitivity observed here is consistent with the behavior reported in Section \ref{Closed-FormSolutionsinaSpecialCase-Section5}.

\begin{figure}[H]
  	\centering
        \begin{subfigure}[b]{0.32\linewidth}
  	\includegraphics[width=4.5cm, height=4.5cm]{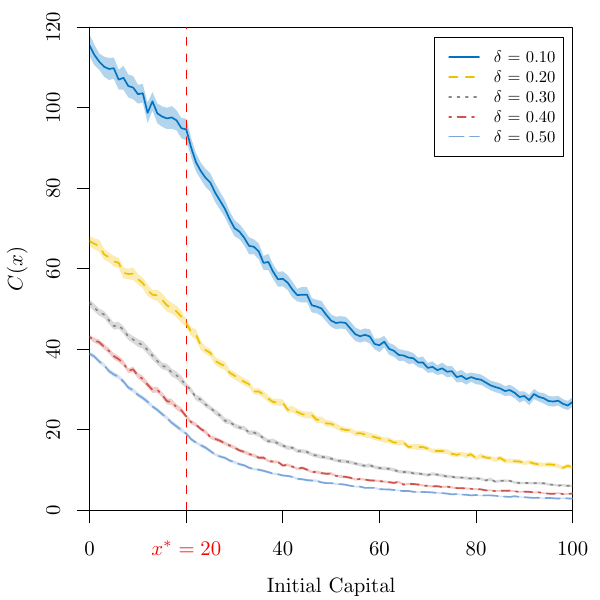}
	\caption{}
  	\label{GeneralCaseAnalysis:AbsenceofClosed-FormSolutions-Section6-Figure1-a}
	\end{subfigure}
 	 \hspace{0.01cm}
         \begin{subfigure}[b]{0.32\linewidth}
 	 \includegraphics[width=4.5cm, height=4.5cm]{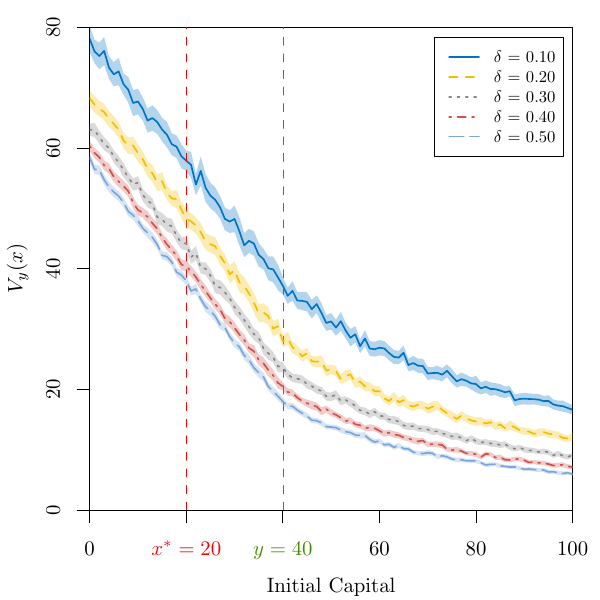}
	 \caption{}
  	\label{GeneralCaseAnalysis:AbsenceofClosed-FormSolutions-Section6-Figure1-b}
	\end{subfigure}
  	 \hspace{0.01cm}
          \begin{subfigure}[b]{0.32\linewidth}
 	 \includegraphics[width=4.5cm, height=4.5cm]{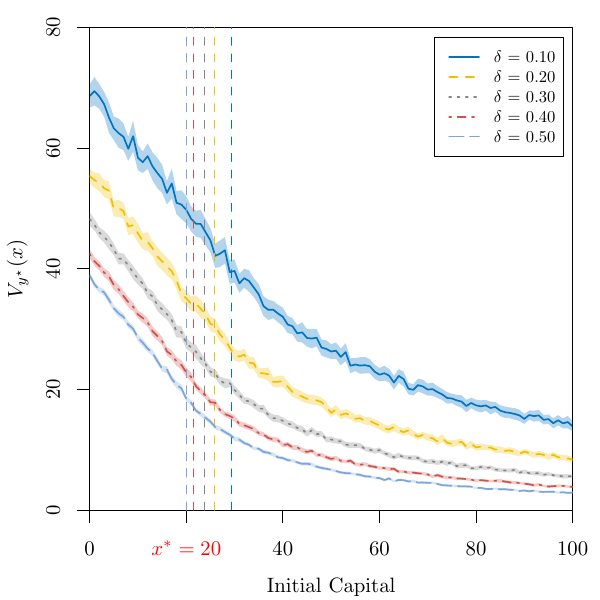}
	 \caption{}
  	\label{GeneralCaseAnalysis:AbsenceofClosed-FormSolutions-Section6-Figure1-c}
	\end{subfigure}
	 \\ \vspace{0.1cm}
         \begin{subfigure}[b]{0.32\linewidth}
 	 \includegraphics[width=4.5cm, height=4.5cm]{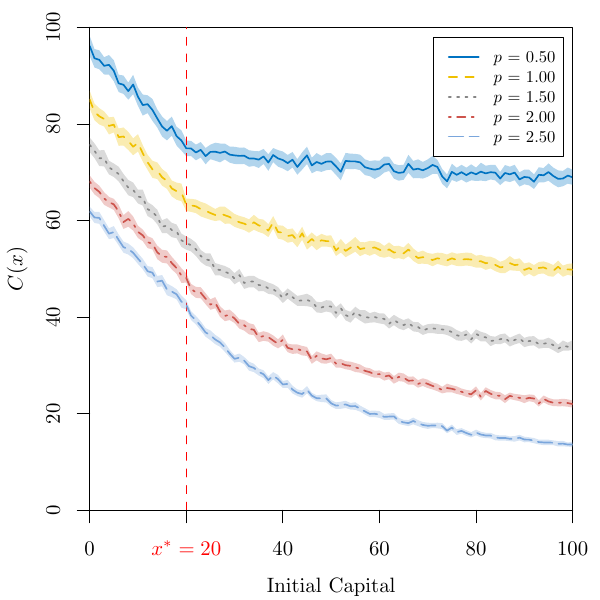}
	 \caption{}
  	\label{GeneralCaseAnalysis:AbsenceofClosed-FormSolutions-Section6-Figure1-d}
	\end{subfigure}
 	 \hspace{0.01cm}
         \begin{subfigure}[b]{0.32\linewidth}
 	 \includegraphics[width=4.5cm, height=4.5cm]{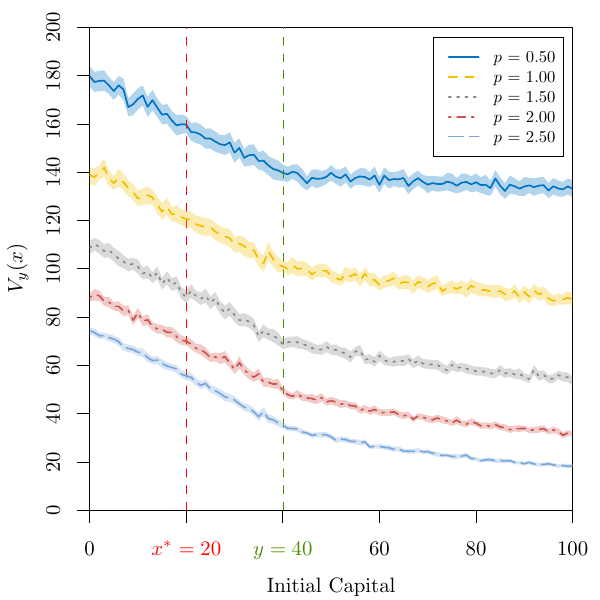}
	 \caption{}
  	\label{GeneralCaseAnalysis:AbsenceofClosed-FormSolutions-Section6-Figure1-e}
	\end{subfigure}
  	\hspace{0.01cm}
         \begin{subfigure}[b]{0.32\linewidth}
 	 \includegraphics[width=4.5cm, height=4.5cm]{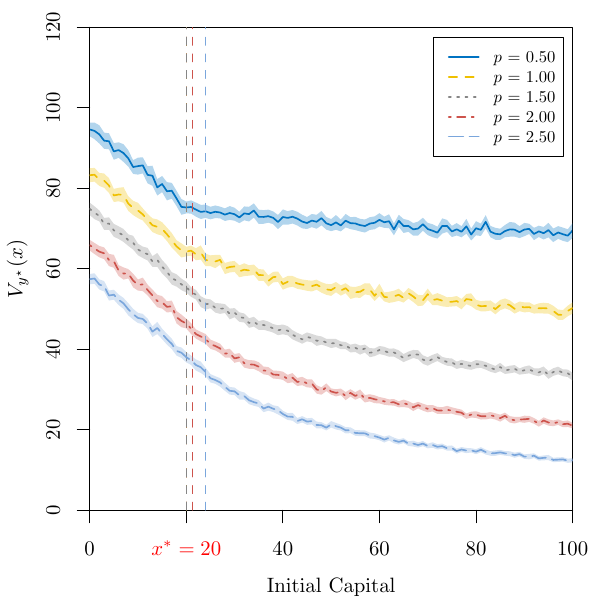}
	 \caption{}
  	\label{GeneralCaseAnalysis:AbsenceofClosed-FormSolutions-Section6-Figure1-f}
	\end{subfigure}
	 \\ \vspace{0.1cm}
	\caption{The first row (a)--(c) shows the cost of social protection and the value function of a threshold strategy when $Z_{i} \sim Kumaraswamy(3, 4)$, $a = 0.10$, $b = 3$, $c = 0.40$, $\lambda = 1$, $x^{*} = 20$ for $\delta = 0.10, 0.20, 0.30, 0.40, 0.50$. The second row (d)--(f) displays the cost of social protection and the value function of a threshold strategy when $Z_{i} \sim Kumaraswamy(p, 4)$, $a = 0.10$, $b = 3$, $c = 0.40$, $\lambda = 1$, $\delta = 0.25$, $x^{*} = 20$, $y=40$ for $p = 0.50, 1.00, 1.50, 2.00, 2.50$. Panels (a) and (d) display $C(x)$ corresponding to $V_{y}(x)$ with threshold $y=20$, while (b) and (e) show $V_{y}(x)$ with threshold $y=40$. Similarly, panels (c) and (f) show $V_{y^{\star}}(x)$ with the optimal threshold $y^{\star}$.}
	\label{GeneralCaseAnalysis:AbsenceofClosed-FormSolutions-Section6-Figure1}
\end{figure}

\begin{figure}[H]
	\begin{subfigure}[b]{\linewidth}
\centering
  		\includegraphics[width=7.5cm, height=7.5cm]{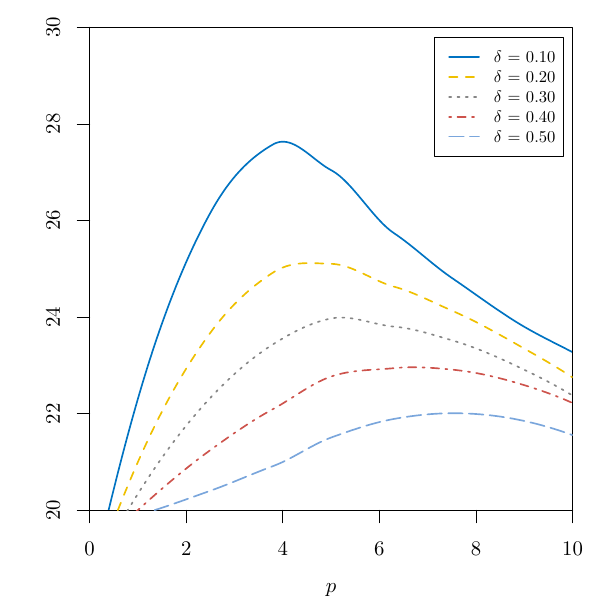}
  		\label{GeneralCaseAnalysis:AbsenceofClosed-FormSolutions-Section6-Figure2}
	\end{subfigure}
	\caption{The optimal threshold $y^{\star}$ when $Z_{i} \sim Kumaraswamy(p, 4)$, $a = 0.10$, $b = 3$, $c = 0.40$, $\lambda = 1$, $x^{*} = 20$ for different values of the discount rate $\delta$.}
	\label{GeneralCaseAnalysis:AbsenceofClosed-FormSolutions-Section6-Figure2}
\end{figure}

\section{Microinsurance} \label{Microinsurance-Section7}

We consider microinsurance as a complementary instrument within social protection strategies (see, for example, \cite{Book:Churchill2012}). This section examines its role alongside CT programmes by considering three types of coverage: (i) proportional, (ii) excess-of-loss (XL), and (iii) total-loss.

Let $R:[0,1]\rightarrow\lbrack0,1]$ be the retained loss function, satisfying $0\leq R(u)\leq u$ and $R(0)=0$. This function represents the portion of the loss borne by the household when there is a loss of $u=1-z\in\lbrack0,1]$ per unit of capital. Consequently, following a loss, the capital of a household changes from $X_{\tau_{i}}$ to $\left(1-R(u)\right)\cdot X_{\tau_{i}}$ after the loss. The insurer calculates the premium rate using the \textit{expected value principle}. That is,

\vspace{0.3cm}

\begin{align}
p_{R}=\left(1+\gamma\right)\cdot \lambda \cdot \mathbb{E}\left[1-Z-R(1-Z)\right],
\label{Microinsurance-Section7-Equation1}
\end{align}

\vspace{0.3cm}

where $\gamma>0$ is the safety loading per unit of capital. The critical capital is now given by

\vspace{0.3cm}

\begin{align}
x^{\ast R}=\left(\frac{b}{b-p_{R}}\right)x^{\ast}\geq x^{\ast},
\label{Microinsurance-Section7-Equation2}
\end{align}

\vspace{0.3cm}

since $x^{\ast}=I^{\ast}/b$ and $x^{\ast R}=I^{\ast}/\left(b-p_{R}\right)$, where $I^{\ast}$ represents the critical income and, to ensure that poverty does not occur with certainty, we further assume that $b > p_{R}$ (see \cite{Article:Kovacevic2011}). In the framework considered here, the insurance contract with coverage $R$ is fixed before the implementation of the transfer policy. Hence, the coverage level and the corresponding premium are not decision variables in the optimisation problem.

The increase of the critical capital from $x^\ast$ to $x^{\ast R}$ reflects the fact that the household must remain above the critical income level after paying the insurance premium. Therefore, households with capital levels satisfying

\vspace{0.3cm}

\begin{align}
x^\ast<X_t<x^{\ast R}
\end{align}

\vspace{0.3cm}

are above the original critical level, but their surplus income is not sufficient to afford the fixed insurance contract $R$ while keeping their net income above $I^\ast$. Under the microinsurance policy considered here, the transfer mechanism is adjusted to cover this additional region: households are supported up to the higher level $x^{\ast R}$, making the prescribed insurance contract affordable.

The payment of the premium also modifies the growth rate of the household capital, which becomes

\vspace{0.3cm}

\begin{align}
r^{R}=(1-a)\cdot (b-p_{R})\cdot c=(1-a)\cdot b\left(\frac{b-p_{R}}{b}\right)\cdot c=r\left(\frac{b-p_{R}}{b}\right)<r,
\label{Microinsurance-Section7-Equation3}
\end{align}

\vspace{0.3cm}

since $r=(1-a)\cdot b\cdot c$. Taking this into account, the optimal problem of capital lump-sum transfers with microinsurance can be viewed as the problem without insurance coverage, discussed in previous sections, but now considering $r^{R}<r$ instead of $r$, the critical capital $x^{\ast R}>x^{\ast}$ instead of $x^{\ast}$, and the cumulative distribution function of the remaining proportion of capital $G^{R}_{W}\leq G_{Z}$ instead of $G_{Z}$. Here, $G^{R}_{W}$ denotes the c.d.f. of the remaining proportion of capital, $W=1-R(1-Z)$, retained by the household after a loss. In this case, the infinitesimal generator of the process is given by

\vspace{0.3cm}

\begin{align}
A^{R}(f)(x) & = r^{R}(x-x^{\ast R})^{+}f^{\prime}(x)-\lambda f(x)+\lambda\int_{0}^{1}f\left(  x \cdot (1-R(1-z)\right)  dG_{Z}(z)\\ \\
& = r^{R}(x-x^{\ast R})^{+}f^{\prime}(x)-\lambda f(x)+\lambda\int_{0}^{1}f\left(  x\cdot w\right)  dG^{R}_{W}(w),
\label{Microinsurance-Section7-Equation4}
\end{align}

\vspace{0.3cm}

where

\vspace{0.3cm}

\begin{align}
G^{R}_{W}(w)=\mathbbm{P}(W\leq w)=\mathbbm{P}(1-R(1-Z)\leq w). 
\label{Microinsurance-Section7-Equation5}
\end{align}

\vspace{0.3cm}

Let us call: $h(z):=1-R(1-z)$. This function satisfies $h(z)\geq z$. In the case that the function $h$: $[0,1]\rightarrow\lbrack1-R(1),1]$ is invertible and non-decreasing, and calling $w=h(z)$, if $w\geq1-R(1)$, it yields

\vspace{0.3cm}

\begin{align}
G^{R}_{W}(w)=\mathbbm{P}(W\leq w)=\mathbbm{P}(1-R(1-Z)\leq w)=\mathbbm{P}(Z\leq h^{-1}(w)).
\label{Microinsurance-Section7-Equation6}
\end{align}

\vspace{0.3cm}

Hence,

\vspace{0.3cm}

\begin{align}
G^{R}_{W}(w)=\left\{
\begin{array}
[c]{ccc}
G_{Z}(h^{-1}(w)) & \textit{if} & 1-R(1)\leq w\leq1,\\
0 & \textit{if} & w<1-R(1).
\end{array}
\right.
\label{Microinsurance-Section7-Equation7}
\end{align}

\vspace{0.3cm}

If $h$ is non-invertible we will obtain $G^{R}_{W}$ in each special case.

This formulation allows us to investigate whether the increase in the intervention threshold from $x^\ast$ to $x^{\ast R}$ might, in some cases, be compensated by the benefits generated by the insurance contract, namely the modified growth rate $r^R$ and the reduction of future losses described by the post-insurance loss distribution $G_W^R$.

\subsection{Proportional Microinsurance} \label{Microinsurance-Section7-Subsection1}

In the proportional case, let $\eta$ denote the fraction of the household\rq s capital lost after a proportional loss of $u$. That is, $R(u)=\eta u$, where $\eta=0$ and $\eta=1$ mean total and no insurance cover, respectively (with $R(1)=\eta$). The premium rate per unit of capital in this case is given by

\vspace{0.3cm}

\begin{align}
p_{R}=\left(1+\gamma\right)\cdot\lambda \cdot \mathbb{E}\left[1-Z-\eta(1-Z)\right]=\left(  1+\gamma\right)\cdot \lambda \cdot(1-\eta) \cdot (1-\mathbb{E}\left[Z\right]).
\label{Microinsurance-Section7-Equation8}
\end{align}

\vspace{0.3cm}

Thus, we have,

\vspace{0.3cm}

\begin{align}
h(z)=1-R(1-z)=1-\eta(1-z)=w,
\label{Microinsurance-Section7-Equation9}
\end{align}

\vspace{0.3cm}

and therefore, $h$ is invertible and non-decreasing with

\vspace{0.3cm}

\begin{align}
h^{-1}(w)=\frac{1}{\eta}\left(w+\eta-1\right).
\label{Microinsurance-Section7-Equation10}
\end{align}

\vspace{0.3cm}

Then, from \eqref{Microinsurance-Section7-Equation7} we get

\vspace{0.3cm}

\begin{align}
G^{R}_{W}(w)=\left\{
\begin{array}
[c]{ccc}%
G_{Z}\left(\frac{1}{\eta}\left(w+\eta-1\right)\right) & \textit{if} & 1-\eta\leq w\leq1,\\
0 & \textit{if} & w<1-\eta,
\end{array}
\right.
\label{Microinsurance-Section7-Equation11}
\end{align}

\vspace{0.3cm}

which is continuous if $G_{Z}$ continuous. Moreover, we also have the following,

\vspace{0.3cm}

\begin{align}
\int_{0}^{1}f\left(x\cdot w\right)dG_{W}^{R}(w)=\int_{1-\eta}^{1}f\left(x\cdot w\right) dG_{Z}\left(\frac{1}{n}\left(w+\eta-1\right)\right).
\label{Microinsurance-Section7-Equation12}
\end{align}

\vspace{0.3cm}

\begin{example}

We consider a household that faces capital losses, where the remaining proportion of capital follows a $Beta(\alpha,1)$ distribution; that is, $Z_{i}\sim Beta(\alpha,1)$. This corresponds to the case discussed in Section \ref{Closed-FormSolutionsinaSpecialCase-Section5}. In addition, we now assume that the household acquires proportional microinsurance coverage. Under this setting, the value function associated with a threshold strategy (together with its optimal threshold) can be estimated via the Monte Carlo procedure described in detail in Section \ref{GeneralCaseAnalysis:AbsenceofClosed-FormSolutions-Section6}. Specifically, let $Y\sim Unif[0,1]$. Then $Z=Y^{1/\alpha}$ follows a $Beta(\alpha,1)$ distribution. Under proportional microinsurance coverage, the remaining proportion of capital is given by 

\vspace{0.3cm}

\begin{align}
W=1-R(1-Z)=1-\eta(1-Z).
\label{Microinsurance-Section7-Equation13}
\end{align}

\vspace{0.3cm}

Hence, $1-\eta \leq W\leq1$, and can equivalently be written as 

\vspace{0.3cm}

\begin{align}
W=1-\eta(1-Y^{1/\alpha}).
\label{Microinsurance-Section7-Equation14}
\end{align}

\vspace{0.3cm}

Therefore, in the Monte Carlo simulations, we now consider the sequence $(\tau_{n},W_{n})$, where $W_{n}$ is given by \eqref{Microinsurance-Section7-Equation14}.

Figure \ref{Microinsurance-Section7-Figure1} presents the value function of a threshold strategy with the optimal threshold, $y^{\star}$, for a household with proportional microinsurance coverage. Figures \ref{Microinsurance-Section7-Figure1-a} and \ref{Microinsurance-Section7-Figure1-b} show that the value function increases with both a higher fraction of the household\rq s capital lost after a proportional loss $\eta$ and a higher loading factor $\gamma$, respectively. This result is expected, as lower insurance coverage (i.e., larger values of $\eta$, where $\eta = 1$ indicates no insurance coverage) requires governments to inject larger amounts of capital. Similarly, higher values of $\gamma$ increase the premium rate \eqref{Microinsurance-Section7-Equation8}, which in turn reduces the household\rq s capital growth rate \eqref{Microinsurance-Section7-Equation3} and raises the critical capital level \eqref{Microinsurance-Section7-Equation2}, thereby increasing the need for government injections.

Although lower insurance coverage reduces the premiums paid by households (thereby reducing the need for government injections), purchasing proportional microinsurance coverage plays an important role in reducing the size of government injections required. Moreover, all the lines in Figure \ref{Microinsurance-Section7-Figure1} lie below the blue solid line in Figure \ref{Closed-FormSolutionsinaSpecialCase-Section5-Figure1-c}, indicating that adding proportional microinsurance coverage for households helps reduce the government’s cost of capital transfers. The optimal thresholds in Figures \ref{Microinsurance-Section7-Figure1-a} (for $\eta = 0.60, 0.70, 0.80, 0.90, 1.00$) and \ref{Microinsurance-Section7-Figure1-b} (for $\gamma = 0.50, 1.50, 2.50, 3.50, 4.50$) are given by $y^{\star} = 28.09, 27.70, 27.38, 28.15, 26.16$ and $y^{\star} = 28.38, 30.67, 33.42, 39.98, 45.56$, respectively. It can be verified that the conditions stated in Remark \ref{ThresholdStrategies-Section4-Remark1} are satisfied. Therefore, by Theorem \ref{AnalysisofV:PropertiesandtheHamilton-Jacobi-BellmanEquation-Section3-Theorem1}, the corresponding value function is optimal among all admissible strategies.

\begin{remark}
	It is important to note that when the insurance premium $p_{R}$ is close in value to the income generation rate $b$ (while still satisfying the assumption that $b>p_{R}$), the denominator of Equation \eqref{Microinsurance-Section7-Equation2} tends to zero, and consequently, the new poverty line $x^{*R}$ approaches infinity. In this case, and for that reason, microinsurance does not reduce the social protection costs borne by the government. However, this situation is not very realistic, as a household would not allocate all of its income to the payment of an insurance premium. In fact, in our analyses, we observed that cost reductions failed to materialise only in those scenarios where the premium was extremely close to the income generation rate. Conversely, social protection costs were effectively reduced for reasonably high but still feasible premium levels (corresponding to high values of $\gamma$ in our example). Therefore, microinsurance contributes to lowering social protection costs in realistic settings where premiums remain economically attainable for households.
\end{remark}

\begin{figure}[H]
	\begin{subfigure}[b]{0.5\linewidth}
  		\includegraphics[width=7.5cm, height=7.5cm]{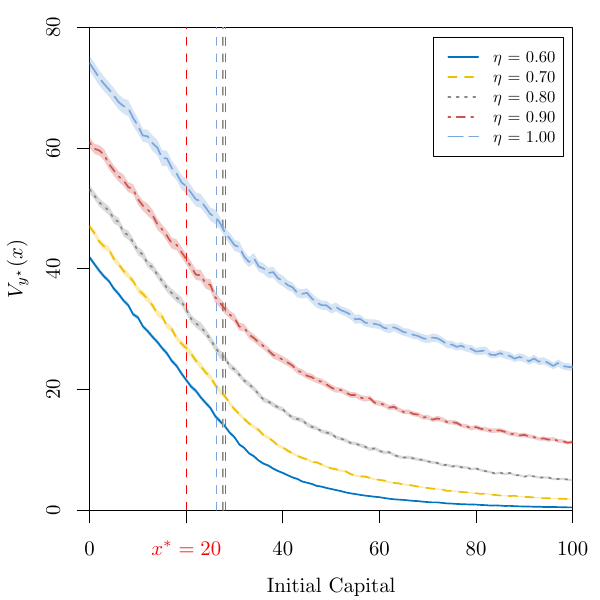}
		\caption{}
  		\label{Microinsurance-Section7-Figure1-a}
	\end{subfigure}
	\begin{subfigure}[b]{0.5\linewidth}
  		\includegraphics[width=7.5cm, height=7.5cm]{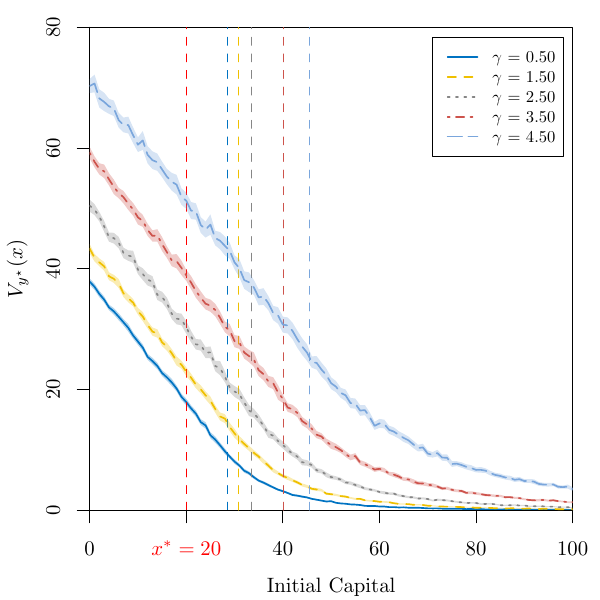}
		\caption{}
  		\label{Microinsurance-Section7-Figure1-b}
	\end{subfigure}
	\caption{The value function of a threshold strategy with the optimal threshold, $V_{y^{\star}}(x)$, of a household with proportional microinsurance coverage when $Z_{i} \sim Beta(\alpha,1)$, $a = 0.10$, $b = 3$, $c = 0.40$, $\lambda = 1$, $\delta = 0.10$, $x^{*} = 20$, $\alpha = 1.25$ for (a) fixed $\gamma = 0.50$ and different values of $\eta$ and (b) fixed $\eta = 0.50$ and different values of $\gamma$.}
	\label{Microinsurance-Section7-Figure1}
\end{figure}

\end{example}

\subsection{Excess-of-Loss Microinsurance} \label{Microinsurance-Section7-Subsection2}

Excess-of-Loss (XL) microinsurance covers losses only above a specified threshold (the \textit{retention limit}) per unit of capital, leaving smaller losses to be managed by the insured. More precisely, in the XL case with retention limit $l\in\lbrack0,1]$, if there is a loss of $u=1-z\in\lbrack0,1]$ per unit of capital, the household loses

\vspace{0.3cm}

\begin{align}
R(u)=u\cdot \mathbbm{1}_{\{u\leq l\}}+l\cdot  \mathbbm{1}_{\{u>l\}}=\min\{u,l\},
\label{Microinsurance-Section7-Equation15}
\end{align}

\vspace{0.3cm}

per unit of capital. Thus, the microinsurance provider covers the following fraction of the capital:

\vspace{0.3cm}

\begin{align}
u-R(u) = 0\cdot \mathbbm{1}_{\{u\leq l\}}+(u-l)\cdot \mathbbm{1}_{\{u>l\}}= (u-l)\cdot \mathbbm{1}_{\{u>l\}}.
\label{Microinsurance-Section7-Equation16}
\end{align}

\vspace{0.3cm}

The premium rate per unit of capital, calculated according to the expected value principle, is therefore given by

\vspace{0.3cm}

\begin{align}
p_{R} = \left(1+\gamma\right)\cdot\lambda\cdot\mathbb{E}\left[(1-Z-l)\cdot \mathbbm{1}_{\{Z<1-l\}}\right]= \left(1+\gamma\right)\cdot \lambda\cdot\int_{0}^{1-l}(1-z-l)dG_{Z}(z).
\label{Microinsurance-Section7-Equation17}
\end{align}

\vspace{0.3cm}

It follows that

\vspace{0.3cm}

\begin{align}
h(z) =1-\min\{1-z,l\} =\max\{1-l,z\},
\label{Microinsurance-Section7-Equation18}
\end{align}

\vspace{0.3cm}

for which no inverse exists. From the definition of $G_W^R(w)$, the c.d.f of $W=\max\{1-l,Z\}$ is given by,

\vspace{0.3cm}

\begin{align}
G_{W}^{R}(w)=\left\{
\begin{array}
[c]{ccc}%
0 & \textit{if} & w<1-l,\\
G_{Z}(w) & \textit{if} & w\geq1-l,
\end{array}
\right.
\label{Microinsurance-Section7-Equation19}
\end{align}

\vspace{0.3cm}

which has an upward jump of $G_{Z}(1-l)$ at $1-l$ if $>0$. We also obtain,

\vspace{0.3cm}

\begin{align}
\int_{0}^{1}f\left(x\cdot w\right)  dG^{R}(w)=f\left(x\left(1-l\right)\right)\cdot G_{Z}(1-l)+\int_{1-l}^{1}f\left(x \cdot w\right)dG_{Z}(w).
\label{Microinsurance-Section7-Equation20}
\end{align}

\begin{example}

We revisit the case when $Z_{i} \sim Beta(\alpha,1)$, as discussed in Section \ref{Closed-FormSolutionsinaSpecialCase-Section5}. Under excess-of-loss (XL) microinsurance coverage, the remaining proportion of capital is given by


\begin{align}
W & =\max\{1-l,Z\}\\
& = Z \cdot \mathbbm{1}_{\{1-l\leq Z\}}+(1-l)\cdot \mathbbm{1}_{\{1-l>Z\}} \\
& =Y^{1/\alpha}\cdot \mathbbm{1}_{\{1-l\leq Y^{1/\alpha}\}}+(1-l)\cdot \mathbbm{1}_{\{1-l>Y^{1/\alpha}\}},
\label{Microinsurance-Section7-Equation21}
\end{align}

\vspace{0.3cm}

where $Y\sim Unif[0,1]$.

Figure \ref{Microinsurance-Section7-Figure2} displays the value function of a threshold strategy with the optimal threshold $y^{\star}$ for a household with excess-of-loss microinsurance coverage. As expected, and consistent with the proportional microinsurance case, the value function increases with both a higher retention limit per unit of capital $l$ and a higher loading factor $\gamma$.

Furthermore, as anticipated, the value function for the XL microinsurance coverage is higher than that of the proportional cover. This is consistent with the fact that the XL policy only covers the losses above the retention level per unit of capital, whereas the proportional policy covers a fixed fraction $1-\eta$ of the loss. Consequently, the government would be expected to provide larger capital injections to households with XL microinsurance coverage.

Figure \ref{Microinsurance-Section7-Figure2-b} also shows that the value function for the XL microinsurance is less sensitive with respect to the loading factor $\gamma$ than in the proportional case, since the expected ceded loss is lower under the XL coverage. As highlighted before, because the expected ceded loss is higher under proportional microinsurance, the corresponding value function tends to be lower, as the required government injections are smaller. As for the proportional case, all the lines in Figure \ref{Microinsurance-Section7-Figure2} lie below the blue solid line in Figure \ref{Closed-FormSolutionsinaSpecialCase-Section5-Figure1-c}, indicating that adding XL microinsurance coverage for households helps reduce the government\rq s cost of capital injections. The optimal thresholds in Figure \ref{Microinsurance-Section7-Figure2-a} (for $l = 0.60, 0.70, 0.80, 0.90, 1.00$) and \ref{Microinsurance-Section7-Figure2-b} (for $\gamma = 0.50, 1.50, 2.50,$ $3.50, 4.50$) are given by $y^{\star} = 28.57, 28.24, 26.46, 26.20, 26.25$ and $y^{\star} = 28.75, 29.89, 30.90, 33.57, 34.61$, respectively. After verifying that the conditions in Remark \ref{ThresholdStrategies-Section4-Remark1} are satisfied, we conclude that the resulting value function $V_{y^{\star}}(x)$ is optimal among all admissible strategies.

\begin{figure}[H]
	\begin{subfigure}[b]{0.5\linewidth}
  		\includegraphics[width=7.5cm, height=7.5cm]{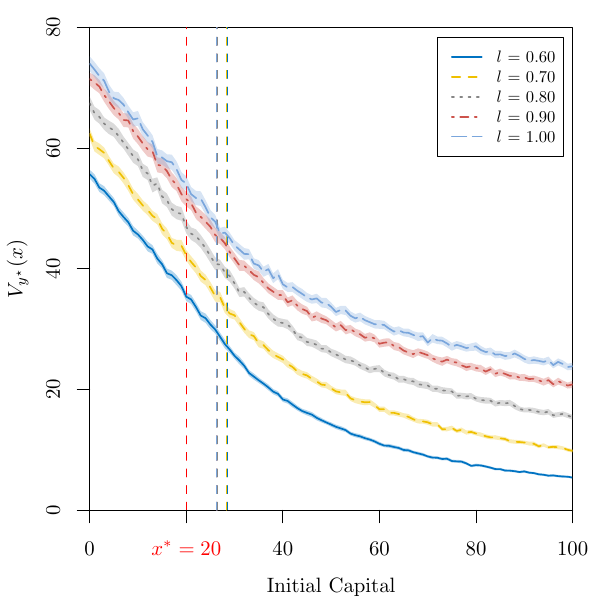}
		\caption{}
  		\label{Microinsurance-Section7-Figure2-a}
	\end{subfigure}
	\begin{subfigure}[b]{0.5\linewidth}
  		\includegraphics[width=7.5cm, height=7.5cm]{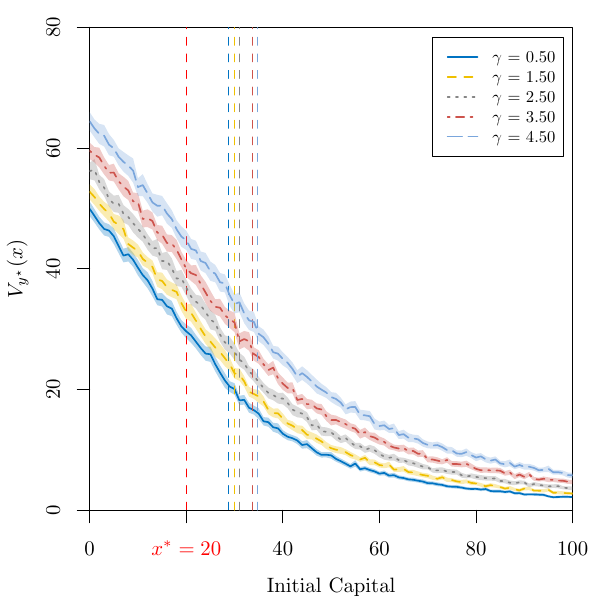}
		\caption{}
  		\label{Microinsurance-Section7-Figure2-b}
	\end{subfigure}
	\caption{The value function of a threshold strategy with the optimal threshold, $V_{y^{\star}}(x)$, of a household with excess-of-loss microinsurance coverage when $Z_{i} \sim Beta(\alpha,1)$, $a = 0.10$, $b = 3$, $c = 0.40$, $\lambda = 1$, $\delta = 0.10$, $x^{*} = 20$, $\alpha = 1.25$ for (a) fixed $\gamma = 0.5$ and different values of $l$ and (b) fixed $l = 0.5$ and different values of $\gamma$.}
	\label{Microinsurance-Section7-Figure2}
\end{figure}

\end{example}

\subsection{Total-Loss Microinsurance} \label{Microinsurance-Section7-Subsection3}

In a total-loss microinsurance policy, the microinsurance provider covers everything from a proportional loss of $L\in\lbrack0,1]$ onwards. That is, the household loses $R(u)=$ $u\cdot \mathbbm{1}_{\{u\leq L\}}$ per unit of capital after experiencing a proportional loss of $u=1-z$ while the microinsurance provider covers (per unit of capital),

\vspace{0.3cm}

\begin{align}
u-R(u)=0\cdot \mathbbm{1}_{\{u\leq L\}}+u\cdot\mathbbm{1}_{\{u>L\}}=(1-z)\cdot \mathbbm{1}_{\{1-z>L\}}=(1-z)\cdot \mathbbm{1}_{\{z<1-L\}}.
\label{Microinsurance-Section7-Equation22}
\end{align}

\vspace{0.3cm}

Hence, from \eqref{Microinsurance-Section7-Equation1}, the premium rate per unit of capital is given by,

\vspace{0.3cm}

\begin{align}
p_{R} = \left(1+\gamma\right)\cdot \lambda \cdot\mathbb{E}\left[(1-Z)\cdot \mathbbm{1}_{\{z<1-L\}}\right] = \left(1+\gamma\right)\cdot \lambda \cdot \int_{0}^{1-L}(1-z)dG_{Z}(z),
\label{Microinsurance-Section7-Equation23}
\end{align}

\vspace{0.3cm}

which is greater than the premium rate of the excess-of-loss insurance when $l=L$. Moreover,

\vspace{0.3cm}

\begin{align}
h(z) = 1-(1-z)\cdot \mathbbm{1}_{\{z\geq1-L\}} = z\cdot \mathbbm{1}_{\{z\geq1-L\}}+\mathbbm{1}_{\{z<1-L\}},
\label{Microinsurance-Section7-Equation24}
\end{align}

\vspace{0.3cm}

which is non-invertible. Then, from the definition of $G_W^R(w)$, it yields,

\vspace{0.3cm}

\begin{align}
G_{W}^{R}(w) 
&  =\left\{
\begin{array}
[c]{lll}%
0 & \textit{if} & 0\leq w<1-L,\\
G_{Z}(w)-G_{Z}(1-L) & \textit{if} & 1-L\leq w<1,\\
1 & \textit{if} & w=1,
\end{array}
\right.
\label{Microinsurance-Section7-Equation25}
\end{align}

\vspace{0.3cm}

which is continuous at $w=1-L$ and has an upward jump of $G_{Z}(1-L)$ at $w=1$. In addition, we have the following,

\vspace{0.3cm}

\begin{align}
\int_{0}^{1}f\left(x\cdot w\right)dG_{W}^{R}(w) = \int_{1-L}^{1}f\left(x\cdot w\right)dG_{W}^{R}(w) = \int_{1-L}^{1}f\left(x\cdot w\right)dG_{Z}(w)+f\left(x\right)\cdot G_{Z}(1-l).
\label{Microinsurance-Section7-Equation26}
\end{align}

\vspace{0.3cm}

\begin{example}

Let $Z_{i} \sim Beta(\alpha,1)$. Under total-loss insurance coverage, the remaining proportion of capital can be expressed as

\vspace{0.3cm}

\begin{align}
W=1-R(1-Z)=Y^{1/\alpha}\cdot \mathbbm{1}_{\{Y^{1/\alpha}\geq1-L\}}+\mathbbm{1}_{\{Y^{1/\alpha}<1-L\}},
\label{Microinsurance-Section7-Equation27}
\end{align}

\vspace{0.3cm}

with $Y\sim Unif[0,1]$.

The value function of a threshold strategy with the optimal threshold $y^{\star}$ for a household under a total-loss microinsurance policy is shown in Figure \ref{Microinsurance-Section7-Figure3}. The total-loss coverage exhibits behaviour consistent with the proportional and excess-of-loss microinsurance policies, as the value function increases with both the loss level $L$ and the loading factor $\gamma$. Under this type of microinsurance, extremely large losses (those exceeding values of $L$ which are close to one) are fully covered by the insurer. Consequently, the government\rq s expenditure to catastrophic shocks affecting households is reduced.

\begin{figure}[H]
	\begin{subfigure}[b]{0.5\linewidth}
  		\includegraphics[width=7.5cm, height=7.5cm]{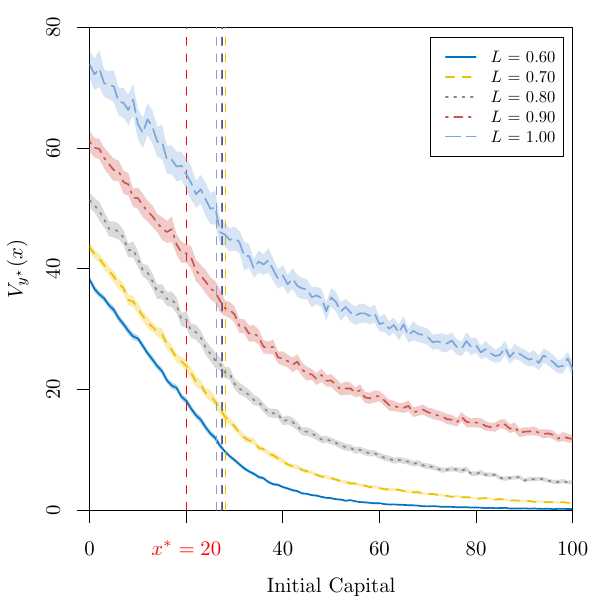}
		\caption{}
  		\label{Microinsurance-Section7-Figure3-a}
	\end{subfigure}
	\begin{subfigure}[b]{0.5\linewidth}
  		\includegraphics[width=7.5cm, height=7.5cm]{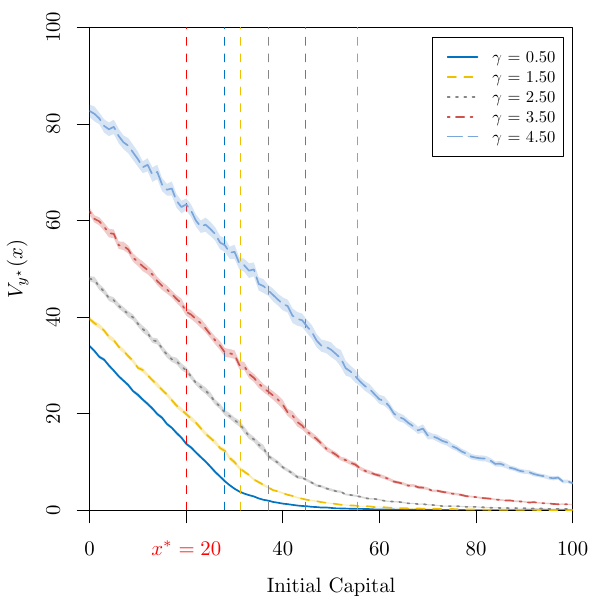}
		\caption{}
  		\label{Microinsurance-Section7-Figure3-b}
	\end{subfigure}
	\caption{The value function of a threshold strategy with the optimal threshold, $V_{y^{\star}}(x)$, of a household with total-loss microinsurance coverage when $Z_{i} \sim Beta(\alpha,1)$, $a = 0.10$, $b = 3$, $c = 0.40$, $\lambda = 1$, $\delta = 0.10$, $x^{*} = 20$, $\alpha = 1.25$ for (a) fixed $\gamma = 0.5$ and different values of $L$ and (b) fixed $L = 0.5$ and different values of $\gamma$.}
	\label{Microinsurance-Section7-Figure3}
\end{figure}

This mechanism explains why, for higher values of $\eta$ and $L$, the value function under the proportional and total-loss coverages are very similar, as the government is required to inject a smaller share of the household\rq s loss. That is, under the proportional policy, a small portion of the household\rq s losses is absorbed by the insurer (for higher values of $\eta$), while the likelihood of a catastrophic event exceeding the threshold $L$ (and thus triggering full coverage under the total-loss policy) is relatively low (for higher values of $L$). On the other hand, the value function under total-loss microinsurance is lower than that under proportional coverage for lower values of $\eta$ and $L$, as it becomes more likely that the household experiences losses exceeding $L$, which are fully covered by the insurer. As a result, the government needs to provide smaller capital injections. The optimal thresholds in Figure \ref{Microinsurance-Section7-Figure3-a} (for $L = 0.60, 0.70, 0.80, 0.90, 1.00$) and \ref{Microinsurance-Section7-Figure3-b} (for $\gamma = 0.50, 1.50, 2.50,$ $3.50, 4.50$) are given by $y^{\star} = 27.47, 28.07, 27.26, 27.24, 26.22$ and $y^{\star} = 27.76, 31.25, 36.87, 44.56, 55.46$, respectively. The conditions in Remark \ref{ThresholdStrategies-Section4-Remark1} are satisfied, thereby establishing the optimality of the resulting value function among all admissible strategies.

\end{example}

\section{Conclusion} \label{Conclusion-Section8}

This article examines social assistance programmes, focusing in particular on cash transfers (CTs), from a stochastic control perspective. It employs the piecewise-deterministic Markov process introduced by \cite{Article:Kovacevic2011} to model household capital and aims to optimise direct government transfers, defined as the expected discounted cost of maintaining a household\rq s capital above the poverty line through capital injections. The Hamilton-Jacobi-Bellman (HJB) equation associated with the stochastic control problem of determining the optimal transfer amount over time is derived. In certain special cases, the equation is solved analytically, yielding closed-form expressions for the expected discounted cost.

Our results indicate the existence of an optimal injection level above the poverty threshold, suggesting that resources are used more efficiently when CTs are preventive rather than reactive, since injecting capital into households when their capital is above the poverty line is less costly than doing so only after it falls below the poverty threshold.

Building on this framework, we further examine the combination of microinsurance and CT programmes to assess how these two instruments can jointly enhance social protection outcomes. While CTs provide immediate support to households by maintaining their capital above certain threshold (such as the poverty line or a higher reference level), microinsurance offers risk-transfer mechanisms that reduce the magnitude and impact of adverse shocks on household capital in exchange for a premium paid by the household. Our results, consistent with previous studies employing similar techniques and models (e.g., \cite{Article:Flores-Contro2025}), suggest that blending these two instruments can generate complementary effects, leading to more cost-efficient CT policies and reducing the overall fiscal burden on governments.

It is important to emphasise that this study is theoretical in nature and does not rely on empirical data. As with any model-based analysis, certain limitations should be acknowledged. In particular, the framework does not account for potential behavioural changes among households that might occur once they become aware that their capital level is approaching the threshold for cash transfer eligibility, which is a tendency that has been observed in other studies (see, for example, \cite{Article:Firpo2014} and \cite{Article:Habimana2021}). Of course, this represents one of several limitations that could be highlighted. A natural avenue for future research is therefore the collection and analysis of empirical data to determine whether the theoretical results obtained here are supported by real-world evidence. Such empirical validation would not only strengthen the robustness of the model but also provide valuable insights for designing effective and adaptive social protection strategies. 

Another interesting direction for future research would be to consider alternative optimisation criteria, such as maximising the expected discounted consumption above a minimum level, complementing the cost-minimisation approach considered in this article.

A further extension would be to allow the microinsurance contract itself to depend on the household\rq s wealth. This would lead to a more general framework in which the optimal transfer policy is combined with the design of the insurance contract, allowing the coverage level (and the corresponding premium) to be adjusted according to the household\rq s wealth.

Lastly, an additional promising avenue for future research is to determine the optimal level of insurance a household should purchase in order to minimise the expected discounted cost of maintaining its capital above the poverty line (or above a specified threshold). Every household receives different income levels, faces heterogeneous risks, and starts with distinct initial capital levels. Hence, the optimal level of coverage will vary accordingly, making this framework useful for understanding inequality and designing targeted policies.

\section*{Funding}  \label{Funding-Section} 

José Miguel Flores-Contró gratefully acknowledges funding from the FWO and F.R.S.-FNRS under the Excellence of Science (EOS) programme, project ASTeRISK (40007517).

\section*{Acknowledgements}  \label{Acknowledgements-Section}

The authors are grateful to the anonymous referees for their numerous helpful and constructive comments on an earlier version of this manuscript.


\section*{Appendices}
\addcontentsline{toc}{section}{Appendices}
\renewcommand{\thesubsection}{\Alph{subsection}}
\setcounter{subsection}{0}

\numberwithin{equation}{subsubsection}

\subsection{Mathematical Proofs}\label{Appendix A: Mathematical Proofs}

\subsubsection{Proof of Lemma \ref{AnalysisofV:PropertiesandtheHamilton-Jacobi-BellmanEquation-Section3-Lemma1}} \label{ProofofLemma3.1}

\begin{proof}
	Let $(X_{t})_{t\geq 0}$ be a process as defined in Subsection \ref{TheStochasticControlProblem-Section2-Subsection21} and let $(X_{t}^{0})_{t\geq 0}$ be the auxiliary process obtained by suppressing the growth above $x^{\ast }$ (that is, by setting $r=0$ in the definition of the original process), both starting at initial capital $x$. Let $C^{0}(x)$ be the corresponding cost of social protection. Since $X_{t}^{0}\leq X_{t}$ pathwise for all $t$, it follows that $C^{0}(x)\geq C(x)$ because starting from a larger path can only reduce the need for capital injections and also that $C^{0}(x^{*}) = C(x^{*})$. Hence, it is enough to prove that $\lim_{x\rightarrow +\infty }C^{0}(x)=0.$

	In the auxiliary model $(X_{t}^{0})_{t\geq 0}$, the capital remains constant between jumps. Starting from an initial capital level $x>x^{\ast }$, the first time it falls below $x^{\ast }$occurs at the jump index
	
	\begin{align}
		J_{x}=\min \left\{ j\geq 1:x\cdot Z_{1}\cdot Z_{2}\cdot\cdot\cdot Z_{j}<x^{\ast }\right\} =\min\left\{ j\geq 1:\sum_{i=1}^{j}\left(-\log \left(Z_{i}\right)\right)>\log \left(\frac{x}{x^{\ast }}\right)\right\}.
		\label{Appendix A: Mathematical Proofs-Equation1}
	\end{align}
	
	Define
	
	\begin{align}
		Y_{i}:=-\log \left(Z_{i}\right)\text{,}\quad R_{j}:=\sum_{i=1}^{j}Y_{i}\quad \text{ and } \quad a_{x}:=\log \left(\frac{x}{x^{\ast }}\right)\text{.}
		\label{Appendix A: Mathematical Proofs-Equation2}
	\end{align}
	
	Then, $R_{j}$ is a random walk with i.i.d. increments $Y_{i}>0$ (because $0<Z_{i}<1$) and $J_{x}$ is the hitting time of level $a_{x}$ of the random walk $R_{j}$. At time $J_{x}$, the required capital injection satisfies $x^{\ast }-x\cdot Z_{1}\cdot Z_{2}\cdot\cdot\cdot Z_{j}<x^{\ast }$. Hence,
	
	\begin{align}
		C^{0}(x)=\mathbb{E}\left[ e^{-\delta \tau _{J_{x}}}(x^{\ast}-x\cdot Z_{1}\cdot Z_{2}\cdot\cdot\cdot Z_{J} + C(x^{*}))\right] \leq \left(x^{\ast }+ C(x^{*})\right)\mathbb{E}\left[ e^{-\delta \tau_{J_{x}}}\right],
		\label{Appendix A: Mathematical Proofs-Equation3}
	\end{align}
	
where $\tau _{J_{x}}$ denotes the time at which the $J_{x}$-th jump occurs. Let us define $\tau _{0}=0$, conditioning on $J_{x}$ and since $\tau _{J_{x}}=\sum_{i=1}^{\tau_{J_{x}}}\left( \tau _{i}-\tau _{i-1}\right) $, where the interarrival times are i.i.d. exponential with rate $\lambda $, we obtain

	\begin{align}
		\mathbb{E}\left[ e^{-\delta \tau _{J_{x}}}\right] =\mathbb{E}\left[ \left(\frac{\mathbb{\lambda }}{\lambda +\delta }\right)^{J_{x}}\right].
		\label{Appendix A: Mathematical Proofs-Equation4}
	\end{align}
	
	Since $Y_{i}>0$, the random walk $R_{j}$ diverges to $+\infty $ almost surely as $j\rightarrow +\infty $ (see, e.g., \cite{Book:Durrett2019}). As $\lim_{x\rightarrow +\infty }a_{x}=+\infty $, the hitting time $J_{x}\rightarrow +\infty $ almost \ surely. Since $0<\lambda /(\lambda+\delta )<1$, dominated convergence implies
	
		\begin{align}
			\lim_{x\rightarrow +\infty }\mathbb{E}\left[ \left( \frac{\mathbb{\lambda }}{\lambda +\delta }\right) ^{J_{x}}\right] =0.
		\label{Appendix A: Mathematical Proofs-Equation5}
	\end{align}
	
This proves that $\lim_{x\rightarrow +\infty }C^{0}(x)=0$ and hence the same holds for $C(x).$
\end{proof}

\subsubsection{Proof of Lemma \ref{AnalysisofV:PropertiesandtheHamilton-Jacobi-BellmanEquation-Section3-Lemma3}} \label{ProofofLemma3.3}

\begin{proof}

Since $S_{t}$ is non-decreasing and left continuous, it can be written as

\begin{align}
S_{T}=\int\nolimits_{0}^{T}dS_{t}^{c}+\sum_{\substack{S_{t}\neq S_{t^{-}
}\\t\leq T}}(S_{t}-S_{t^{-}}),
\label{Appendix A: Mathematical Proofs-Equation12}
\end{align}

where $S_{t}^{c}$ is a continuous and non-decreasing function. Take a non-negative continuously differentiable function $u$\ in $\mathbf{[}x^{\ast}\mathbf{,\infty)}$. Note that $dS^{c}_{t}$ represents the continuous increments to $S$. Since the function $e^{-\delta t}u(x)$ is continuously differentiable in $\mathbf{[}x^{\ast}\mathbf{,\infty)}$, using the expression \eqref{Appendix A: Mathematical Proofs-Equation12} and the change of variables formula for finite variation processes (see, for instance, \cite{Book:Protter1992}), we can write

\begin{align}
\begin{array}
[c]{ll}
u(X_{\tau^{\ast}}^{\pi})e^{-\delta\tau^{\ast}}-u(x) & = \int\nolimits_{0^{-}}^{\tau^{\ast}}e^{-\delta t}d\left(u(X_{t}^{\pi})\right)  -\delta\int\nolimits_{0}^{\tau^{\ast}}u(X_{t^{-}}^{\pi})e^{-\delta
t}dt\\ \\
& = \int\nolimits_{0}^{\tau^{\ast}}e^{-\delta t}u^{\prime}(X_{t^{-}}^{\pi})\ r(X_{t^{-}}^{\pi}-x^{\ast})dt-\delta\int\nolimits_{0}^{\tau^{\ast}
}e^{-\delta t}u(X_{t^{-}}^{\pi})dt\\ \\
& +\sum\limits_{\tau_{i}\leq\tau^{\ast}}e^{-\delta t}\left(u(Z_{i}\cdot X_{\left(\tau_{i}\right)  ^{-}}^{\pi})-u(X_{\left(  \tau_{i}\right)  ^{-}}^{\pi})\right) \\ \\
& +\int\nolimits_{0}^{\tau^{\ast}}e^{-\delta t}u^{\prime}(X_{t^{-}}^{\pi})dS_{t}^{c}+\sum\limits_{_{\substack{S_{t}\neq S_{t^{-}}\\t\leq\tau^{\ast}}
}}e^{-\delta t}\left(  u(X_{t}^{\pi})-u(X_{t}^{\pi}-(S_{t}-S_{t^{-}}))\right).
\end{array}
\label{Appendix A: Mathematical Proofs-Equation13}
\end{align}

One can also write,

\begin{align}
\begin{array}
[c]{l}
\int\nolimits_{0}^{\tau^{\ast}}u^{\prime}(X_{t^{-}}^{\pi})e^{-\delta t} dS_{t}^{c}+\sum\limits_{_{\substack{S_{t}\neq S_{t^{-}}\\t\leq\tau^{\ast}}}}e^{-\delta t}\left(  u(X_{t}^{\pi})-u(X_{t}^{\pi}-(S_{t}-S_{t^{-}}))\right)\\ \\
\begin{array}
[c]{cl}
= & \int\nolimits_{0}^{\tau^{\ast}}u^{\prime}(X_{t^{-}}^{\pi})e^{-\delta t}dS_{t}^{c}+\sum\limits_{_{\substack{S_{t}\neq S_{t^{-}}\\t\leq\tau^{\ast}}}}\left(\int\nolimits_{0}^{S_{t}-S_{t^{-}}}u^{\prime}(X_{t}^{\pi}-\alpha)d\alpha\right)  e^{-\delta t}\\ \\
= & -\int_{0^{-}}^{\tau^{\ast}}e^{-\delta t}dS_{t}+\int\nolimits_{0}^{\tau^{\ast}}(1+u^{\prime}(X_{t^{-}}^{\pi}))e^{-\delta t}dS_{t}^{c}\\ \\
& +\sum\limits_{_{\substack{S_{t}\neq S_{t^{-}}\\t\leq\tau^{\ast}}}}\left(\int\nolimits_{0}^{S_{t}-S_{t^{-}}}\left(  1+u^{\prime}(X_{t}^{\pi}-\alpha)\right)  d\alpha\right)  e^{-\delta t}.
\end{array}
\end{array}
\label{Appendix A: Mathematical Proofs-Equation14}
\end{align}

Analogously to Proposition 2.12 in \cite{Book:Azcue2014}, we have that $M_{T}$ is a martingale with zero expectation. Hence, from \eqref{Appendix A: Mathematical Proofs-Equation13} and \eqref{Appendix A: Mathematical Proofs-Equation14} we obtain the result.
\end{proof}

\subsubsection{Proof of Lemma \ref{ThresholdStrategies-Section4-Lemma1}} \label{ProofofLemma4.1}

\begin{proof}

It is straightforward to see that the function $V_{y}$ is non-increasing and, by definition, that $V_{y}(x)=y-x+V_{y}(y)$, for $x<y$. We now show that it is bounded. For $x\geq y$, we have the following,

\vspace{0.3cm}

\begin{align}
\begin{array}
[c]{lll}
V_{y}(x) & \leq & \mathbb{E}\left[\sum\limits_{k=1}^{\infty}ye^{-\delta\tau_{k}}\right]\\ \\
& = & y\sum\limits_{k=1}^{\infty}\prod\limits_{i=1}^{k}\mathbb{E}\left[e^{-\delta(\tau_{i}-\tau_{i-1})}\right]\\ \\
& = & y\frac{\lambda+\delta}{\delta}.
\end{array}
\label{Appendix A: Mathematical Proofs-Equation26}
\end{align}

\vspace{0.3cm}

Hence, the result follows.
\end{proof}

\subsubsection{Proof of Lemma \ref{ThresholdStrategies-Section4-Lemma2}} \label{ProofofLemma4.2}

\begin{proof}

Let us take $y>x^{\ast}$ and $x_{2}>x_{1}\geq y$ with $x_{2}-x_{1}\leq(x_{2}-x^{\ast}/2)$. Thus, this yields,

\begin{align}
\frac{1}{2}\leq1-\frac{x_{2}-x_{1}}{x_{2}-x^{\ast}}\leq1.
\label{Appendix A: Mathematical Proofs-Equation27}
\end{align}

Furthermore, since $V_{y}$ is non-increasing we have,

\begin{align}
0\leq V_{y}(x_{1})-V_{y}(x_{2})\text{.}
\label{Appendix A: Mathematical Proofs-Equation28}
\end{align}

Now, consider $T$, such that $\left(x_{1}-x^{\ast}\right)e^{rT}+x^{\ast}=x_{2}$. Hence, 

\begin{align}
T=\frac{1}{r}\ln\left(\frac{x_{2}-x^{\ast}}{x_{1}-x^{\ast}}\right).
\label{Appendix A: Mathematical Proofs-Equation29}
\end{align}

Therefore, we have that,

\vspace{0.3cm}

\begin{align}
V_{y}(x_{1})\leq V_{y}(x_{2})\cdot \mathbbm{P}(\tau_{1}>T)+V_{y}(0)\cdot \mathbbm{P}(\tau_{1}\leq T),
\label{Appendix A: Mathematical Proofs-Equation30}
\end{align}

\vspace{0.3cm}

which yields,

\begin{align}
\begin{array}
[c]{lll}
V_{y}(x_{1})-V_{y}(x_{2}) & \leq & \left(  1-\left(  1-\frac{x_{2}-x_{1}}{x_{2}-x^{\ast}}\right)^{\lambda\frac{1}{r}}\right)\left(V_{y}(0)-V_{y}(x_{2})\right) \\ \\
& \leq & \left(1-\left(1-\frac{x_{2}-x_{1}}{x_{2}-x^{\ast}}\right)
^{\lambda\frac{1}{r}}\right)  V_{y}(0).
\end{array}
\label{Appendix A: Mathematical Proofs-Equation31}
\end{align}

We next define the following function,

\begin{align}
g(h)=1-\left(  1-\frac{h}{x_{2}-x^{\ast}}\right)  ^{\lambda\frac{1}{r}},
\label{Appendix A: Mathematical Proofs-Equation32}
\end{align}

with derivative given by,

\begin{align}
g^{\prime}(h)=\frac{\lambda}{r}\left(1-\frac{h}{x_{2}-x^{\ast}}\right)^{\frac{\lambda}{r}-1}\left(\frac{1}{x_{2}-x^{\ast}}\right).
\label{Appendix A: Mathematical Proofs-Equation33}
\end{align}

\vspace{0.3cm}

So, for $h<\left(  x_{2}-x^{\ast}\right) /2$, if $\frac{\lambda}{r}-1\geq0$, we have,

\begin{align}
g^{\prime}(h)\leq\left(\frac{\lambda}{r}\right)\left(\frac{1}{x_{2}-x^{\ast}}\right)\leq\left(\frac{\lambda}{r}\right)\left(\frac{1}{y-x^{\ast}}\right),
\label{Appendix A: Mathematical Proofs-Equation34}
\end{align}

and, if $\frac{\lambda}{r}-1<0$,

\begin{align}
g^{\prime}(h)\leq\left(\frac{\lambda}{r}\right)\left(\frac{1}{x_{2}-x^{\ast}}\right)\cdot2^{1-\frac{\lambda}{r}}\leq\left(\frac{\lambda}{r}\right)\left(\frac{1}{y-x^{\ast}}\right)\cdot 2^{1-\frac{\lambda}{r}}.
\label{Appendix A: Mathematical Proofs-Equation35}
\end{align}

\vspace{0.3cm}

As a consequence, from \eqref{Appendix A: Mathematical Proofs-Equation31} and \eqref{Appendix A: Mathematical Proofs-Equation28}, $V_{y}$ is (globally) Lipschitz for $y>x^{\ast}$. The same proof holds for $C(x)$ in any set $\left[w,\infty\right) \subset(x^{\ast},\infty)$ with $w>x^{\ast}$.
\end{proof}

\subsubsection{Proof of Lemma \ref{ThresholdStrategies-Section4-Lemma3}} \label{ProofofLemma4.3}

\vspace{0.3cm}

\begin{proof}

First, we note that the function $C$ is continuous at $\left[ 0,x^{\ast}\right) $ since it is linear. Also, it is continuous in $(x^{\ast },\infty )$ since it is locally Lipschitz by Lemma \ref{ThresholdStrategies-Section4-Lemma2}. Moreover, we have by definition of $C$, 

\begin{align}
\lim_{h\rightarrow 0^{-}}C(x^{\ast }+h)=C(x^{\ast}).
\label{Appendix A: Mathematical Proofs-Equation36}
\end{align}

Thus, it only remains to show that $\lim_{h\rightarrow 0^{+}}C(x^{\ast}+h)=C(x^{\ast })$. Take $h>0$ and let us consider the (uncontrolled) capital processes $X_{t}^{0}$ and $X_{t}^{1}$, starting with initial capital $x_{0}=x^{\ast}$ and $x_{1}=x^{\ast }+h$, respectively. We stop the processes at $\tau _{1}$ (the time of the first loss event). Thus, we have 

\begin{align}
X_{\tau _{1}}^{0}=Z_{1}\cdot x^{\ast }<x^{\ast}
\label{Appendix A: Mathematical Proofs-Equation37}
\end{align}

and that

\begin{align}
X_{\tau _{1}}^{1}=Z_{1}\cdot(x^{\ast }+he^{r\tau _{1}})<x^{\ast}
\label{Appendix A: Mathematical Proofs-Equation38}
\end{align}

if and only if

\begin{align}
Z_{1}\leq \frac{x^{\ast }}{x^{\ast }+he^{r\tau _{1}}}.
\label{Appendix A: Mathematical Proofs-Equation39}
\end{align}

Then, defining 

\begin{align}
A_{h}=\left\{Z_{1}\leq \frac{x^{\ast }}{x^{\ast }+he^{r\tau _{1}}}\right\},
\label{Appendix A: Mathematical Proofs-Equation40}
\end{align}

we have

\begin{align}
\lim_{h\rightarrow 0^{+}}\mathbb{P}(A_{h}^{c})=\lim_{h\rightarrow0^{+}}\int_{0}^{+\infty }\left(1-G_{Z}\left(\frac{x^{\ast }}{x^{\ast }+he^{rt}}\right)\right)\lambda e^{-\lambda t}dt=0.  
\label{Appendix A: Mathematical Proofs-Equation41}
\end{align}

Consider now the threshold strategy with threshold $x^{\ast }$. From \eqref{Appendix A: Mathematical Proofs-Equation37} and \eqref{Appendix A: Mathematical Proofs-Equation38}, in the event that $Z_{1}\leq x^{\ast}/\left(x^{\ast }+he^{r\tau _{1}}\right)$, the transfer\ at $\tau _{1}$ for the process $X_{t}^{1}$ is equal to $x^{\ast }-\ x^{\ast }\cdot Z_{1}$ and for the process $X_{t}^{2}$ is $x^{\ast}-\ \left( x^{\ast }+he^{r\tau _{1}}\right)\cdot Z_{1}$. Hence, we obtain

\begin{align}
C(x^{\ast })-C(x^{\ast }+h) & \leq  \mathbb{E}\left[he^{r\tau _{1}}Z_{1}\mathbbm{1}_{\{A_{h}\}}+C(x^{\ast })e^{-\delta \tau _{1}}\mathbbm{1}_{\{A_{h}^{c}\}}\right] \\ 
& \leq  \mathbb{E}\left[he^{\left(r-\delta\right)\tau _{1}}Z_{1}\mathbbm{1}_{\{A_{h}\}}+C(x^{\ast })\cdot \mathbb{P}(A_{h}^{c})\right].
\label{Appendix A: Mathematical Proofs-Equation42}
\end{align}

Also, we have that

\begin{align}
\mathbb{E}\left[he^{r\tau _{1}}Z_{1}\mathbbm{1}_{\{A_{h}\}}e^{-\delta \tau_{1}}\right]=\int_{0}^{\infty }\lambda he^{rt}\left( \int_{0}^{\frac{x^{\ast }}{x^{\ast }+he^{rt}}}zdG_{Z}(z)\right) e^{-\left(\lambda+\delta\right) t}dt,
\label{Appendix A: Mathematical Proofs-Equation43}
\end{align}

and, 

\begin{align}
\lambda he^{rt}\int_{0}^{\frac{x^{\ast }}{x^{\ast }+he^{rt}}}zdG(z)\leq he^{rt}\frac{x^{\ast }}{x^{\ast }+he^{rt}}\leq x^{\ast}.
\label{Appendix A: Mathematical Proofs-Equation44}
\end{align}

Thus, taking $h\rightarrow 0^{+},$ using the Bounded Convergence Theorem and from \eqref{Appendix A: Mathematical Proofs-Equation41}, we obtain the result.
\end{proof}

\setcitestyle{numbers} 
\bibliographystyle{chicago} 
\bibliography{main}

\setcitestyle{authoryear}

\end{document}